\definecolor{cherry}{rgb}{0.9,.1,.2}
\newcommand{\br}[1]{\boldsymbol{\textcolor{cherry}{#1}}}
\newcommand{\bl}[1]{\boldsymbol{\textcolor{black}{#1}}}
\newcommand\nop[1]{\mathop{:\!#1\!\!:}}
\DeclareMathOperator{\ch}{ch}
\numberwithin{equation}{section}
\newcommand{\be}{\begin{equation}}
\newcommand{\ee}{\end{equation}}
\newcommand{\ba}{\begin{eqnarray}}
\newcommand{\ea}{\end{eqnarray}}
\newcommand{\hf}{\frac{1}{2}}
\newcommand\req[1]{(\ref{#1})}
\newcommand\F{\mathbb{F}}
\newcommand\C{\mathbb{C}}
\newcommand\N{\mathbb{N}}
\renewcommand\P{\mathbb{P}}
\newcommand\R{\mathbb{R}}
\newcommand\Z{\mathbb{Z}}
\newcommand\AAA{\mathcal{A}}
\newcommand\III{\mathcal{I}}
\newcommand\KKK{\mathcal{K}}
\newcommand\OOO{\mathcal{O}}
\newcommand\SSS{\mathcal{S}}
\newcommand\TTT{\mathcal{T}}
\newcommand\VVV{\mathcal{V}}
\newcommand\WWW{\mathcal{W}}
\newcommand\CCC{\mathcal{C}}
\newcommand\BBB{\mathcal{B}}
\newcommand\XXX{\mathcal{X}}
\newcommand\Aff{\mathop{\mathrm{Aff}}}
\newcommand\End{\mathop{\mathrm{End}}\nolimits}
\newcommand\GL{\mathop{\mathrm{GL}}}
\newcommand\spann{\mathop{\mathrm{span}}\nolimits}
\def\id{\mathchoice{\setlength{\unitlength}{1ex}\identitaet}{
\setlength{\unitlength}{1ex}\identitaet}{
\setlength{\unitlength}{0.5ex}
\begin{picture}(1.6,1.5)
\put(0.7,-0.1){\scriptsize 1}\thinlines
\put(1,0.1){\line(0,1){1.4}}
\put(0.2,0){\line(1,0){1.2}}
\put(0.6,1.5){\line(1,0){0.4}}
\end{picture}}{\setlength{\unitlength}{0.5ex}
\begin{picture}(1.6,1.5)
\put(0.7,-0.1){\scriptsize 1}\thinlines
\put(1,0.1){\line(0,1){1.4}}
\put(0.2,0){\line(1,0){1.2}}
\put(0.6,1.5){\line(1,0){0.4}}
\end{picture}}
}
\newcommand\identitaet{\begin{picture}(1.6,1.5)
\put(0,0){1}
\put(1,0.1){\line(0,1){1.4}}
\thinlines
\put(0.2,0){\line(1,0){1.2}}
\put(0.6,1.5){\line(1,0){0.4}}
\end{picture}}
\newcommand\fa{\forall\,}
\newcommand\qu{\overline}
\newcommand\wt{\widetilde}
\newcommand\wh{\widehat}
\newtheorem{definition}{Definition}[section]
\newtheorem{prop}[definition]{Proposition}
\newtheorem{theorem}[definition]{Theorem}
\newenvironment{proof}{\textsl{Proof:}}{\hspace*{\fill}$\blacksquare$\\}
\title{\mbox{}\hfill {\small DCPT-13/09}\\[20pt]
A twist in the $M_{24}$ moonshine story}
\author{\\ \Large{Anne Taormina\footnote{anne.taormina@durham.ac.uk}\;\; and Katrin Wendland\footnote{katrin.wendland@math.uni-freiburg.de}}\\ \\ \\
 \normalsize{$^*$Centre for Particle Theory, Department of Mathematical Sciences} \\ \normalsize{Durham University, Durham, DH1 3LE, U.K. }\\
 \normalsize{$^{\dagger}$Mathematics Institute, University of Freiburg}\\
 \normalsize{D-79104 Freiburg, Germany.  }}
 \date{\small{\textsc{AMS Subject Classification: 81T40, 81T60, 14J28}}}
\begin{document}
\maketitle
\setlength{\parindent}{0pt}
\begin{abstract}
Prompted by the Mathieu Moonshine observation,
we identify a pair of 45-dimensional vector spaces of states that account for the first order term in the 
massive sector of the elliptic genus of K3 in every $\Z_2$-orbifold CFT on K3.
These generic states are uniquely characterized by the fact that the action of 
every geometric symmetry group of a $\Z_2$-orbifold CFT 
yields a well-defined faithful representation on them.
Moreover, each such representation  
is  obtained by restriction of 
the $45$-dimensional irreducible representation
of the Mathieu group $M_{24}$ constructed by Margolin.
Thus we provide a piece of evidence for Mathieu Moonshine
explicitly from  SCFTs on K3.

The $45$-dimensional irreducible representation of $M_{24}$ 
exhibits a twist, which we prove can be undone 
in the case of $\Z_2$-orbifold CFTs on K3 for all geometric
symmetry groups. 
This twist however cannot be undone for the combined symmetry group $(\Z_2)^4\rtimes A_8$ 
that emerges from surfing the moduli space of Kummer K3s.
We conjecture that in general, the untwisted representations are
exclusively those of geometric symmetry groups in some 
geometric interpretation of a CFT on K3.
In that light, the twist appears as a representation theoretic manifestation of the maximality
constraints in Mukai's classification of geometric symmetry groups of K3. 
\end{abstract}
%%%%%%%%%%%%%%%%%%%%
\section*{Introduction}
%%%%%%%%%%%%%%%%%%%%
The Mathieu Moonshine observation \cite{eot10} continues to inspire three years on. 
It is now proven that the multiplicity spaces of irreducible characters of the $N=4$ superconformal algebra in
the elliptic genus of K3 
do indeed correspond to representations of the sporadic group 
$M_{24}$ \cite{ga12}. The reason why $M_{24}$ is singled out remains a mystery. From
the properties of twining elliptic genera, one may expect a representation of $M_{24}$ 
on a vertex algebra which governs the elliptic genus of K3, as  argued in 
\cite{gprv12,ga12,gpv13}. However, there are conceptual difficulties in following  
this lead, particularly in the sector of the elliptic genus corresponding to massless states 
at leading order.\\

In  a recent paper \cite{tawe12}, we suggest a starting point for the construction of a vertex 
algebra that governs the states occuring at lowest order in the elliptic genus. Our approach 
uses a subtle interplay between the geometry inherited from the K3 surfaces on which 
superstrings propagate and the (chiral, chiral) algebra associated with $N=(4,4)$ superconformal 
algebras at central charge $c=\qu c=6$ \cite{lvw89}. Since the elliptic genus is an invariant on the 
moduli space of such superconformal field theories (SCFTs), we are at liberty to choose the special 
class of $\Z_2$-orbifold  conformal field theories $\CCC=\TTT/\Z_2$ on K3 to present 
our arguments\footnote{Throughout our work, $\CCC=\TTT/\Z_2$ refers to the standard $\Z_2$-orbifold
construction induced by the Kummer construction for K3 surfaces.\label{whichz2}}, which we summarize here. 

The elliptic genus counts states (with signs) that 
appear in the Ramond-Ramond sector of the partition function, after projection onto 
Ramond ground states in the antiholomorphic sector of the theory. Although the 
expected {vertex} algebra $\wh\XXX$ cannot arise in the Ramond-Ramond sector of such SCFTs, 
one may, by choosing appropriate holomorphic and antiholomorphic $U(1)$-currents within the 
relevant $N=(4,4)$ superconformal algebra, spectral flow the states into the Neveu-Schwarz 
sector of the theory where (prior to all projections and truncations)
they yield a closed vertex algebra $\wh\XXX$. The Ramond-Ramond ground states, 
in particular, flow to (chiral, chiral) states.
In fact, the (chiral, chiral) algebra $\XXX$ of \cite{lvw89}, which accounts for the 
contributions to the lowest order terms of the elliptic genus, is obtained from
$\wh \XXX$ by truncation. In any theory $\CCC=\TTT/\Z_2$ on 
K3, the  vector space $\XXX$ is generated by the $24$ fields
\be\label{newbasis}
\xi_1\xi_2\xi_3\xi_4,\quad \xi_i\xi_j \;(1\leq i<j\leq 4),\quad \id;\quad \wt T_{\vec a}\; ({\vec a}\in\F_2^4),
\ee
where $\xi_i$ are holomorphic-antiholomorphic combinations of the Dirac fermions in the theory, 
$\id$ denotes the vacuum field, and the $\wt T_{\vec a},\, {\vec a}\in\F_2^4$, are the sixteen fields 
spectral-flowed from the RR twist fields.

Interestingly, the  truncation of the OPE to the (chiral, chiral) algebra $\XXX$ leaves the 
latter independent of all moduli. This may seem desirable, 
since the elliptic genus does not depend on the moduli. 
Mathieu Moonshine indicates that one 
should consider symmetries of  some underlying vertex algebra, while it 
is far from clear from inspecting the fields in \eqref{newbasis} {which linear maps are symmetries}
of the whole theory $\CCC=\TTT/\Z_2$.
In  \cite{tawe12}, we motivate why in our setting, we restrict our attention to 
symmetry groups that are induced 
geometrically in some geometric interpretation of $\CCC$
stemming from $\TTT$. In other words, all such symmetry groups  
are subgroups of $M_{24}$ by 
Mukai's seminal result \cite{mu88}. Imposing that the superconformal algebra of $\CCC$ be 
pointwise fixed also requires that a four-dimensional subspace of \eqref{newbasis} 
is fixed under symmetries. Therefore, this condition rules out the possibility for $\XXX$ 
to carry a representation of $M_{24}$, as also argued in \cite{gprv12} albeit from a different perspective.
Hence a vertex algebra which governs the leading order terms of the elliptic genus and which 
at the same time carries the expected representation of $M_{24}$ must be related to 
$\XXX$ by some nontrivial map. The Niemeier markings and the overarching maps 
which were constructed in \cite{tawe11} should be viewed as a first approach towards constructing such a map. \\

This indicates  interesting geometric avenues to explore while searching for a vertex 
algebra governing the leading order terms of the elliptic genus. {In this paper, 
we  take a closer look} at the leading order in the \textsl{massive} sector of the 
elliptic genus. After all, the Mathieu Moonshine observation \cite{eot10} originally 
refers to the massive sector. We use again the framework of $\Z_2$-orbifold  CFTs on K3, 
we analyze the massive states at leading order and show that they populate 
two complex $45$-dimensional representation spaces of the respective symmetry groups,
which we call $V_{45}^{CFT}$ and $\qu V_{45}^{CFT}$. This has long been anticipated from mere 
Moonshine numerology: the massive character with the lowest conformal weight appears 
with the coefficient $90$ in the elliptic genus, and $M_{24}$ has two complex conjugate irreducible 
$45$-dimensional representations. Nevertheless, it is remarkable that one obtains well-defined
representations of the symmetry groups from the \textsl{net contributions} to the elliptic genus in such a natural fashion.
We prove that the representations  on
$V_{45}^{CFT}$ and $\qu V_{45}^{CFT}$ can be induced from the two irreducible, complex conjugate $45$-dimensional representations of $M_{24}$.
By working within  specific orbifold theories 
$\CCC=\TTT/\Z_2$, we gain much deeper insights into the nature of these massive states, 
and crucially, appreciate how the symmetries of interest act on them. 
{In particular, we show that the space $V_{45}^{CFT}\oplus\qu V_{45}^{CFT}$ is
uniquely characterized by the quantum numbers of states yielding leading order massive
contributions to the elliptic genus together with the requirement that it carries a faithful 
representation of the \textsl{generic} geometric symmetry group $(\Z_2)^4$ of $\Z_2$-orbifold CFTs
$\CCC=\TTT/\Z_2$.}

An indispensable ingredient is  the work of Margolin \cite{ma93}, who 
constructs an irreducible $45$-dimensional representation of $M_{24}$ on the 
$45$-dimensional space $V_{45}$. There, an action of 
$(\Z_2)^4\rtimes A_8 \subset M_{24}$ on $V_{45}$ is exhibited as an important stepping stone 
in the construction of the full $M_{24}$ action. This maximal subgroup of $M_{24}$ is particularly 
relevant to us, as we have emphasized its role as combined symmetry group of all holomorphic 
symplectic automorphism groups of Kummer surfaces in \cite{tawe12}. 
There, we encrypt the
action of $(\Z_2)^4\rtimes A_8$
 unequivocally  on the Niemeier lattice with root lattice $A_1^{24}$,
which carries a natural representation of $M_{24}$. In that setting, the very 
fact that the Niemeier lattice has definite signature while the full integral homology lattice 
$H_\ast (X,\Z)$ of K3 surfaces $X$ has indefinite signature is a geometric obstruction 
against an action of $(\Z_2)^4\rtimes A_8$ and a fortiori of $M_{24}$ 
in terms of geometric symmetries on \textsl{any} 
K3 surface, a fact already appreciated by Mukai \cite{mu88} and Kondo \cite{ko98}. 

In the present work, we
discover a new potential  obstruction  from representation theory
to the action of $M_{24}$ as a geometric symmetry group. 
Indeed, we  prove that  the group action on $V_{45}$ simplifies when
one restricts from the group $(\Z_2)^4\rtimes A_8$  to the subgroups
which happen to be  symmetry groups of Kummer surfaces with dual K\"ahler
class induced from the underlying torus.  More precisely, we deduce from \cite{ma93} that
$V_{45}$ possesses the structure of a tensor product $V_{45}=\VVV\otimes \BBB$, where
the factorization however is not 
respected by
the action of $(\Z_2)^4\rtimes A_8$ on $V_{45}$. Here, $\VVV$ is a 
$3$-dimensional complex vector space, and the $15$-dimensional space $\BBB$
is referred to as the {\textsc{base} of $V_{45}$}. 
In fact,  $(\Z_2)^4\rtimes A_8$ permutes the $3$-dimensional ``fibers''  $\VVV\otimes\spann_\C\{B\},\,
B\in\BBB$,  
introducing a \textsc{twist}  on each fiber {(see Def.\ \ref{twist})}. 
Such a twist  is necessary for the construction of the 
(reducible) $45$-dimensional 
representation of that  group, since $(\Z_2)^4\rtimes A_8$
does not possess any nontrivial $3$-dimensional 
representations that $\VVV$ could carry. 
{The  above-mentioned simplification}  under restriction
to geometric symmetry groups
amounts to the fact that these groups act without a twist, as we prove. 
On the CFT side this is expected from
the very structure of the states in $V_{45}^{CFT}\oplus\qu  V_{45}^{CFT}$, which
forbids geometric symmetry group actions  induced from $\TTT$ to exhibit a twist. 

Nevertheless, in view of our search for an explanation to Mathieu Moonshine,
we  interpret the representation space $V_{45}$ as a medium which  combines the actions of symmetry
groups at distinct points of the moduli space of $N=(4,4)$ SCFTs on K3
to representations of larger groups, similarly to
the ideas we present in \cite{tawe11,tawe12}. 
Indeed, we generate
the action of the entire group $(\Z_2)^4\rtimes A_8$ on $V_{45}$ by combining
the actions of the maximal symmetry groups of Kummer K3s.
Moreover, we prove that this action of $(\Z_2)^4\rtimes A_8$ is obtained from 
Margolin's irreducible representation of $M_{24}$
by restriction to the maximal subgroup $(\Z_2)^4\rtimes A_8$. Thus we obtain a first piece of evidence
for an action of $M_{24}$ on a selection of states generic to all $\Z_2$-orbifold CFTs on K3.
 \\

We start in Section\,\ref{sec:counting} by a detailed account of the massive states described 
above. This 
exercise leads to Proposition\,\ref{our45}, which provides the mathematical structure 
organising $90$ twisted massive states into two $45$-dimensional 
spaces. More precisely, we find 
$V_{45}^{CFT}={\bf{3}}\otimes\bf{15}$ and $\qu V_{45}^{CFT}={\bf{\qu 3}}\otimes\bf{15}$, 
with ${\bf 3}$ and ${\bf \qu 3}$ being complex representation spaces of $SO(3)$, and ${\bf 15}$ being 
a representation space of $\Aff(\F_2^4)$ hosting twisted ground states. In Section\,\ref{sec:twisted}, we focus on the action of the 
maximal subgroup $(\Z_2)^4\rtimes A_8$ of $M_{24}$ on the base 
${\bf 15}$ of the space of states $V_{45}^{CFT}$.  
In Proposition\,\ref{repona} we show
that the representation of  $(\Z_2)^4\rtimes A_8$ on the space 
${\bf 15}$ of twisted ground states is equivalent to the representation of the same group
constructed by Margolin 
\cite{ma93}  on the $15$-dimensional base 
$\BBB$ of $V_{45}=\VVV\otimes\BBB$.  
Section\,\ref{sec:twist} analyzes the properties of the space $V_{45}^{CFT}$ in order to 
substantiate the expectation that for every $\Z_2$-orbifold CFT, this space carries a 
representation of a geometric symmetry group $G\subset M_{24}$ which is induced from
Margolin's representation of $M_{24}$ on $V_{45}$. A first step in proving 
this is to show that none of the symmetry 
groups of maximally symmetric Kummer surfaces acts with a twist in Margolin's
representation on $V_{45}$. This is the purpose 
of Propositions \ref{nosquaretwist}, \ref{notetratwist} and \ref{notriangletwist}. A second step is to 
prove that the representation of each of the three maximal symmetry groups of Kummer surfaces on 
$V_{45}^{CFT}$ is equivalent to the representation of that same group viewed as a subgroup of 
$(\Z_2)^4\rtimes A_8$ acting on  $V_{45}$. This is done in Proposition\,\ref{equivalence}. 
{Our main result,
Theorem \ref{result}, generalizes these findings} 
to arbitrary symmetry groups of $\Z_2$-orbifold CFTs which
are induced by geometric symmetries of the underlying toroidal theories.
We then briefly discuss the role and limitations of a  lattice of rank $20$ that accommodates the combined  
action of the three maximal symmetry groups of Kummer surfaces in our efforts to understand the role of 
$M_{24}$ on the CFT side. 
In the Appendix we collect the details of Margolin's construction that are relevant for our work.
%
%%%%%%%%%%%%%%%%%%%%%%%%%%%%%%%%%%%%
\section{Counting states in $\Z_2$-orbifold CFTs on K3}
\label{sec:counting}
%%%%%%%%%%%%%%%%%%%%%%%%%%%%%%%%%%%%
In this section, we use the conformal field theoretic elliptic genus of K3 to determine
a $45$-dimensional vector space $V_{45}^{CFT}$ of states, which 
exists in all $\Z_2$-orbifold conformal field
theories on K3 and which is expected to be related to a representation
of the Mathieu group $M_{24}$ by the Mathieu Moonshine phenomenon \cite{eot10}.

{We use the notion 
of CFTs on K3 which can be found, for example, in \cite{asmo94,nawe00}, along with 
many further relevant references to the topic. One may also consult the more recent 
publications \cite{we10, ghv11,we14}. In the present work, however, we solely address
$\mathbb Z_2$-orbifold CFTs $\mathcal C=\mathcal T/\mathbb Z_2$ with $\mathcal T$
a toroidal SCFT at central charge $c=6=\overline c$ and $\mathcal C$ the
standard $\mathbb Z_2$-orbifold of this theory, cf.\ footnote \ref{whichz2}. 
This ensures the mathematical foundations of the present work: Recall that a
definition of the underlying toroidal theories $\mathcal T$ has been given 
in \cite{ko03},  generalizing Kac's lattice algebras \cite{ka98} to non-holomorphic CFTs. 
For all cyclic groups $G$, thus including the case $G=\mathbb Z_2$ that is relevant to our
work, orbifold techniques for toroidal theories have been put on a solid mathematical foundation in a series of papers by Fr\"ohlich, 
Fuchs, Runkel and Schweigert, culminating in \cite{ffrs10}.
That $\mathcal C=\mathcal T/\mathbb Z_2$ as above obeys all defining properties 
of a SCFT on K3, in particular that it enjoys $N=(4,4)$ supersymmetry and that its
conformal field theoretic elliptic genus agrees with the geometric elliptic genus
of K3 surfaces, has been shown in \cite{eoty89}, see also \cite{asmo94,nawe00,diss,we14}.}
%
%%%%%%%%%%%%%%%%%%%%%%%%%%%%%%%%%%%%%%%%%%
\subsection{Evidence for the $\bf{45}\oplus\qu{\bf{45}}$ of $\mathbf{M_{24}}$ from the elliptic genus of  K3}\label{sec:evidence}
%%%%%%%%%%%%%%%%%%%%%%%%%%%%%%%%%%%%%%%%%%
The 
conformal field theoretic elliptic genus of  $N=(2,2)$ 
superconformal field theories (SCFTs) \cite{akmw87,wi87} yields an  invariant on 
every connected component of the moduli space of $N=(2,2)$ 
SCFTs. For SCFTs on K3, one has  central charges $c=\qu c= 6$ and 
$N=(4,4)$ supersymmetry, and 
the conformal field theoretic elliptic genus $Z_{K3}(\tau, z)$ may be defined as
$$
Z_{K3}(\tau,z):={\rm tr}_{{\cal H}^R}\!\!\left((-1)^F\,y^{J_0}\,q^{L_0-\frac{1}{4}}\,\qu q^{\qu L_0-\frac{1}{4}}\right),
\quad q:=e^{2\pi i\tau},\,y:=e^{2\pi i z},\;\tau,z\in\C, \Im(\tau)>0.
$$
Here, the trace is over the states belonging to the Ramond-Ramond sector ${\cal H}^R$
of the theory, 
$F$ is the fermion number operator, $J_0$ is the zero mode of a choice of $U(1)$-current 
in the holomorphic 
$N=4$ superconformal algebra, while $L_0$ (resp.\ $\qu L_0$) is the zero-mode of the 
holomorphic (resp.\ antiholomorphic) Virasoro field. 
In fact, the above definition implies that the K3 elliptic genus is obtained from
the partition function of \textsl{any} $N=(4,4)$ SCFT on K3 by
$$
Z_{K3}(\tau,z)=Z_{\wt{R}}(\tau,z;\qu\tau, \qu z=0),
$$
where $Z_{\wt{R}}(\tau,z;\qu\tau, \qu z)$ is the  Ramond-Ramond ($\wt R\wt R$) 
partition function with fermion number 
insertion in both the holomorphic and antiholomorphic sectors. 
By standard cohomological arguments, 
the insertion of $\qu z=0$ in the $\wt R\wt R$ partition function suppresses 
the dependence of the resulting function on $\qu\tau$. One therefore expects a decomposition of $Z_{K3}$ in 
terms of $N=4$ characters stemming from the holomorphic sector of the partition function. Such a 
decomposition was achieved in  \cite{eoty89}, where the $N=(4,4)$ SCFTs 
chosen for calculation were Gepner models at $c=\qu c=6$.  The very form in which the Mathieu 
Moonshine phenomenon was observed appears in  \cite{oo89,diss}. Indeed, in terms of 
characters of irreducible representations of the $N=4$ superconformal algebra one may write
\be\label{n4deco}
Z_{K3}(\tau,z)=-2\ch^{\wt{R}}_{1,\frac{1}{2}}(\tau,z)+20\ch^{\wt{R}}_{1,0}( \tau,z)
+e(\tau)\,\widetilde{\ch}^{\wt{R}}(\tau,z).
\ee
Here,  the $N=4$ massless characters $\ch^{\wt{R}}_{1, 0}(\tau,z)$ and 
$\ch^{\wt{R}}_{1, \hf}(\tau,z)$ may be obtained from the Ramond sector characters derived in 
\cite{egta88} through the {shift} $z \mapsto z+\hf$, and
with $h \in \mathbb{R},\, h >0$, the massive $N=4$ 
characters in this sector are of the form 
\be\label{massivechar}
q^h\,\wt{\ch}^{\wt{R}}(\tau,z)=q^{h-\frac{1}{8}}\,\frac{\vartheta^2_1(\tau,z)}{\eta^3(\tau)}
=q^h(2-y-y^{-1})\left(1+q(1-2y-2y^{-1})+\cdots\right).
\ee
The function $e(\tau)$ in \req{n4deco} is 
closely related to a weakly holomorphic mock modular form of weight $\hf$ on $SL(2,\mathbb{Z})$ \cite{dmz12}, and its 
$q$-expansion starts with
\be \label{e}
e(\tau)=90q+462q^2+\cdots .
\ee

The root of the Mathieu Moonshine phenomenon lies in the coefficients of the series \eqref{e}: 
it was observed in \cite{eot10} that these coefficients appeared to be twice the dimensions of some 
representations of the sporadic Mathieu group $M_{24}$. A proof of this fact,
along with its highly non-trivial generalizations to  twining genera,
was given recently in  \cite{ga12}, which builds on the works  \cite{ch10, ghv10a,ghv10b,eghi11}. 
The field theoretic reason for $M_{24}$ to act on the states in the massive sector that 
$Z_{K3}$ accounts for
has remained a mystery so far. In order to unveil some of this $M_{24}$ Moonshine Mystery, 
it seems natural to track the states of one's favourite $N=(4,4)$ SCFT on K3 and 
determine explicitly which ones contribute to
the elliptic genus in the form of representations of $M_{24}$ or its subgroups.
In this work we shall do so for the leading order term of \req{e} in the case
of any $\Z_2$-orbifold conformal field theory on K3, which we denote 
$\CCC=\TTT/\Z_2$, where $\TTT$ is the underlying toroidal CFT in four dimensions. 
The current section is devoted to determining a $45$-dimensional space $V_{45}^{CFT}$ of states
which is generic to all such theories, such that $V_{45}^{CFT}\oplus\qu V_{45}^{CFT}$
accounts for the leading order coefficient $90$ of $e(\tau)$ in \req{e}.
%
%%%%%%%%%%%%%%%%%%%%%%%%%%%%%%%%%%%%%%%%%%
\subsection{Counting massive states in $\Z_2$-orbifold conformal field theories}\label{sec:massivecounting}
%%%%%%%%%%%%%%%%%%%%%%%%%%%%%%%%%%%%%%%%%% 
Every toroidal conformal field theory $\TTT$
possesses two free Dirac fermions on the holomorphic side,
which we denote by $\chi^1_+(z),\, \chi^2_+(z)$. The 
{fields of the complex conjugates} are denoted 
$\chi^1_-(z),\, \chi^2_-(z)$, such that
\be\label{Diracfields}
\chi^k_+(z)\chi^\ell_-(w)\sim{\delta_{k\ell}\over z-w}, \qquad k,\,\ell\in\{1,\,2\},
\ee
while the antiholomorphic counterparts are denoted $\qu\chi^1_\pm(\qu z),\, 
\qu\chi^2_\pm(\qu z)$. The superpartners of the two Dirac fermions $\chi^1_+(z),\, \chi^2_+(z)$ are given by 
\be\label{complexcurrents}
\textstyle
j_+^{1}(z):=\frac{1}{\sqrt{2}}(j^{1}(z)+i\,j^{2}(z)) \quad\mbox{ and }\quad
j_+^{2}(z):=\frac{1}{\sqrt{2}}(j^{3}(z)+i\,j^{4}(z)),
\ee
with $j^K(z), K \in \{1,\ldots, 4\}$, 
four real holomorphic $U(1)$-currents. 
{We remark that the introduction of the fields $\chi_\pm^k(z),\, \qu\chi_\pm^k(\qu z)$ and their 
superpartners amounts to a choice of basis for fields with appropriate quantum numbers, which
is tantamount to a choice of geometric interpretation (see \cite{nawe00,diss} for extensive 
discussions of this issue). Indeed, the fields $j_\pm^k(z),\, k\in\{1,\,2\},$ are identified with the
holomorphic coordinate vector fields ${\partial\over\partial z_k},\, k\in\{1,\,2\},$
in such a geometric interpretation. As was argued in the introduction, in this work we are only
interested in geometric symmetries of our respective CFTs. For this notion to make sense, the choice
of a geometric interpretation is inevitable.}
As usual, in $\TTT$ we have the mode expansions
\be\label{modeexpansion}
j^K(z)  = \sum_{n \in \mathbb{Z}}a^K_nz^{n-1},\qquad
\chi^k_{\pm}(z) = \sum_{n\in\Z+r} (\chi_\pm^k)_n z^{n-1/2},
\ee
where $r=\hf$ in the Neveu-Schwarz sector and $r=0$ in the Ramond sector.
The charges with respect to $(j^1,\ldots,j^4;\qu \jmath^1,\ldots, \qu\jmath^4)$ are denoted 
$p:=(p_L;p_R)\in \Gamma \subset \R^{4,4}$, where $\Gamma$ is a self-dual even integral
lattice with signature $(4,4)$.\\

The $\mathbb{Z}_2$-orbifold action on the Dirac fermions is given by 
$\chi_\pm^k(z) \mapsto - \chi_\pm^k(z)$, and on the $U(1)$-currents
by $j^K(z) \mapsto -j^K(z)$. One may construct $\mathbb{Z}_2$-invariant  
generators of the $N=4$ superconformal algebra from these free fields, namely the $U(1)$-current
$$
J^3=\hf( \nop{\chi_+^1\chi_-^1} + \nop{\chi_+^2\chi_-^2} ),
$$
the energy-momentum tensor
$$
T=\nop{j_+^1j_-^1} + \nop{j_+^2j_-^2} +
\hf(\nop{\partial\chi_+^1\chi_-^1} + \nop{\partial\chi_-^1\chi_+^1} +
\nop{\partial\chi_+^2\chi_-^2} + \nop{\partial\chi_-^2\chi_+^2}),
$$
and the remaining $SU(2)$-currents and $N=4$ supercurrents
\be\label{sucha}
J^\pm=\pm\nop{\chi_\pm^1\chi_\pm^2},\qquad
G^\pm=\sqrt2( \nop{\chi_\pm^1 j_\mp^1} + \nop{\chi_\pm^2j_\mp^2} ),
\qquad
G^{\prime\pm}=\sqrt2( \nop{\chi_\mp^1 j_\mp^2} - \nop{\chi_\mp^2j_\mp^1} ).
\ee
Similar expressions are obtained in the antiholomorphic sector. 
We denote by $Q$ and $h$ (resp.\ $\qu Q$ and $\qu h$) the eigenvalues of 
$J^3_0$ and $L_0$ (resp.\ $\qu{J}^3_0$ and $\qu L_0$).\\

We label the four $RR$ ground states that correspond to the two Dirac fermions 
by
$\sigma_i^{\pm\pm},\, i \in\{1,2\}$. These states are odd under the $\Z_2$-action. 
We summarize the $RR$ ground state content of the untwisted sector of the $\Z_2$-orbifold conformal 
field theory
$\CCC=\TTT/\Z_2$ on K3  in Table\,\ref{tab:untwistedgs}.
\begin{table}[htbp]
  \centering
    \begin{tabular}{|l|l||l|l|}
    \hline
 \,\,\,\,\,charged   &&\,\,uncharged&\\
   ground state &$(h, Q;\qu h,\qu Q)$& ground state &$(h, Q;\qu h,\qu Q)$\\
   \hline
    &&&\\
    $\sigma_1^{++}\sigma_2^{++}$&$(\frac{1}{4},1;\frac{1}{4},1)$
    & $\sigma_1^{++}\sigma_2^{--}$&$(\frac{1}{4},0;\frac{1}{4},0)$\\[3pt]
 $\sigma_1^{+-}\sigma_2^{+-}$&$(\frac{1}{4},1;\frac{1}{4},-1)$
 &$\sigma_1^{--}\sigma_2^{++}$&$(\frac{1}{4},0;\frac{1}{4},0)$\\[3pt]
$\sigma_1^{-+}\sigma_2^{-+}$&$(\frac{1}{4},-1;\frac{1}{4},1)$
&$\sigma_1^{-+}\sigma_2^{+-}$&$(\frac{1}{4},0;\frac{1}{4},0)$\\[3pt]
$\sigma_1^{--}\sigma_2^{--}$&$(\frac{1}{4},-1;\frac{1}{4},-1)$
&$\sigma_1^{+-}\sigma_2^{-+}$&$(\frac{1}{4},0;\frac{1}{4},0)$\\
  &&&\\\hline
\end{tabular}
  \caption{\textsl{RR ground states in the untwisted sector of ${\cal T}/\Z_2$.}}
\label{tab:untwistedgs}
\end{table}
Note that each of the four charged $RR$ ground states 
can be obtained from, say, $\sigma:=\sigma_1^{--}\sigma_2^{--}$
by application of the zero-modes $J_0^+,\,\qu J_0^+$ of the $SU(2)$-currents listed 
in \req{sucha}; these states comprise the ground states of the vacuum representation
of the $N=(4,4)$ superconformal algebra in the Ramond-Ramond sector. On the other hand, each
of the uncharged RR ground states is the ground state of a massless matter representation.

Moreover, 
there is a $16$-dimensional space of twisted ground states 
in the Ramond-Ramond sector with orthonormal basis
$T_{\vec a}$, where each state has quantum numbers $(h, Q;\qu h,\qu Q)=(\frac{1}{4}, 0;\frac{1}{4},0)$. 
 In any geometric interpretation of $\CCC$ on a Kummer
surface with underlying torus $T=\R^4/\Lambda$, the label
$\vec a\in\F_2^4 \cong\hf\Lambda/\Lambda$ 
refers to the fixed point of $\Z_2$ at which the respective field is localized. \\

We now identify the states in our theory $\CCC={\cal T}/\mathbb{Z}_2$ which are expected 
to form a $\bf{45}\oplus\qu{\bf{45}}$ representation of $M_{24}$ 
because they generically contribute to the leading coefficient $90$ in \eqref{e}. 
In the next two sections, we show how a maximal subgroup $(\mathbb {Z}_2)^4\rtimes A_8$ of $M_{24}$ 
acts on these states, and we
establish a link with the geometric picture we have developed in \cite{tawe11,tawe12}.\\

The $\wt{R}\wt{R}$ partition function of  ${\cal T}/\mathbb{Z}_2$ may be read off \cite{eoty89, diss} after suitable spectral flow. We write
$$
Z^{\wt{R}}:=Z^{\wt{R}}_{{\rm untwisted}}+Z^{\wt{R}}_{{\rm twisted}},
$$
with 
\ba
Z^{\wt{R}}_{{\rm untwisted}}(\tau,z;\qu\tau, \qu z)
&=&\frac{1}{2|\eta(\tau)|^8}\,\left( 1+\smash{\sum_{\substack{(p_L;p_R)\in \Gamma, \\ (p_L;p_R) \neq (0;0)}}}\,\,
q^{\frac{p_L^2}{2}}\qu q^{\frac{p_R^2}{2}}\right) \bigg | \frac{\vartheta_1(\tau,z)}{\eta(\tau)}\bigg |^4
+8\,\bigg | \frac{\vartheta_2(\tau,z)}{\vartheta_2(\tau)}\bigg |^4\!\!, 
\vphantom{\sum_{\substack{\Gamma\\\Sigma_F}}} \label{untwist}\\
Z^{\wt{R}}_{{\rm twisted}}(\tau,z;\qu\tau, \qu z)
&=&8\,\bigg | \frac{\vartheta_3(\tau,z)}{\vartheta_3(\tau)}\bigg |^4
+8\,\bigg | \frac{\vartheta_4(\tau,z)}{\vartheta_4(\tau)}\bigg |^4, \label{twisted}
\ea
with $\Gamma$ the charge lattice. 
From \req{n4deco} and \req{massivechar} we deduce that the massive states which contribute to the
leading order term of $e(\tau)$ have weights $(h;\qu h)=(\frac{5}{4}; \frac{1}{4})$. Moreover, it suffices
to focus on
 the states with quantum numbers $(h,Q;\qu h, \qu Q)=(\frac{5}{4}, 1; \frac{1}{4}, \qu Q)$.
%%%%%%%%%%%%%%%%%%%%%%%%%%%%%%%%%%%%%%%%%%
\subsubsection*{The untwisted sector}
%%%%%%%%%%%%%%%%%%%%%%%%%%%%%%%%%%%%%%%%%%
We first identify all states with these quantum numbers which generically
come from the untwisted sector \eqref{untwist}. Since each factor $1/\eta(\tau)$ 
counts states created from the vacuum by the bosonic oscillators 
$a_n^K,\, n\in \N,$ for fixed $K$, and since each factor 
$\vartheta_1(\tau,z)/\eta(\tau)$ counts bosonic and fermionic states (with signs) 
created from the vacuum by the fermionic
modes $(\chi^k_\pm)_n,\,n\in \N$, for a fixed value of $k$, one can read  from the expansion
$$% \label{theta1}
\frac{1}{\eta^4(\tau)}
\left( \frac{\vartheta_1(\tau,z)}{\eta(\tau)}\right)^2=-y^{-1}(1-2y+y^2)(1+q\,(4-2y-2y^{-1})+
\cdots)
$$
that the factor $y^{-1}$ accounts for a Ramond ground state with 
$U(1)$-charge $Q=-1$, the term $(-2y)$ in $(1-2y+y^2)$ accounts for the two fermionic zero modes 
$(\chi^k_+)_0,\,k \in\{1,2\},$  while the term $y^2$ accounts for the bilinear $(\chi_+^1)_0(\chi_+^2)_0$. 
On the other hand, the term $4q$ in the factor $(1+q\,(4-2y-2y^{-1})+\cdots)$ accounts for the four 
bosonic oscillators $a_1^K$ , while $(-2yq)$ and $(-2y^{-1}q)$ account for $(\chi^k_+)_1$ and 
$(\chi^k_-)_1,\,k\in\{1,2\}$, respectively. 
The states with  quantum numbers $(\frac{5}{4}, 1; \frac{1}{4}, \qu Q)$ 
in  $\displaystyle{\frac{1}{|\eta(\tau)|^8}\bigg | \frac{\vartheta_1(\tau,z)}{\eta(\tau)}\bigg |^4}$ 
that are $\Z_2$-invariant 
are thus encoded in the terms
\be \label{theta1term}
(y^{-1}\qu y^{-1})\left( (y^2)\,(4q)\,(-2\qu y)\, + (-2y)\,(-2qy)\,(1+\qu y^2)\right).
%+(y^{-1}\qu y^{-1})\,(4q)\,(y^2)\,(1+\qu y^2)+(y^{-1}\qu y^{-1})\,(-2qy)\,(-2y)(-2\qu y),
\ee
Since the charge lattice $\Gamma$ depends on the moduli of $\TTT$, and since the term
$\displaystyle{\bigg | 2\frac{\vartheta_2(\tau,z)}{\vartheta_2(\tau)}\bigg |^4}$ 
in \req{twisted} implements  the projection onto the $\Z_2$-invariant states, \req{theta1term}
accounts for all those $\Z_2$-invariant untwisted states with quantum numbers $(\frac{5}{4}, 1; \frac{1}{4}, \qu Q)$
which exist in every $\Z_2$-orbifold conformal field theory $\CCC=\TTT/\Z_2$.
Hence,
the generic contribution from the untwisted sector of the theory to this class of states 
amounts to 
eight fermions and eight bosons. Indeed, the eight fermions are  (with signs) 
given by\footnote{From now on, we set $\chi^\ell:=\chi_+^\ell$ and $\qu\chi^\ell:=\qu{\chi}_+^\ell,\,\,\ell\in \{1, 2\}$, 
as the fields $\chi_-^\ell$ and $\qu{\chi}_-^\ell$ do not appear in the expressions of the states we are considering.}
\be \label{fermions}
(4q)\,(y^2)\,(-2\qu y)\,(y^{-1}\qu y ^{-1})\colon\qquad 
a_1^K\,\chi_0^1\,\chi_0^2\,\qu \chi_0^{\ell}\,\sigma,\quad K \in\{1,\ldots,4\},\;  \ell\in\{1,2\},\quad[\qu Q=0],
\ee
and the eight bosons are given by
\be \label{bosons}
\begin{array}{lrl}
(2qy)\,(2y)\,(1)\,(y^{-1}\qu y^{-1})\colon&
\chi_{1}^k\,\chi_0^\ell\,\sigma& [\qu Q=-1],\\[5pt]
(2qy)\,(2y)\,(\qu y^2)\,(y^{-1}\qu y^{-1})\colon&
\chi_1^k\,\chi_0^\ell\,\qu{\chi}_0^1\qu{\chi}_0^2\,\sigma&   [\qu Q=1],
\end{array}
\quad
k,\ell \in\{1,2\},
\ee
where $\sigma:=\sigma_1^{--}\sigma_2^{--}$ is the ground state with charges 
$(Q;\qu Q)=(-1;-1)$ from Table\,\ref{tab:untwistedgs},
such that the $\chi_0^k\,\qu\chi_0^\ell\,\sigma$ with $k,\,\ell\in\{1,\,2\}$ yield the four
uncharged states from Table\,\ref{tab:untwistedgs}, 
while $\qu\chi_0^1\,\qu\chi_0^2\,\sigma$ yields the one with charges $(Q;\qu Q)=(-1;1)$.
Actually, the eight fermionic states in \eqref{fermions} are massless, 
since they are the images of the massless matter states $\chi_0^k\,\qu\chi_0^\ell\,\sigma$, 
$k,\ell\in\{1,\,2\}$, under the modes $G_1^+,\,G^{\prime-}_1$ of the $N=4$ supercurrents
listed in \req{sucha}, respectively.
Each set of 
four bosons in \eqref{bosons} consists of one massless boson 
$L_1J_0^+ \sigma$ or $L_1J_0^+\qu\chi_0^1\,\qu\chi_0^2\,\sigma$,
and three massive ones. 
The occurrence of six massive contributions in total can 
also readily be checked by rewriting the $\wt R\wt R$ partition function of 
the untwisted sector of $\CCC=\TTT/\Z_2$ in terms of $N=4$ characters. Indeed,
using$$
\begin{array}{rcll}
\wt{\ch}^{\wt{R}}_{1, \textstyle\hf}( \tau,z)&=&-(y+y^{-1})+q(2-y-y^{-1})+\cdots,&\mbox{vacuum, massless}\\[3pt]
\wt{\ch}^{\wt{R}}_{1, 0}(\tau,z)&=&1+q(2-2y-2y^{-1}+y^2+y^{-2})+\cdots,&\mbox{massless matter}\\[3pt]
\wt{\ch}^{\wt{R}}(\tau,z)&=&2-y-y^{-1}+q(6-5y-5y^{-1}+2y^2+2y^{-2})+\cdots,&\mbox{massive}
\end{array}
$$
one obtains
$$%\label{charchar}
\hspace*{-1em}
\begin{array}{rcl}
Z^{\wt{R}}_{{\rm untwisted}}(\tau,z;\qu\tau, \qu z)
=\wt{\ch}^{\wt{R}}_{1, \textstyle\hf}( \tau,z)
{\qu{\wt{\ch}^{\wt{R}}_{1,  \textstyle\hf}}(\qu\tau, \qu z)}
&+&4\, \wt\ch^{\wt{R}}_{1, 0}( \tau,z)\qu{ \wt\ch^{\wt{R}}_{1, 0}}( \qu\tau, \qu z)\\[5pt]
&+&3q\,\wt{\ch}^{\wt{R}}(\tau,z)\qu{\wt\ch^{\wt{R}}_{1, \textstyle\hf}}(\qu\tau, \qu z)+\cdots,
\end{array}
$$
which  reflects accurately the order $q$ (and $\qu q^0$) contribution. 
In summary, the $16$ generic Ramond-Ramond states with quantum numbers 
$(\frac{5}{4}, 1; \frac{1}{4}, \qu Q)$ in the untwisted sector of $\CCC=\TTT/\Z_2$ are
\begin{itemize}
\item one massless boson with $\qu Q=1$ and one massless boson with $\qu Q=-1$,
\item eight massless fermions with $\qu Q=0$,
\item three massive bosons with $\qu Q=1$ and three massive bosons with $\qu Q=-1$.
\end{itemize}
These massive bosons contribute a term $-6q$ to the function $e(\tau)$ in \eqref{e}. \\
%%%%%%%%%%%%%%%%%%%%%%%%%%%%%%%%%%%%%%%%%%
\subsubsection*{The twisted sector}
%%%%%%%%%%%%%%%%%%%%%%%%%%%%%%%%%%%%%%%%%%
On the other hand, the 
expansion of $Z^{\wt{R}}_{{\rm twisted}}(\tau,z;\qu\tau, \qu z)$ encodes the contribution to the states with 
quantum numbers $(\frac{5}{4}, 1; \frac{1}{4}, \qu Q)$ in the term $16(-8qy)$, where the factor $16$  
accounts for the number of twisted ground states
$T_{\vec {a}},\, \vec{a}\in \F_2^4$. By a similar analysis as above,
the $128$ states of interest are fermionic 
of the form\footnote{Recall that the mode expansion \req{modeexpansion} of the
$U(1)$-currents and free fermions
in the twisted RR sector has modes $a_n^K,\,\chi_n^\ell$ with $n\in\Z+\hf$.}
$a_\hf^K\,\chi_\hf^\ell\,T_{\vec{a}}$, $K \in \{1,\ldots,4\},\, \ell\in\{1,2\}$, with $\qu Q=0$. 
Since one can write
$$
Z^{\wt{R}}_{{\rm twisted}}(\tau,z;\qu\tau, \qu z)
=16\,\left \{\wt\ch^{\wt{R}}_{1, 0}( \tau,z)\qu{ \wt\ch^{\wt{R}}_{1, 0}}( \qu\tau, \qu z)
+6q\,\wt\ch^{\wt{R}}(\tau,z)\qu{\wt\ch^{\wt{R}}_{1, 0}}( \qu\tau, \qu z)+\cdots \right \},
$$
it follows that $32$ of these states are massless, while the remaining $96$ are massive, 
and contribute $+96q$ to \eqref{e}. 
Since our goal is to study these massive
states, it is imperative to 
determine which combinations of the $a_\hf^K\,\chi_\hf^\ell\,T_{\vec{a}}$ are massless. 
These are the $32$ states 
which are created from the
massless ground states $T_{\vec a},\,\vec a\in\F_2^4$, by 
the modes $G_1^+,\,G^{\prime-}_1$ of the $N=4$ supercurrents
listed in \req{sucha}.
We therefore 
introduce the modes of the complex currents \req{complexcurrents},
\begin{eqnarray*}
(j^1_+)_n&:=&\frac{1}{\sqrt{2}}(a^1_n+ia^2_n),\qquad j_-^1:=(j_+^1)^\ast,\nonumber\\
(j^2_+)_n&:=&\frac{1}{\sqrt{2}}(a^3_n+ia^4_n),\qquad j_-^2:=(j_+^2)^\ast,
\end{eqnarray*}
and from \req{sucha} we
find that the $32$ massless states
 may be written as\footnote{We suppress the modes for ease of reading.}
\be\label{masslesstwo}
\left(\chi^1\,j_-^1+\chi^2\,j_-^2\right)T_{\vec a}, \qquad 
\left(\chi^1\,j_+^2-\chi^2\,j_+^1\right)T_{\vec a},\quad \vec a\in\F_2^4.
\ee
The massive states  are perpendicular to these massless ones with respect to the standard
metric induced by the Zamolodchikov metric. Therefore we conveniently set
\be\label{three}
3:=\{\chi^1\,j_+^2+\chi^2\,j^1_+,\,\,\chi^1\,j^1_+,\,\,\chi^2\,j^2_+\,\},\qquad
\qu 3:=\{\chi^1\,j_-^1-\chi^2\,j^2_-,\,\,\chi^1\,j^2_-,\,\,\chi^2\,j^1_-\,\},
\ee
such that the $96$-dimensional vector space of massive twisted states with quantum numbers 
$(\frac{5}{4}, 1; \frac{1}{4}, \qu Q)$ has the basis
\be\label{pre45}
\left\{W T_{\vec a}\mid W\in 3\cup\qu 3,\;\vec a\in\F_2^4\right\}.
\ee
Note that all these states are fermionic.
%
%%%%%%%%%%%%%%%%%%%%%%%%%%%%%%%%%%%%%%%%%%
\subsection{A generic space $V_{45}^{CFT}$  in $\Z_2$-orbifold CFTs on K3} \label{sec:generic45}
%%%%%%%%%%%%%%%%%%%%%%%%%%%%%%%%%%%%%%%%%%
In the previous subsection, we have determined all massive states with quantum numbers 
$(\frac{5}{4}, 1; \frac{1}{4}, \qu Q)$ which exist generically in $\Z_2$-orbifold conformal field theories
on K3. Indeed, we have recovered  a $6$-dimensional space of untwisted bosonic states, along
with a $96$-dimensional space of twisted fermionic states, correctly accounting for a net contribution
of $-90qy$ to the elliptic genus. 
As explained in Section\,\ref{sec:evidence}, our interpretation of the Mathieu Moonshine
observations predicts that a pair $V_{45}^{CFT}\oplus\qu V_{45}^{CFT}$
of $45$-dimensional representation spaces of the Mathieu group $M_{24}$ should arise
from these states. While the emergence of the group $M_{24}$ remains mysterious,
we  expect to observe, on the space $V_{45}^{CFT}$, the representation of  subgroups of $M_{24}$ 
which occur as geometric symmetry groups of $\Z_2$-orbifold limits of K3 surfaces, 
in accordance with ideas already promoted in our previous works \cite{tawe11,tawe12}.

Indeed, we focus on  symmetry groups of SCFTs $\CCC=\TTT/\Z_2$  which
are induced by geometric symmetries of the underlying toroidal conformal field theories.\footnote{This 
includes the symmetries which are induced by
shifts by half lattice vectors on the underlying toroidal theory.}
{We emphasize that this notion only makes sense \textsl{after} the \textsl{choice}
of a geometric interpretation for the theory $\mathcal T$ on some torus $\mathbb R^4/\Lambda$.
As is explained in detail in \cite{tawe11,tawe12}, we even have to make a choice
of generators for the lattice $\Lambda$, and this means that in fact we are working
on a cover of the moduli space of SCFTs on K3. These choices in particular induce an
identification ${1\over2}\Lambda/\Lambda\cong\mathbb F_2^4$, such that every geometric
symmetry group $G$} acts on the 
twisted ground states $T_{\vec a},\,\vec a\in\F_2^4$, as permutation group
by means of affine linear maps on the space of labels  $\F_2^4$. In other words, we have
a natural representation
$$
R_G\colon G\longrightarrow \Aff(\F_2^4), 
$$
{once the very choices listed above have been made;}
see
\cite{tawe12} and Section\,\ref{sec:recap} for details. Furthermore, $G$ acts 
linearly as subgroup of $SO(3)$
on the $U(1)$-currents  $j^1_+,\,j^2_+$ of \req{complexcurrents}. More precisely,
$j^1_+,\,j^2_+$ form  a doublet $\bf{2}$ of $SU(2)$, as do
their fermionic superpartners $\chi^1_+,\,\chi^2_+$ of \req{Diracfields},
while $j^1_-,\,j^2_-$ carry a $\qu{\bf{2}}$.
 By a direct calculation one checks that the states \req{masslesstwo}
are invariant under the resulting action of $SU(2)$ and that the action respects the decomposition
\req{three}. In fact, we have $\bf{2}\otimes\bf{2}=\bf{1}\oplus\bf{3}$, 
$\bf{2}\otimes\qu{\bf{2}}=\bf{1}\oplus\qu{\bf{3}}$.
Since $-\id\in SU(2)$ acts trivially on $\bf{2}\otimes\qu{\bf{2}}$, we have an action of
$SO(3)=SU(2)/\{\pm\id\}$ on $\bf{3}$ and $\bf{\qu 3}$.
This action thus descends to a representation on 
the $96$-dimensional space of massive twisted states
with basis \req{pre45}, as does the action of $\Aff(\F_2^4)$ on the indices of the twisted
ground states. We formally denote the resulting representation of $SO(3)$ by $S$, where
\be \label{genericstates}
\wh V:= \spann_\C\left\{ W T_{\vec a} \mid W\in 3\cup\qu{3},\; \vec a\in\F_2^4\right\},\qquad
S\colon SO(3)\longrightarrow\End_\C(\wh V).
\ee
Every symmetry group $G$ of  $\CCC=\TTT/\Z_2$ which is induced by geometric symmetries of $\TTT$ 
has the form $G=(\Z_2)^4\rtimes G_T$ with $G_T\subset SO(3)$, see for example
\cite{tawe11,tawe12} for an exposition. Then the representation 
 $R^{CFT}_G\colon G\longrightarrow\End_\C(\wh V)$ obtained from the symmetries of $\CCC$ is given by 
\ba\label{actiondetails}
\fa g=(\vec c,g_T)\in G=(\Z_2)^4\rtimes G_T, \; W\in 3\cup\qu{3},\, \vec a\in\F_2^4\colon\nonumber\\
R^{CFT}_G(g)\left( WT_{\vec a}\right) &=& S(g_T)(W) T_{R_G(g)(\vec a)}.
\ea
It is important to note that 
$\wh V$ thereby is simply a tensor product of  the representation spaces $\bf{3}\oplus\qu{\bf{3}}$ of $SO(3)$ by 
a $16$-dimensional representation space of $\Aff(\F_2^4)$,
a fact that will be crucial later on, when we discuss group actions on this space of states. 

In fact, we immediately obtain a natural decomposition of $\wh V$ according to
$$
\bf{96} =(\bf{3}\oplus\qu{\bf{3}})\otimes\bf{1}\oplus(\bf{3}\oplus\qu{\bf{3}})\otimes\bf{15}
$$
as follows:
we decompose the $16$-dimensional space of twisted ground states 
into a one-dimensional space generated by 
$N_{0000}:=\frac{1}{4}\sum_{\vec{a}\in \F_2^4}^{16}T_{\vec{a}}$, and its 
orthogonal complement $\AAA$. 
Since $N_{0000}$ is invariant under the action of $\Aff(\F_2^4)$, this action descends to a representation
on $\AAA$. We now obtain the desired $45$-dimensional vector space
$V_{45}^{CFT}$ as the space which is generated by the states 
$W A$ with $W\in  \{\chi^1\,j_+^2+\chi^2\,j^1_+,\,\,\chi^1\,j^1_+,\,\,\chi^2\,j^2_+\,\}$ and
$A\in\AAA$. Then
$\qu V_{45}^{CFT}$ is defined analogously by using the ${\qu 3}$ from \req{three} instead of the ${3}$
as above. 
The restrictions of the representations $R_G$ and $S$ to $V_{45}^{CFT}$ and
$\qu V_{45}^{CFT}$ are denoted by $R_G$ and $S$ as well.
While at this point the choice of $V_{45}^{CFT}\oplus\qu V_{45}^{CFT}$ in $\wh V$ is only
justified by the fact that it is natural and compatible with a restriction of the representations 
$R_G$ and $S$, in Theorem \ref{result} we
prove that these spaces are in fact uniquely determined.\\

In summary, we have obtained the result that the generic field content of $\Z_2$-orbifold
conformal field theories on K3 ensures the existence of a space of states which naturally accounts for the massive net
contributions to the elliptic genus in leading order:
\begin{prop}\label{our45}
Consider the orthogonal complement $\AAA$ of $N_{0000}:=\frac{1}{4}\sum_{\vec{a}\in \F_2^4}^{16}T_{\vec{a}}$
in the space of twisted ground states of an arbitrary $\Z_2$-orbifold conformal field theory on K3. 
Then the space  
$$
V_{45}^{CFT} := \spann_\C\left\{ W A\mid W\in  \{\chi^1\,j_+^2+\chi^2\,j^1_+,\,\,\chi^1\,j^1_+,\,\,\chi^2\,j^2_+\,\}
,\;A\in\AAA\right\}
$$
is a $45$-dimensional vector space of massive states which together with $\qu V_{45}^{CFT}$
accounts for the leading order contribution to the function $e(\tau)$ that governs the elliptic genus
of K3 according to \mbox{\rm\req{n4deco}}. In terms of representations of symmetry groups,
$V_{45}^{CFT}$ is a tensor product $\WWW\otimes\AAA$, where  $\WWW$ is the three-dimensional
representation space $\bf{3}$ of $SO(3)$,
while $\AAA$ is a $15$-dimensional representation space of $\Aff(\F_2^4)$.
Similarly, $\qu V_{45}^{CFT}=\qu\WWW\otimes\AAA=\qu{\bf{3}}\otimes\bf{15}$.
\end{prop}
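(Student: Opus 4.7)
The plan is to assemble the computations of Sections \ref{sec:massivecounting} and \ref{sec:generic45} into a verification of three claims: the dimension, the tensor product representation structure, and the accounting for the coefficient $90$ in $e(\tau)$. For the dimension, the $16$-dimensional twisted ground state space is spanned by $\{T_{\vec a}\mid\vec a\in\F_2^4\}$, so the orthogonal complement $\AAA$ of the one-dimensional $\spann_\C\{N_{0000}\}$ has $\dim_\C\AAA=15$; the three fields $\chi^1j_+^2+\chi^2j_+^1$, $\chi^1j_+^1$, $\chi^2j_+^2$ are linearly independent, giving $\dim_\C\WWW=3$; hence $\dim_\C V_{45}^{CFT}=45$, and analogously $\dim_\C\qu V_{45}^{CFT}=45$.

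Next I would verify the representation structure. The $SO(3)$-action on $\WWW$ and $\qu\WWW$ follows from the observation in Section \ref{sec:generic45} that $\chi^1,\chi^2$ and $j_+^1,j_+^2$ are $SU(2)$-doublets, so that the products listed in \req{three} assemble into the ${\bf 3}$ and ${\bf\qu 3}$ of $SU(2)/\{\pm\id\}=SO(3)$. For the action on $\AAA$, the natural action of $\Aff(\F_2^4)$ on the twisted ground states via affine transformations of the labels preserves $N_{0000}=\tfrac14\sum_{\vec a}T_{\vec a}$, and hence descends to a $15$-dimensional representation on the orthogonal complement $\AAA$. The formula \req{actiondetails} then shows that for every geometric symmetry group $G=(\Z_2)^4\rtimes G_T$ the restriction of $R^{CFT}_G$ to $V_{45}^{CFT}$ is diagonal with respect to the tensor factorization, with $g_T$ acting on $\WWW$ and the full $g$ acting on $\AAA$ through $R_G$.

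Finally I would account for $e(\tau)$. Section \ref{sec:massivecounting} established that at quantum numbers $(h,Q;\qu h,\qu Q)=(\tfrac54,1;\tfrac14,\qu Q)$ the generic content of $\CCC=\TTT/\Z_2$ comprises $96$ massive twisted fermions and $6$ untwisted massive bosons, contributing $+96$ and $-6$ respectively to the net coefficient of interest in the elliptic genus, for a total of $90$ matching the leading term of $e(\tau)$ by \req{e} and \req{massivechar}. By construction $\dim_\C(V_{45}^{CFT}\oplus\qu V_{45}^{CFT})=90$. The main conceptual step, which I would flag as the non-routine part of the argument, is to explain why removing $\spann_\C\{N_{0000}\}$ is a principled choice rather than a mere dimension match: $N_{0000}$ spans the unique $\Aff(\F_2^4)$-trivial direction in the twisted ground state space, while the $6$ untwisted massive bosons carry trivial $\Aff(\F_2^4)$-action, so the subtraction is canonical with respect to every geometric symmetry group action. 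A full uniqueness statement characterising $V_{45}^{CFT}$ intrinsically by its transformation properties is then deferred to Theorem \ref{result}.
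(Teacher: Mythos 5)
Your proposal is correct and follows essentially the same route as the paper, which states Prop.~\ref{our45} as a summary of the computations in Sections~\ref{sec:massivecounting} and \ref{sec:generic45}: the $96-6=90$ net count, the splitting off of the $\Aff(\F_2^4)$-invariant direction $N_{0000}$ to get the $15$-dimensional $\AAA$, and the identification of $\WWW$ with the $\bf{3}$ of $SO(3)$ via the $SU(2)$-doublet structure of $\chi^k,j^k_+$ are exactly the paper's ingredients. Your closing remark correctly anticipates that the intrinsic uniqueness of $V_{45}^{CFT}\oplus\qu V_{45}^{CFT}$ (via faithfulness of the translational $(\Z_2)^4$) is established only later, in Theorem~\ref{result}, just as the paper itself acknowledges.
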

From the above proposition it follows that the vector space underlying $V_{45}^{CFT}$ serves as a medium to collect 
the actions of the geometric symmetry groups when 
symmetry-surfing the moduli space of $\Z_2$-orbifold conformal
field theories $\CCC=\TTT/\Z_2$ on K3. In the remaining sections of this paper, we show that
the combined action of these symmetry groups generates an action of $\Aff(\F_2^4)$ which
can be induced from a $45$-dimensional
irreducible representation of $M_{24}$ by restriction to $\Aff(\F_2^4){\cong(\mathbb Z_2)^4\rtimes A_8}$. 

To clear notations, we make use of the fact that $N_{0000}$ is invariant under both
the action of $SO(3)$ and of $\Aff(\F_2^4)$. We
let 
$$
\WWW_0:=\left\{W N_{0000}\mid W\in\{\chi^1\,j_+^2+\chi^2\,j^1_+,\,\,\chi^1\,j^1_+,\,\,\chi^2\,j^2_+\,\}\right\}\cong\WWW
$$ 
and frequently view $S$ as
a representation on $\WWW_0$,
$$
S\colon SO(3) \longrightarrow \End_\C(\WWW_0),
$$
the \textsc{model fiber} of $V_{45}^{CFT}$.
 %
%%%%%%%%%%%%%%%%%%%%%%%%%%%%%%%%%%%%%%%%
\section{The action of $\mathbf{(\Z_2)^4\rtimes A_8}$ on twisted ground states}
\label{sec:twisted}
%%%%%%%%%%%%%%%%%%%%%%%%%%%%%%%%%%%%%%%%%
In  Prop.~\ref{our45}, we have determined a $45$-dimensional space $V_{45}^{CFT}$ of 
states which in every theory $\CCC=\TTT/\Z_2$ 
 yields a representation
of the group of  symmetries  induced from geometric symmetries of the
underlying toroidal theory. 
{This representation depends on a choice of geometric interpretation for the theory $\mathcal T$
on some torus $\mathbb R^4/\Lambda$ together with a choice of generators for $\Lambda$,
thus lifting our construction onto a cover of the moduli space of SCFTs.}
As explained in Section\,\ref{sec:evidence},
Mathieu Moonshine predicts that this space is in fact related to a representation
of the Mathieu group $M_{24}$. Indeed, $M_{24}$ possesses an irreducible representation of
dimension $45$ on the space $V_{45}$ which has been constructed by Margolin \cite{ma93}, 
see Appendix\,\ref{sec:margolin} for a summary. Eventually
we would like to understand how our space of states $V_{45}^{CFT}$ can be identified with the space $V_{45}$ 
as a representation of $M_{24}$. 
In the present work we focus on the action of the maximal subgroup 
$(\Z_2)^4\rtimes A_8{\cong\Aff(\mathbb F_2^4)}$ of $M_{24}$, which is obtained by combining all
groups of symmetries of SCFTs $\CCC=\TTT/\Z_2$  induced by geometric symmetry
groups of the underlying toroidal theory.\\

The structure of the space $V_{45}^{CFT}$ of fields obtained in the previous section is that of a
tensor product $\WWW\otimes\AAA$, where $\AAA$ is a fifteen-dimensional space of
 twisted ground states in our $\Z_2$-orbifold CFT, and $\WWW$ is three-dimensional
 and furnishes a triplet $\bf{3}$ of $SO(3)$. 
By choice of an appropriate orthonormal basis $\left\{N_X\mid X\in\{A,\,B,\,\ldots,\,N,\,O\}\right\}$ of $\AAA$
one can thus write this space in a form which is very similar to the form of $V_{45}$ given in
\req{45d},
$$
V_{45}^{CFT}=\WWW_A\oplus\WWW_B\oplus\cdots\WWW_N\oplus\WWW_O,\quad
\WWW_X:=\spann_{\C}\{ N_X\}\otimes\WWW\quad\fa X \in\{A,\,B,\,\ldots,\,N,\,O\}.
$$
By construction, see Appendix\,\ref{sec:margolin}, Margolin's representation $M$ of $(\Z_2)^4\rtimes A_8$
induces a well-defined action on the fifteen-dimensional vector space 
which we call the base $\BBB$ of $V_{45}$, and which is generated by the counterparts 
$P_X$ (see the discussion of \req{45d}) of the `CFT' orthonormal
basis $\left\{N_X\mid X\in\{A,\,\ldots,\,O\}\right\}$ of $\AAA$.
On the other hand, in Section\,\ref{sec:generic45} we mentioned that $(\Z_2)^4\rtimes A_8\cong\Aff(\F_2^4)$ acts
naturally on $\AAA$ by affine linear maps on the indices of the twisted ground states $T_{\vec a},\,\vec a\in\F_2^4$. Hence
 we expect that the latter representation of $(\Z_2)^4\rtimes A_8$ 
on $\AAA$
 is equivalent to the representation $M$ of this group on $\BBB$
described in Appendix\,\ref{sec:margolin}. In the current section, we
prove that this expectation holds true.

To make the claim precise, let us describe the space $\AAA$ in more detail.
The space of twisted ground states in our CFT $\CCC=\TTT/\Z_2$ 
has a natural orthonormal basis
$\left\{T_{\vec a} \mid \vec a\in\F_2^4\right\}$, where  $\vec a\in\F_2^4$ labels
the sixteen resolved singular points 
in any geometric interpretation on an orbifold limit of K3. As is explained
in our previous work \cite{tawe11,tawe12}, the group $(\Z_2)^4\rtimes A_8\cong\Aff(\F_2^4)$
therefore acts naturally on these states by affine linear transformations on the indices $\vec a\in\F_2^4$.
In Prop.~\ref{our45} the space $\AAA$ is obtained as the orthogonal
complement of the state $N_{0000}:={1\over4}\sum_{\vec a\in\F_2^4} T_{\vec a}$, which
is invariant under the action of $(\Z_2)^4\rtimes A_8\cong\Aff(\F_2^4)$ by construction.
The space $\AAA$ hence indeed carries an action  of $(\Z_2)^4\rtimes A_8\cong\Aff(\F_2^4)$. 
\begin{prop}\label{repona}
Consider the orthogonal complement
$\AAA$ of the state $N_{0000}={1\over4}\sum_{\vec a\in\F_2^4} T_{\vec a}$ 
in the space of twisted ground states in a $\Z_2$-orbifold conformal field
theory on K3. The natural representation of $(\Z_2)^4\rtimes A_8\cong\Aff(\F_2^4)$  on $\AAA$ 
through affine linear transformations of the indices $\vec a\in\F_2^4$ of the twisted ground
states $T_{\vec a}$
is equivalent to the representation $M$ of 
$(\Z_2)^4\rtimes A_8$ constructed by Margolin  on the base $\BBB$ of $V_{45}$.
\end{prop}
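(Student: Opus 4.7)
The plan is to identify both representations as instances of the unique, up to isomorphism, $15$-dimensional complex irreducible representation of $\Aff(\F_2^4)\cong(\Z_2)^4\rtimes A_8$ whose restriction to the normal translation subgroup $(\Z_2)^4$ decomposes as the sum of all $15$ nontrivial characters of $(\Z_2)^4$, each with multiplicity one. By Clifford theory applied to the extension $(\Z_2)^4\triangleleft\Aff(\F_2^4)$, such uniqueness amounts to the observation that the stabilizer in $A_8\cong GL_4(\F_2)$ of a nontrivial character of $(\Z_2)^4$ is $H\cong\Aff(\F_2^3)$ of order $1344$, whose abelianization is trivial (since $GL_3(\F_2)$ is simple and acts transitively on $\F_2^3\setminus\{0\}$). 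Hence there is only one extension of the character to the inertia subgroup, and so a unique induced $15$-dimensional irreducible representation of $\Aff(\F_2^4)$ with the prescribed restriction.

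On the CFT side, $(\Z_2)^4$ acts on $\spann_\C\{T_{\vec a}\mid \vec a\in\F_2^4\}$ as the regular representation, which decomposes into the sum of all $16$ characters. The trivial character is carried by $\C\cdot N_{0000}$, so the orthogonal complement $\AAA$ carries precisely the sum of the $15$ nontrivial characters, placing it in the uniqueness class above. On the Margolin side, since $A_8$ acts transitively on the $15$ labels $\{A,\ldots,O\}$ and conjugation by $A_8\cong GL_4(\F_2)$ is transitive on the nontrivial characters of $(\Z_2)^4$, the multiplicities of the nontrivial characters in $M|_{\BBB}$ restricted to $(\Z_2)^4$ must all coincide: writing $m$ for this common multiplicity and $n$ for the multiplicity of the trivial character gives $15m+n=15$, so $(m,n)\in\{(0,15),(1,0)\}$. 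The case $(0,15)$ is ruled out as soon as one exhibits a single nontrivial element $g\in(\Z_2)^4$ acting nontrivially on $\BBB$ in Margolin's construction, which can be read off from the explicit matrices collected in Appendix\,\ref{sec:margolin}. This forces $(m,n)=(1,0)$ and places $\BBB$ in the same uniqueness class, yielding the desired equivalence.

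The main technical hurdle is therefore the verification that $(\Z_2)^4$ has no fixed vector in $M|_{\BBB}$. A clean substitute is the direct character computation $\mathrm{tr}\bigl(M(g)\big|_{\BBB}\bigr)=-1$ for any nontrivial $g\in(\Z_2)^4$, since $\sum_{\chi\neq 0}\chi(g)=-1$; confirming this trace on a single nontrivial translation in Margolin's construction is enough to distinguish $(1,0)$ from $(0,15)$, after which the Clifford-theoretic uniqueness closes the argument without any further basis-by-basis matching. Irreducibility of the natural representation on $\AAA$, needed only to frame the statement, is immediate from the $2$-transitivity of $\Aff(\F_2^4)$ on $\F_2^4$, and irreducibility of $M|_{\BBB}$ is automatic once the restriction to $(\Z_2)^4$ is identified, as any invariant subspace would have to contain the entire orbit sum.
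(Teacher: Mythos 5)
Your proof is correct, but it takes a genuinely different route from the paper's. The paper argues by explicit computation: it diagonalizes the translational $(\Z_2)^4$ on $\AAA$ via the Fourier basis $N_{\vec a}^{CFT}$ of \req{eigenbasis}, fixes the identification $\iota_1=A',\iota_2=B',\iota_3=D',\iota_4=F'$ to obtain the dictionary \req{baseidentification}, then computes the permutations of $\{A,\ldots,O\}$ induced by the six generators $\alpha_1,\alpha_2,\gamma_1,\gamma_2,\gamma_3,\beta_1$ and matches them with conjugations $\rho_s$ on the array $\AAA_{even}$ via an explicit bijection $\OOO_9\to\{\infty,0,\ldots,6\}$, so that $N_X\mapsto P_X$ is verified to be an intertwiner generator by generator. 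You replace all of this by the Clifford-theoretic uniqueness of the $15$-dimensional irreducible representation of $\Aff(\F_2^4)$ whose restriction to $(\Z_2)^4$ is the multiplicity-free sum of the nontrivial characters; your inertia-group computation (stabilizer of a nontrivial character isomorphic to the perfect group $\Aff(\F_2^3)$ of order $1344$, hence a unique one-dimensional extension) is correct, and the only input you need from Margolin's construction --- that a nontrivial translation acts on $\BBB$ with trace $7-8=-1$, ruling out the trivial restriction --- is immediate from Table \ref{tab:char}. Your route is cleaner and buys something extra: since the defining restriction property refers only to the characteristic subgroup $(\Z_2)^4$, the equivalence holds for any choice of isomorphism between the two incarnations of $(\Z_2)^4\rtimes A_8$, with no dependence on the particular generators or labellings. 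What it does not deliver is the explicit intertwiner $N_X\mapsto P_X$ together with the list \req{conjugateaction} of the generators' actions on the rows of $\AAA_{even}$; these concrete data are used essentially in Propositions \ref{nosquaretwist}--\ref{notriangletwist} and \ref{equivalence} to analyze the twist fiber by fiber, so within the paper the explicit matching cannot be dispensed with even though your argument fully suffices for Proposition \ref{repona} itself.
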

We postpone the proof of  Prop.~\ref{repona} to Section\,\ref{sec:proofofprop}, since as a preparation 
and for later convenience
we first recall some of the constructions and notations of \cite{tawe11,tawe12}. 
%
%%%%%%%%%%%%%%
\subsection{The action of $\mathbf{(\Z_2)^4\rtimes A_8}$ as combined symmetry group}\label{sec:recap}
%%%%%%%%%%%%%%
In \cite{tawe11,tawe12} we show that the group $(\Z_2)^4\rtimes A_8\cong\Aff(\F_2^4)$ 
can be obtained by combining the symmetry groups
of the three maximally symmetric Kummer surfaces\footnote{On a K3 surface $X$,
we call a biholomorphic map $f\colon X\longrightarrow X$ a \textsc{symmetry}, if its induced action on cohomology 
fixes the holomorphic volume form and 
the dual K\"ahler class of $X$.
As is explained  in \cite{tawe11,tawe12}, this implies that all symmetry groups of K3 surfaces are finite.
We require all our  Kummer surfaces to
be equipped with the dual K\"ahler class which is induced from the standard Euclidean
metric on the underlying torus. }. 
More precisely, the images $R_{G_k}(G_k)$, $k\in\{0,\,1,\,2\}$, of these three groups under their
natural representations on $\F_2^4$ generate the entire group $\Aff(\F_2^4)$.
The three maximally symmetric Kummer surfaces are the square Kummer surface $X_0$ with
symmetry group $G_0:=(\Z_2)^4\rtimes (\Z_2\times\Z_2)$, the tetrahedral Kummer
surface $X_1$ with symmetry group $G_1:=(\Z_2)^4\rtimes A_4$,  and the triangular Kummer surface $X_2$
with symmetry group $G_2:=(\Z_2)^4\rtimes S_3$. 
Let us denote by $\Lambda_k$, $k\in\{0,\, 1,\,2\}$, the defining
lattices of the complex tori underlying the Kummer surfaces $X_k$. Then
each  group $G_k$  acts on the twisted ground states 
$T_{\vec a}$ through  the permutations induced on $\F_2^4\cong{1\over2}\Lambda_k/\Lambda_k$ 
by the geometric action on  $\Lambda_k$. This defines the representations $R_{G_k}\colon G_k\longrightarrow\Aff(\F_2^4)$.

Let us fix some additional notations. The translational subgroup $(\Z_2)^4$ is common to all 
symmetry groups of Kummer K3s, and its elements
$\iota_{\vec c}$ with $\vec c\in \F_2^4$ act by
\be\label{translation}
\iota_{\vec c}\colon \quad T_{\vec a}\longmapsto T_{\vec a+\vec c}\;\;\fa \vec a\in\F_2^4
\ee
on the twisted ground states. 
To realize the action of the non-translational part of each symmetry group $G_k$, 
we first fix convenient generators for each of the lattices $\Lambda_k$ and for the groups $G_k$, $k\in\{0,\, 1,\,2\}$, 
see \cite[(1.5)-(1.9)]{tawe12} for our particular choices. 
In the case of the square Kummer surface $X_0$ with
non-translational symmetry group $\Z_2\times\Z_2$,  we introduce two generators 
$\alpha_1,\,\alpha_2$ in \cite[(4.23)]{tawe11} whose action on the
$T_{\vec{a}}, \vec{a} \in \F_2^4$, is\footnote{Here and in the following we only
list the action on those $T_{\vec a}$ which are not invariant under the respective symmetry.}
\begin{eqnarray}\label{square}
R_{G_0}(\alpha_1)\colon&&\left\{
\begin{array}{rclrclrcl}
T_{1000}&\longleftrightarrow& T_{0100},\quad&
T_{0010}&\longleftrightarrow& T_{0001},\quad&
T_{1010}&\longleftrightarrow& T_{0101},\\
T_{1001}&\longleftrightarrow& T_{0110},\quad&
T_{1110}&\longleftrightarrow& T_{1101},\quad&
T_{1011}&\longleftrightarrow& T_{0111},
\end{array}
\right.\nonumber\\[0.5em]
R_{G_0}(\alpha_2)\colon&&\left\{
\begin{array}{rclrclrcl}
T_{1000}&\longleftrightarrow& T_{0010},\quad&
T_{0100}&\longleftrightarrow& T_{0001},\quad&
T_{1100}&\longleftrightarrow& T_{0011},\\
T_{1001}&\longleftrightarrow& T_{0110},\quad&
T_{1110}&\longleftrightarrow& T_{1011},\quad&
T_{1101}&\longleftrightarrow& T_{0111}.
\end{array}
\right.
\end{eqnarray}
Similarly, 
for the non-translational symmetry group $A_4$ of the tetrahedral Kummer
surface $X_1$, we introduce three generators 
$\gamma_1,\,\gamma_2,\,\gamma_3$ in \cite[(4.23)]{tawe11}, where
\begin{eqnarray}\label{tetrahedron}
R_{G_1}(\gamma_1)\colon&&\left\{
\begin{array}{rclrclrcl}
T_{1000}&\longleftrightarrow& T_{0100},\quad&
T_{0010}&\longleftrightarrow& T_{1110},\quad&
T_{0001}&\longleftrightarrow& T_{0111},\\
T_{1001}&\longleftrightarrow& T_{0011},\quad&
T_{0101}&\longleftrightarrow& T_{1111},\quad&
T_{1011}&\longleftrightarrow& T_{1101},
\end{array}
\right.\nonumber\\[0.5em]
R_{G_1}(\gamma_2)\colon&&\left\{
\begin{array}{rclrclrcl}
T_{1000}&\longleftrightarrow& T_{0010},\quad&
T_{0100}&\longleftrightarrow& T_{1110},\quad&
T_{0001}&\longleftrightarrow& T_{1101},\\
T_{1001}&\longleftrightarrow& T_{1111},\quad&
T_{0101}&\longleftrightarrow& T_{0011},\quad&
T_{1011}&\longleftrightarrow& T_{0111},
\end{array}
\right.\\[0.5em]
R_{G_1}(\gamma_3)\colon&&\left\{
\begin{array}{rcccccccl}
T_{1000}&\longmapsto& T_{0101}&\longmapsto& T_{1101}&\longmapsto&T_{1000},\\
T_{0100}&\longmapsto& T_{0011}&\longmapsto& T_{0111}&\longmapsto&T_{0100},\\
T_{0010}&\longmapsto& T_{1001}&\longmapsto& T_{1011}&\longmapsto&T_{0010},\\
T_{0001}&\longmapsto& T_{1110}&\longmapsto& T_{1111}&\longmapsto&T_{0001},\\
T_{1100}&\longmapsto& T_{0110}&\longmapsto& T_{1010}&\longmapsto&T_{1100}.
\end{array}
\right.\nonumber
\end{eqnarray}
Finally, generators $\beta_1$ and $\beta_2$ for the triangular Kummer surface
with non-translational symmetry group $S_3$ are given in  
\cite[(1.9)]{tawe12}. Note that $\beta_2=\alpha_2$ acts as in \req{square}, 
$R_{G_2}(\beta_2)=R_{G_0}(\alpha_2)$, while
\be\label{triangle}
R_{G_2}(\beta_1)\colon\quad\left\{
\begin{array}{rcccccccl}
T_{1000}&\longmapsto& T_{0100}&\longmapsto& T_{1100}&\longmapsto&T_{1000},\\
T_{0010}&\longmapsto& T_{0011}&\longmapsto& T_{0001}&\longmapsto&T_{0010},\\
T_{1010}&\longmapsto& T_{0111}&\longmapsto& T_{1101}&\longmapsto&T_{1010},\\
T_{0110}&\longmapsto& T_{1111}&\longmapsto& T_{1001}&\longmapsto&T_{0110},\\
T_{0101}&\longmapsto& T_{1110}&\longmapsto& T_{1011}&\longmapsto&T_{0101}.
\end{array}
\right.
\ee
The permutations  $R_{G_0}(\alpha_1),\,R_{G_0}(\alpha_2),\,
R_{G_1}(\gamma_1),\,R_{G_1}(\gamma_2),\,R_{G_1}(\gamma_3),\,R_{G_2}(\beta_1)$
generate the action of $A_8\cong\GL_4(\F_2)$ on the indices $\vec a\in\F_2^4$ of the 
twisted ground states $T_{\vec a}$
in a form which is convenient for us \cite{tawe12}, but of course they
do not furnish a minimal set of generators.\\

At this point, it is important to keep in mind that we view any of our Kummer surfaces
as coming equipped with a preferred choice of generators for the lattice $\Lambda$ which defines
the underlying torus. The indexing of the twisted ground states by $\vec a\in\F_2^4$ 
is directly correlated to this choice.
As we explain in \cite{tawe11,tawe12}, this  induces 
a choice of common marking for all our Kummer surfaces, that is, an isometry between the lattice
$H_\ast(X,\Z)$ of integral homology on K3 with a standard lattice of signature $(4,20)$. 
This marking ensures that we can view the location of the Kummer lattice\footnote{For a Kummer
surface $X$ obtained by blowing up the $16$ singularities of $T/\Z_2$ for some complex
torus $T$, the \textsc{Kummer lattice} $\Pi$ is the smallest primitive sublattice of $H_\ast(X,\Z)$
which contains the classes of the sixteen rational curves obtained from the blow-up.}
$\Pi$ in $H_\ast(X,\Z)$ as fixed among all Kummer surfaces. 

Another lattice that plays an important role in our 
{previous} work is the Niemeier lattice $N$ with root lattice $A_1^{24}$.
We denote by $f_n,\, n\in\III$, with $\III=\{1,\,\ldots,\,24\}$ a choice of $24$ pairwise perpendicular
roots in $N$. The Mathieu group $M_{24}$ acts faithfully by lattice automorphisms
on the Niemeier lattice $N$, 
permuting the roots $\{f_1,\,\ldots,\,f_{24}\}$. The Niemeier lattice\footnote{For a lattice $\Gamma$,
by $\Gamma(n)$, $n\in\Z$, we denote the  $\Z$-module $\Gamma$ with quadratic form 
rescaled by the factor $n$.}
 $N(-1)$ contains a lattice
$\wt\Pi(-1)$ which is isometric to the Kummer lattice \cite[Prop.~2.3.3]{tawe11}.
In \cite[Prop.~2.3.4]{tawe11} we construct an explicit isometry $\iota\colon \Pi\longrightarrow\wt\Pi(-1)$, where
we define the lattice $\wt\Pi(-1)$ by choosing a reference octad $\OOO_9:=\{3,\,5,\,6,\,9,\,15,\,19,\,23,\,24\}$
from the Golay code,
\be\label{pidef}
\wt\Pi(-1):=\left\{ v\in N\mid \langle v,f_n\rangle = 0\;\; \fa n\in\OOO_9\right\}.
\ee
The lattice $H_\ast(X,\Z)$ is central to our description of symmetries, since  
every symmetry of a K3 surface $X$ induces a lattice automorphism on $H_\ast(X,\Z)$
which by the Torelli theorem for K3 surfaces determines the symmetry uniquely. The discussion
of symmetries hence reduces to a discussion of lattice automorphisms. In particular,
for every symmetry group $G$ of a K3 surface $X$, the action on the orthogonal complement $L_G$
of the invariant lattice, $L_G:=(H_\ast(X,\Z)^G)^\perp\cap H_\ast(X,\Z)$, uniquely determines
the action of $G$. If $X$ is a Kummer surface with induced dual K\"ahler class, then $G$ also induces a lattice automorphism
on the Kummer lattice $\Pi$, which in turn uniquely determines the action of $G$ on $X$. 
Since $L_G$ can never contain $\Pi$, and $\Pi$ in general does not contain $L_G$,
in \cite{tawe11} we propose to consider the lattice $M_G$ which is generated by $L_G$, $\Pi$,
and the vector $\upsilon_0-\upsilon$, where $\upsilon_0\in H_0(X,\Z)$ and $\upsilon\in H_4(X,\Z)$
with\footnote{On $H_\ast(X,\Z)$, we use the standard quadratic form induced by the intersection form.} 
$\langle\upsilon_0,\upsilon\rangle=1$. 

A key result is \cite[Theorem 3.3.7]{tawe11}, which in the case of Kummer surfaces with induced dual K\"ahler 
class generalizes and improves techniques introduced by Kondo \cite{ko98}. It
states that for each such Kummer surface with symmetry group $G$, the lattice $M_G$ mentioned above
can be  primitively embedded in  $N(-1)$ in such a way  that on the Kummer lattice $\Pi$, the embedding
induces the isometry $\iota\colon \Pi\longrightarrow\wt\Pi(-1)$ of \cite[Prop.~2.3.4]{tawe11}. We call
such an embedding $\iota_G \colon M_G\hookrightarrow N(-1)$ 
a \textsc{Niemeier marking}. The Niemeier marking allows us
to represent the group $G$ as a group of lattice automorphisms on the Niemeier lattice $N(-1)$, where
the action on the image of $M_G$ is prescribed by enforcing the Niemeier marking to be
$G$-equivariant, while on the orthogonal complement of $\iota_G(M_G)$, the group $G$ acts trivially. 
This allows us to elegantly realize $G$ as a subgroup of the Mathieu group $M_{24}$.
Its action on the Niemeier lattice $N(-1)$ is uniquely determined by its action on
the sublattice $\wt\Pi(-1)$.
Moreover, this construction allows us to combine symmetry groups of distinct Kummer surfaces
by means of their action on $N(-1)$. \\

As  was mentioned above, in \cite{tawe12} we show that
the combined action of all symmetry groups of Kummer K3s 
on $N(-1)$ -- and by the above, equivalently, on $\F_2^4$ -- yields the group
$(\Z_2)^4\rtimes A_8$. Here, the normal subgroup $(\Z_2)^4$ is the common translational
subgroup of all symmetry groups of Kummer surfaces, which on the labels $\vec a\in\F_2^4$ acts
by translation as in \req{translation}. This naturally fixes the action on the sublattice $\wt\Pi(-1)$
of the Niemeier lattice $N(-1)$. The translational group $(\Z_2)^4$ acts trivially on the orthogonal
complement of $\wt\Pi(-1)$ in $N(-1)$. 

The non-translational group $A_8\cong\GL_4(\F_2)$ acts on the labels 
$\vec a\in\F_2^4$ as the linear group $\GL_4(\F_2)$. In
terms of our favourite generators, this is encoded in \req{square}, \req{tetrahedron} and \req{triangle}, and
this naturally determines the action on the sublattice $\wt\Pi(-1)$ of the Niemeier lattice $N(-1)$. 
On the orthogonal complement of $\wt\Pi(-1)$ in $N(-1)$, the action is obtained from this by means
of the isomorphism $A_8\cong\GL_4(\F_2)$. The result is most conveniently described in terms of the
induced permutation of the roots in this lattice, which by \req{pidef} are labelled by our reference octad
$\OOO_9$. The permutations of the eight points of this octad that are induced by our symmetries
$\gamma_1,\,\gamma_2,\,\gamma_3,\,\alpha_1,\,\alpha_2,\,\beta_1$ are
\cite[(3.1),(3.2),(3.9)]{tawe12}
\be\label{octadaction}
\begin{array}{rclrclrcl}
\check\gamma_1 &=& (9,24)(15,19),\quad
\check\gamma_2 &=& (9,19)(15,24),\quad
\check\gamma_3 &=& (9,19,24),\\[5pt]
\check\alpha_1 &=& (6,19)(23,24),\quad 
\check\alpha_2 &=& (3,9)(23,24),\quad 
\check\beta_1 &=& (5,24,23).
\end{array}
\ee
%
%%%%%%%%%%%%%%%%%%%%%%%%%%%%%%%%%%%%%%%%%
\subsection{The proof of Proposition \ref{repona}}\label{sec:proofofprop}
%%%%%%%%%%%%%%%%%%%%%%%%%%%%%%%%%%%%%%%%%
To prove Prop.~\ref{repona}, let us first
consider the translational subgroup $(\Z_2)^4\subset\Aff(\F_2^4)$,
which acts by \req{translation} on the twisted ground states.
In Margolin's representation $M\colon\Aff(\F_2^4)\longrightarrow\End_\C(V_{45})$, 
this group is simultaneously diagonalised by the basis 
$\left\{P_X \mid X\in\{A,\,B,\,\ldots,\,N,\,O\}\right\}$ which yields the decomposition \req{45d}. 
Hence we need to use the common eigenbasis of the translational group $(\Z_2)^4$ 
on $\AAA$, which is given by
\be\label{eigenbasis}
\fa\vec a\in\F_2^4\colon\;
N_{\vec a}^{CFT} := {\textstyle{1\over4}} \sum_{\vec b\in\F_2^4} (-1)^{\langle\vec a,\vec b\rangle} T_{\vec b},
\;
\mbox{ such that }\; 
\iota_{\vec c}(N_{\vec a}^{CFT}) = (-1)^{\langle\vec c,\vec a\rangle}N_{\vec a}^{CFT}\;\;\fa \vec c\in\F_2^4,
\ee
where $\langle\cdot,\cdot\rangle$ denotes the standard scalar product on $\F_2^4$.
Hence an isomorphism of representations of $(\Z_2)^4$ between
$\AAA$ and the base $\BBB$ of $V_{45}$ is induced by identifying the translations $\iota_1,\,\ldots,\,\iota_4$
by the four standard basis vectors of $\F_2^4$ with any set of four generators of 
$(\Z_2)^4=\left\{\id,\,A^\prime,\,B^\prime,\ldots,\,N^\prime,O^\prime\right\}$ according to
Table \ref{tab:char} in the Appendix. We choose 
$$
\iota_1=A^\prime,\qquad\iota_2=B^\prime, \qquad\iota_3=D^\prime, \qquad\iota_4=F^\prime.
$$
Then from Table \ref{tab:char}  we read, for example, $\iota_1(N_A)=N_A,\,
\iota_2(N_A)=N_A,\, \iota_3(N_A)=-N_A,\,\iota_4(N_A)=-N_A$ and hence \req{eigenbasis}
implies $N_A=N_{0011}^{CFT}$. Altogether we have
\be\label{baseidentification}
\begin{array}{rclrclrclrclrcl}
N_A&=& N_{0011}^{CFT},&
N_B&=& N_{0001}^{CFT},&
N_C&=& N_{0010}^{CFT},&
N_D&=& N_{1001}^{CFT},&
N_E&=& N_{1010}^{CFT},\\[5pt]
N_F&=& N_{1110}^{CFT},&
N_G&=& N_{1101}^{CFT},&
N_H&=& N_{1000}^{CFT},&
N_I&=& N_{1011}^{CFT},&
N_J&=& N_{0111}^{CFT},\\[5pt]
N_K&=& N_{0100}^{CFT},&
N_L&=& N_{0110}^{CFT},&
N_M&=& N_{0101}^{CFT},&
N_N&=& N_{1111}^{CFT},&
N_O&=& N_{1100}^{CFT}.
\end{array}
\ee
To complete the proof of Prop.~\ref{repona}, it remains to
check that the induced action of $A_8\cong\GL_4(\F_2)$ on the orthonormal basis
$\left\{N_X\mid X\in\{A,\,B,\,\ldots,\,N,\,O\}\right\}$ of $\AAA$ {indeed yields the representation on $\AAA$ 
equivalent to the one on the
base $\BBB$ of  $V_{45}$} described in Appendix\,\ref{sec:margolin}, by 
means of the isomorphism induced by $N_X\mapsto P_X$ for all $X\in\{A,\,B,\,\ldots,\,N,\,O\}$. 

To do so,
one first calculates the permutation of $\{A,\,B,\,\ldots,\,N,\,O\}$ induced by
\req{square}, \req{tetrahedron}, \req{triangle}. For example, $\gamma_1$ 
interchanges $N_A$ and $N_C$, $N_D$ and $N_J$,\ldots, $N_I$ and $N_K$, and we write
$$
M(\gamma_1) \colon\quad (A,C)(D,J)(E,M)(G,O)(H,L)(I,K).
$$
Next, using the  array $\AAA_{even}$ of Table \ref{tab:Aeven} in the Appendix one checks that 
$s=(0,2)(1,5)$ is the unique even permutation  of the eight points $\{\infty,\,0,\,\ldots,\,6\}$,
such that conjugation by $s$ induces the permutation $(A,C)(D,J)(E,M)(G,O)(H,L)(I,K)$
of the rows of $\AAA_{even}$. We denote this by
$$
(A,C)(D,J)(E,M)(G,O)(H,L)(I,K)=\rho_{(0,2)(1,5)},
$$ 
and we proceed analogously for the other generators listed in
\req{square}, \req{tetrahedron}, \req{triangle}.
Altogether we obtain
\be\label{conjugateaction}
\begin{array}{lrcl}
M(\alpha_1) \colon&  (B,C)(D,L)(E,M)(F,G)(H,K)(I,J)&=& \rho_{(\infty,2)(1,6)},\\[5pt]
M(\alpha_2) \colon& (A,O)(B,K)(C,H)(D,L)(F,I)(G,J) &=& \rho_{(\infty,2)(0,4)},\\[5pt]
M(\gamma_1) \colon& (A,C)(D,J)(E,M)(G,O)(H,L)(I,K) &=& \rho_{(0,2)(1,5)},\\[5pt]
M(\gamma_2) \colon& (A,G)(C,O)(D,L)(E,I)(H,J)(K,M) &=& \rho_{(0,1)(2,5)},\\[5pt]
M(\gamma_3) \colon&(A,H,I)(B,N,F)(C,J,M)(D,K,G)(E,O,L)  &=& \rho_{(0,1,2)},\\[5pt]
M(\beta_1) \colon& (A,C,B)(D,N,L)(E,G,J)(F,M,I)(H,O,K) &=& \rho_{(\infty,3,2)}.
\end{array}
\ee
We can now confirm that \req{baseidentification} 
under $N_X\mapsto P_X$ for all $X\in\{A,\,\ldots,O\}$
furnishes an isomorphism between
the space $\AAA$ of twisted ground states, on the one hand, and the base $\BBB$ of $V_{45}$, on the other hand, 
as representations of 
$(\Z_2)^4\rtimes A_8$. Indeed,  by construction, this correctly identifies the
action of the translational subgroup $(\Z_2)^4$. As explained
in Section\,\ref{sec:recap},
the action of
the non-translational subgroup $A_8$ as permutation group  
is most efficiently determined by the action \req{octadaction}
of $A_8$
on the reference octad $\OOO_9$.  Finally,
the following bijection $\OOO_9\longrightarrow\{\infty,\,0,\,\ldots,\,6\}$ 
induces $\check\gamma_1\mapsto \rho_{(0,2)(1,5)},\ldots,
\check\beta_1\mapsto \rho_{(\infty,3,2)}$ and thus proves that $N_X\mapsto P_X$, $X\in\{A,\,\ldots,O\}$,
gives an isomorphism between representations of $(\Z_2)^4\rtimes A_8$:
$$
3\mapsto 4,\quad
5\mapsto 3,\quad
6\mapsto 6,\quad
9\mapsto 0,\quad
15\mapsto 5,\quad
19\mapsto 1,\quad
23\mapsto \infty,\quad
24\mapsto 2.
$$
\hspace*{\fill}$\blacksquare$
%%%%%%%%%%%%%%%%%%%%%%%%%%%%%%%%%%%%%%%%%%%%%%
\section{A twist in the $(\Z_2)^4\rtimes A_8$ action}
\label{sec:twist}
%%%%%%%%%%%%%%%%%%%%%%%%%%%%%%%%%%%%%%%%%%%%%
In the previous sections, we have constructed a $(45+\qu{45})$-dimensional space of states 
$V_{45}^{CFT}\oplus\qu V_{45}^{CFT}$, which is generic to all $\Z_2$-orbifold conformal field theories
on K3, and which accounts for the leading order term $90q$ in the function $e(\tau)$ of \req{e} that
governs the massive contributions to the elliptic genus \req{n4deco}. 
According to Prop.~\ref{our45}, this space decomposes as
$V_{45}^{CFT}=\WWW\otimes\AAA$, where $\WWW$ is a complex
$3$-dimensional vector space, while $\AAA$ is a $15$-dimensional space of 
twisted ground states which carries a faithful action of $(\Z_2)^4\rtimes A_8$. 
According to Prop.~\ref{repona}, this representation is equivalent to the representation $M$ of $(\Z_2)^4\rtimes A_8$ 
on the base $\BBB$ of the space $V_{45}=\VVV\otimes\BBB$ 
of \req{45d} which was constructed by Margolin \cite{ma93}.\\

We are now ready to explain how the properties of $V_{45}^{CFT}$  give evidence in favour of our surfing ideas, whose ultimate goal is to unravel the role of the full group $M_{24}$ in the context of  
Mathieu Moonshine, and so far provide a mathematical framework for the action of the maximal subgroup $(\mathbb Z_2)^4\rtimes A_8$.
Namely, the representation of $(\Z_2)^4\rtimes A_8$ generated by the action of
geometric symmetry groups on
$V_{45}^{CFT}$ can be identified in a natural way with the $45$-dimensional irreducible 
representation $M$ of {this maximal subgroup of} the Mathieu group $M_{24}$ described by Margolin on $V_{45}$. 
Recall however from Appendix\,\ref{sec:margolin} that the representation of $(\Z_2)^4\rtimes A_8$
on $V_{45}$ does not respect the tensor product structure $V_{45}=\VVV\otimes\BBB$ in a simple way. 
More precisely, we have the orthogonal direct decomposition 
$$ 
V_{45}=\VVV_A\oplus\VVV_B\oplus\ldots\oplus \VVV_N\oplus\VVV_O
$$
according to \req{45d}, where every $M(g)$ with $g\in(\Z_2)^4\rtimes A_8$ permutes the \textsc{fibers}
$\VVV_X$ 
of $V_{45}$,  and the induced maps $\VVV_X\longrightarrow \VVV_{M(g)(X)}$
depend non-trivially on $g$ and on $X\in\{A,\,B,\,\ldots,\,N,\,O\}$. Indeed, such a ``twist'' 
{(see Def.\ \ref{twist})} is necessary, 
since
there exists no nontrivial three-dimensional representation of $(\Z_2)^4\rtimes A_8$ that $\VVV$ could carry\footnote{The minimal dimension of a nontrivial linear 
representation of the alternating group $A_8$ is seven \cite{ccnpw85}.}.

This may appear counter-intuitive at first sight, as we have not observed a twist
 in our space of states $V_{45}^{CFT}=\WWW\otimes\AAA$. Indeed, according to 
 Prop.~\ref{our45}, the three-dimensional
space $\WWW$ is identified with the representation $\bf{3}$ of $SO(3)$
under $S\colon SO(3)\longrightarrow\End_\C(V_{45}^{CFT})$: as was 
explained in Section\,\ref{sec:generic45}, every symmetry group of a $\Z_2$-orbifold CFT 
on K3 which is induced from the underlying toroidal theory by geometric symmetry groups
acts as a subgroup of $SO(3)$ on $\WWW$ by means of the representation $\bf{3}$
of $SO(3)$. 

The key to this puzzle lies in the very groups that can occur as such
symmetry groups. As we recalled in Section\,\ref{sec:recap}, the maximal
groups in our setting are the symmetry groups $G_0=(\Z_2)^4\rtimes(\Z_2\times\Z_2)$ of the square
Kummer K3, $G_1=(\Z_2)^4\rtimes A_4$ of the tetrahedral 
Kummer K3, and $G_2=(\Z_2)^4\rtimes S_3$ of the triangular Kummer K3, 
where the common translational subgroup $(\Z_2)^4$ acts trivially on $\WWW$. In other words,
only the finite subgroups $G_T=\Z_2\times\Z_2,\, A_4$ and $S_3$ of $SO(3)$ are of relevance here, all
of which have standard nontrivial $3$-dimensional representations on $\WWW$,
induced by $G_T\subset SO(3)$, $S\colon SO(3)\longrightarrow\End_\C(\WWW)$. 

In order to understand how $V_{45}^{CFT}$
can be identified with Margolin's $V_{45}$ in a natural way, we first
need to prove that these three groups act on $V_{45}$ \textsl{without a twist} (see Def.~\ref{twist}).
We may view the twist in Margolin's representation on $V_{45}$ as yet another obstruction
for any known (orbifold) CFT on K3 to enjoy a larger geometric {symmetry.}\\

Let us briefly comment on the possibility of an action of a subgroup of
$(\Z_2)^4\rtimes A_8$ on $V_{45}=\VVV\otimes\BBB$ without a twist. 
Since the translational group $(\Z_2)^4$
acts trivially on $\VVV$, we can restrict our attention to subgroups of $A_8$, in accord with
Def.~\ref{twist}.
Recall from \cite{ma93} or from Appendix\,\ref{sec:margolin} 
that the action of $\tau\in A_8$ between any two fibers $\VVV_X$ and $\VVV_{Y}$ with $Y=M(\tau)(X)$
of $V_{45}$ is given in terms of a permutation $m_\tau^{(X,Y)}$ on the seven points 
$\{0,\,\ldots,\,6\}$ of the Fano plane $\P(\F_2^3)$. Such a permutation encodes a linear map
$\VVV_X\longrightarrow\VVV_{Y}$, because a preferred set of generators of 
the vector spaces $\VVV_X$ and $\VVV_{Y}$, namely the root vectors of the lattice 
$\Lambda_3^{b7}$, is conveniently encoded in terms of lines with marked points  in $\P(\F_2^3)$.
The precise labelling by $\{0,\,\ldots,6\}$ of the $A_8$ permutations of cycle shape $2^4$ in
the rows $X$
and $Y$ of the array $\AAA_{even}$ of Table\,\ref{tab:Aeven} 
thus corresponds to a specific choice of generators for the vector spaces $\VVV_X$ and $\VVV_{Y}$.
This in particular means that a relabelling of a row $X$ simply
amounts to a change of basis in $\VVV_X$, as long as the relabelling respects the projective
linear structure of $\P(\F_2^3)$. 
It follows that for every $\tau\in A_8$, 
there exists a labelling  of the array $\AAA_{even}$ such that $\tau$ acts without a twist
according to Def.~\ref{twist}, i.e.\ such that the permutations $m_\tau^{(X, M(\tau)(X))}$ agree
for all $X\in\{A,\,B,\,\ldots,\,N,\,O\}$. That the subgroups $G_T=\Z_2\times\Z_2,\, A_4$ and $S_3$
of $A_8$ which are relevant to our construction can act without a twist is a nontrivial claim which
we need to prove:
\begin{prop}\label{nosquaretwist}
Consider the square Kummer surface $X_0$ with symmetry group 
$G_0=(\Z_2)^4\rtimes (G_T)_0$, where $(G_T)_0=\Z_2\times\Z_2$ with generators 
$\alpha_1,\,\alpha_2$ as in \mbox{\rm\cite[(4.23)]{tawe11}},
whose action  on the base $\BBB$ of $V_{45}$ is given in \mbox{\rm\req{conjugateaction}}.
Then the group $(G_T)_0$ acts without a twist on $V_{45}$.
\end{prop}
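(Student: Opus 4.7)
The plan is to verify the no-twist property by a finite computation using Margolin's array $\AAA_{even}$. According to the construction summarized in the Appendix, each $g \in A_8$ acts on $V_{45}=\VVV\otimes\BBB$ by a permutation $M(g)$ of the rows $X \in \{A,B,\ldots,N,O\}$ of $\AAA_{even}$, together with a fiber map $\VVV_X \to \VVV_{M(g)(X)}$ encoded by a permutation $m_g^{(X,M(g)(X))}$ of the seven points of $\P(\F_2^3)$; this latter permutation is read off by tracking how the labels of the four $2$-cycles of shape $2^4$ in row $X$ are mapped, under conjugation by $g$, to those in row $M(g)(X)$. The group $(G_T)_0$ acts without a twist precisely when, for each generator $g \in \{\alpha_1,\alpha_2\}$, the permutations $m_g^{(X,M(g)(X))}$ agree across all six pairs $(X,M(g)(X))$ listed in \req{conjugateaction}.

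First I would fix, for each row $X$, the canonical labelling of the points $\{0,\ldots,6\}$ inherited from $\AAA_{even}$, and then mechanically compute the six permutations $m_{\alpha_1}^{(X,M(\alpha_1)(X))}$ for the pairs $(B,C)$, $(D,L)$, $(E,M)$, $(F,G)$, $(H,K)$, $(I,J)$, using that $\alpha_1$ corresponds to conjugation by $(\infty,2)(1,6)$ on the eight points $\{\infty,0,\ldots,6\}$. I would repeat the calculation for $\alpha_2$, which corresponds to $(\infty,2)(0,4)$, on the six pairs $(A,O)$, $(B,K)$, $(C,H)$, $(D,L)$, $(F,I)$, $(G,J)$. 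If the six permutations already coincide in each case, no relabelling is needed and the proof is complete; otherwise, I exploit the fact that any row $X$ may be relabelled by an element $\sigma_X \in \PSL_3(\F_2) = \Aut(\P(\F_2^3))$ without changing the underlying representation.

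The substantive step is to find a family $(\sigma_X)_{X \in \{A,\ldots,O\}}$ of such relabellings which simultaneously absorbs the twists for both $\alpha_1$ and $\alpha_2$. Concretely, I would choose $\sigma_A = \id$ arbitrarily and propagate along the orbits of $\langle M(\alpha_1), M(\alpha_2)\rangle$ on the fifteen rows, each edge of the orbit graph fixing the next $\sigma_X$ uniquely through the equation $\sigma_{M(g)(X)}^{-1} \circ m_g^{(X,M(g)(X))} \circ \sigma_X = \pi_g$, where $\pi_g$ denotes the desired uniform permutation for $g\in\{\alpha_1,\alpha_2\}$. The orbits are easily enumerated from \req{conjugateaction}: for instance $A \stackrel{\alpha_2}{\leftrightarrow} O$, while $(B,C)$, $(D,L)$, and so on are linked by $\alpha_1$ and $\alpha_2$ in a small graph whose closed cycles produce cocycle-type consistency conditions.

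The main obstacle is precisely the verification that these cocycle conditions vanish: a priori, propagating the relabelling along two different paths between the same two rows could yield incompatible $\sigma_X$'s, in which case no twist-free labelling would exist. However, I expect the conditions to close up, for two related reasons. First, since $\alpha_1$ and $\alpha_2$ commute as elements of the symmetry group $G_0$ of the square Kummer surface, the only nontrivial cycles in the orbit graph are $(\alpha_1\alpha_2)^2$-loops which produce a single consistency check per orbit. Second, on the CFT side, Proposition \ref{our45} guarantees that $(G_T)_0$ acts on $V_{45}^{CFT}=\WWW\otimes\AAA$ through an honest tensor factor on $\WWW$, so that the absence of a twist is forced on any representation equivalent to Margolin's restricted to $G_0$. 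Once the $\sigma_X$ are constructed, the uniform permutations $\pi_{\alpha_1}$ and $\pi_{\alpha_2}$ define the untwisted fiber action of the two generators, and the third nontrivial element $\alpha_1\alpha_2$ inherits the untwisted property automatically as the composition $\pi_{\alpha_1}\pi_{\alpha_2}$.
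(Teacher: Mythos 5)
Your proposal is correct and its operative branch is exactly the paper's proof: one checks directly that with the labelling already printed in Table\,\ref{tab:Aeven} (which is chosen "relative to the square torus symmetry"), the six fiber permutations $m_{\alpha_1}^{(X,\tau_1(X))}$ all equal $(0,3)(4,5)$ and the $m_{\alpha_2}^{(X,\tau_2(X))}$ all equal $(0,3)(1,6)$, so no relabelling is needed and your contingency machinery never comes into play. One caution about that contingency branch: the appeal to the CFT side (Prop.~\ref{our45}) to argue that the cocycle conditions "must" close is circular in the paper's logical order, since the equivalence between the CFT representation and Margolin's restricted representation is only established later, in Prop.~\ref{equivalence}, and that proof itself relies on the present proposition; the relabelling argument, if it were needed, would have to be closed by direct computation (as the paper in fact does for the tetrahedral and triangular cases).
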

\begin{proof}
We claim that the labelling of the array $\AAA_{even}$ of  Table\,\mbox{\rm\ref{tab:Aeven}}
exhibits no element of $(G_T)_0$ with a twist.
The proof is a straightforward calculation, where we check that for
none of the generators $\alpha_1,\,\alpha_2$, there is a twist.

From \req{conjugateaction} we read that the generator $\alpha_1$ of $(G_T)_0$
acts by the conjugation $\rho_{(\infty,2)(1,6)}$
on the rows of the array $\AAA_{even}$, and it induces the permutation $\tau_1:=(B,C)(D,L)(E,M)(F,G)(H,K)(I,J)$
of the rows. One then checks for every pair $(X,\tau_1(X))$ with
$X\in\{A,\,B,\,\ldots,\,N,\,O\}$ that the induced permutation 
$m_{\tau_1}^{(X,\tau_1(X))}$ of Fano plane labels in the rows is $(0,3)(4,5)$, independently of $X$. 
In other words, $\alpha_1$ acts without a twist.
For example, $\tau_1$ maps row $A$ into itself, where conjugation by $(\infty,2)(1,6)$
interchanges the first entry $(\infty,0)(1,5)(2,3)(4,6)$, labelled $0$, with the entry 
$(\infty,3)(0,2)(1,4)(5,6)$, labelled $3$. 

Similarly, from \req{conjugateaction} we read that the generator $\alpha_2$ of $(G_T)_0$
acts by the conjugation $\rho_{(\infty,2)(0,4)}$
on the rows of the array $\AAA_{even}$, that is by $\tau_2:=(A,O)(B,K)(C,H)(D,L)(F,I)(G,J)$. 
One then checks for every pair $(X,\tau_2(X))$  with
$X\in\{A,\,B,\,\ldots,\,N,\,O\}$ that the induced permutation 
$m_{\tau_2}^{(X,\tau_2(X))}$ of Fano plane labels in the rows is $(0,3)(1,6)$, independently of $X$. 
In other words, $\alpha_2$ acts without a twist.
\end{proof}

Since the square Kummer surface has maximal symmetry, we cannot expect the labelling 
of $\AAA_{even}$ given in Table\,\ref{tab:Aeven} to exhibit an action without twist for 
the other maximal symmetry groups $G_1,\,G_2$ of Kummer surfaces as well. Nevertheless,
we have
\begin{prop}\label{notetratwist}
Consider the tetrahedral Kummer surface $X_1$ with symmetry group 
$G_1=(\Z_2)^4\rtimes (G_T)_1$, where $(G_T)_1=A_4$ with generators 
$\gamma_1,\,\gamma_2,\,\gamma_3$ as in \mbox{\rm\cite[(4.10),(4.11)]{tawe11}},
whose action  on the base $\BBB$ of $V_{45}$ is given in \mbox{\rm\req{conjugateaction}}.
Then the group $(G_T)_1$ acts without a twist on $V_{45}$.
\end{prop}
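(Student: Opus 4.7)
The strategy mirrors that of Prop.~\ref{nosquaretwist}, except that the labelling of Table\,\ref{tab:Aeven} must be modified in advance. As observed above, such a row-by-row relabelling amounts only to a change of basis in the corresponding fiber $\VVV_X$ and does not alter the underlying representation; the freedom at each row is an element $\sigma_X \in \mathrm{PGL}(3,\F_2)$ acting on the Fano plane labels.

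First, I would determine the orbits of $(G_T)_1 = A_4$ on $\{A, B, \ldots, N, O\}$ from \req{conjugateaction}: the three-element orbit $\{B, F, N\}$, which is cyclically permuted by $\gamma_3$ and fixed pointwise by $\gamma_1$ and $\gamma_2$, and the free orbit of size twelve consisting of the remaining rows. On each orbit I would fix a reference row retaining its Table\,\ref{tab:Aeven} labelling, and propagate a relabelling to the remaining rows by transporting along a spanning tree of the Cayley graph of $A_4$ acting on that orbit.

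The consistency of this propagation is automatic on the free orbit. Indeed, by the composition law $m_{gh}^{(X_0, ghX_0)} = m_g^{(hX_0, ghX_0)}\, m_h^{(X_0, hX_0)}$ satisfied by Margolin's construction, any chosen homomorphism $\phi\colon A_4 \to \mathrm{PGL}(3,\F_2)$ can be realized by setting $\sigma_X := m_g^{(X_0, X)} \circ \phi(g)^{-1}$ for the unique $g \in A_4$ with $gX_0 = X$; a short calculation shows that the resulting action of every element of $A_4$ between any two rows of the free orbit is then given by the single permutation $\phi(g)$, independently of $X$. The substantive content of the proof is concentrated in the small orbit $\{B, F, N\}$, where each row carries the non-trivial stabilizer $V_4 = \langle \gamma_1, \gamma_2 \rangle$. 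There, one must verify that the Fano permutations $m_{\gamma_1}^{(X,X)}$ and $m_{\gamma_2}^{(X,X)}$ read off from Margolin's data are simultaneously conjugate, via a single $\sigma_X$, to designated permutations $\phi(\gamma_1)$ and $\phi(\gamma_2)$ for every $X \in \{B, F, N\}$, and that the value of $\phi(\gamma_3)$ forced by the three relabellings and Margolin's maps $m_{\gamma_3}^{(B,N)}, m_{\gamma_3}^{(N,F)}, m_{\gamma_3}^{(F,B)}$ is well-defined.

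The main obstacle is thus this consistency check at the $V_4$-stabilized rows, together with the compatibility of the resulting $\phi$ with the homomorphism chosen on the free orbit. I expect to resolve it by direct inspection of Table\,\ref{tab:Aeven}: the cycle shapes of $m_{\gamma_1}^{(X,X)}$ and $m_{\gamma_2}^{(X,X)}$ should coincide across $X \in \{B, F, N\}$ and extend to a faithful image of $V_4$ in $\mathrm{PGL}(3, \F_2)$, which pins down $\phi|_{V_4}$ up to conjugation. The residual freedom in $\sigma_B, \sigma_F, \sigma_N$ (namely the centralizer of this image) must then be used to match the cyclic $\gamma_3$-data and to glue with the free-orbit labelling. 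Once these finitely many conditions are verified, Def.~\ref{twist} yields the conclusion.
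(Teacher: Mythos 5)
Your reduction is sound and in fact more conceptual than the paper's argument: you correctly identify the orbit structure of $(G_T)_1=A_4$ on the fifteen rows (the $3$-orbit $\{B,F,N\}$ with stabilizer $V_4=\langle\gamma_1,\gamma_2\rangle$ and a free $12$-orbit), and the transport argument via the cocycle identity $m_{gh}^{(X,ghX)}=m_g^{(hX,ghX)}\circ m_h^{(X,hX)}$ does trivialize the twist on the free orbit for any chosen homomorphism $\phi\colon A_4\to L_3(2)$. (Two small points you should make explicit: the transported relabellings $\sigma_X$ lie in $L_3(2)$ only because both $m_g^{(X_0,X)}$ and $\phi(g)$ preserve the projective structure, so $\phi$ must genuinely land in $L_3(2)\subset S_7$; and by normality of $V_4$ in $A_4$ together with the cocycle identity, the stabilizer conditions at $N$ and $F$ follow automatically from the one at $B$ once you transport along $\gamma_3$, so your check on the small orbit collapses to a single condition at $B$.)

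The gap is that the one step where the proposition could actually fail is left unexecuted. After your reduction, everything hinges on the homomorphism $\psi_B\colon V_4\to L_3(2)$ given by $\gamma_1\mapsto m_{\gamma_1}^{(B,B)}$, $\gamma_2\mapsto m_{\gamma_2}^{(B,B)}$ read off from Table\,\ref{tab:Aeven}: you need $\psi_B$ to be conjugate to $\phi|_{V_4}$ for some homomorphism $\phi\colon A_4\to L_3(2)$. Since every homomorphism $A_4\to L_3(2)$ restricts to $V_4$ either faithfully or trivially, this fails, for instance, if $\psi_B$ has kernel of order two; and twist-freeness is emphatically not automatic here, since the full group $(\Z_2)^4\rtimes A_8$ does act with a twist. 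So the ``direct inspection'' you defer is the entire substance of the proof, not a routine afterthought. The paper discharges it by brute force: it exhibits an explicit relabelling of all fifteen rows, checks it respects the linear structure of each $\P_X(\F_2^3)$, and verifies that the generators then act by $(0,2)(1,4)$, $(0,1)(2,4)$ and $(0,1,2)(3,5,6)$ independently of $X$. To complete your version you would compute $m_{\gamma_1}^{(B,B)}$ and $m_{\gamma_2}^{(B,B)}$ from the table, confirm they generate a Klein four-group in $L_3(2)$ (here, the one with common axis a line of the Fano plane, whose normalizer $S_4$ supplies the required order-three element matching the outer action of $\gamma_3$), and then invoke your transport argument; as it stands, the proposal is a correct and efficient reduction to a finite verification, but not yet a proof.
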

\begin{proof}

We claim that the following relabelling yields the action of $(G_T)_1$ without a twist,
where for each row of Table\,\mbox{\rm\ref{tab:Aeven}}, we list the seven labels from left to right: 
$$
\begin{array}{rlllllll}
\bl{A}:& 0& 1 &2& 3 &4& 5& 6\\
\bl{B}:& 2 &1 &3 &0 &6& 4& 5\\
\bl{C}:&3 &5 &0& 2& 1& 4& 6\\
\bl{D}:& 6& 5& 2& 0& 1& 4 &3\\
\bl{E}:& 0& 2& 3 &1& 6 &4& 5\\
\bl{F}:& 0& 1 &5 &6& 4& 2& 3\\
\bl{G}:& 0& 1& 4& 3& 5& 2& 6\\
\bl{H}:& 0 &1& 2 &6 &4& 3 &5
\end{array}
\qquad\qquad\qquad
\begin{array}{rlllllll}
&&&\\
\bl{I}:& 0 &1& 2& 5& 4& 6& 3\\
\bl{J}:& 2& 1& 6 &5 &3 &4 &0\\
\bl{K}:& 5& 3& 2& 6& 0 &4& 1\\
\bl{L}:& 6& 4& 2& 5& 1& 0& 3\\
\bl{M}:& 4& 3& 2& 6& 0 &1& 5\\
\bl{N}:& 0& 2& 6& 5& 3& 4& 1\\
\bl{O}:& 5& 6& 1& 3& 0& 4& 2
\end{array}
$$
One checks that this relabelling respects the linear structure of each 
$\P_X(\F_2^3)$ with $X\in\{A,\,B,\,\ldots,\,N,\,O\}$.
The rest of the proof is analogous to the proof of Prop.~\ref{nosquaretwist}:
From \req{conjugateaction} one reads the action of the 
generators $\gamma_1,\,\gamma_2,\,\gamma_3$ of $(G_T)_1$
on the rows of the array $\AAA_{even}$ and checks
that the induced permutation 
$m_{\gamma_1}^{(X, M(\gamma_1)(X))}$ of Fano plane labels in the rows is $(0,2)(1,4)$, independently of $X$,
while $m_{\gamma_2}^{(X, M(\gamma_2)(X))}$ yields $(0,1)(2,4)$, independently of $X$, and
$m_{\gamma_3}^{(X, M(\gamma_3)(X))}$ yields $(0,1,2)(3,5,6)$, independently of $X$.
In other words, $(G_T)_1$ acts without a twist.
\end{proof}

The final case that we need to study works analogously:
\begin{prop}\label{notriangletwist}
Consider the triangular Kummer surface $X_2$ with symmetry group 
$G_2=(\Z_2)^4\rtimes (G_T)_2$, where $(G_T)_2=S_3$ with generators 
$\beta_1,\,\beta_2=\alpha_2$ as in \mbox{\rm\cite[(1.9)]{tawe12}},
whose action  on the base $\BBB$ of $V_{45}$ is given in \mbox{\rm\req{conjugateaction}}.
Then the group $(G_T)_2$ acts without a twist  on $V_{45}$.
\end{prop}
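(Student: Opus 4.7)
The plan is to mirror the structure of the proofs of Propositions \ref{nosquaretwist} and \ref{notetratwist}: exhibit an explicit relabelling of the seven Fano-plane points in each row of the array $\AAA_{even}$ of Table \ref{tab:Aeven}, verify that the relabelling respects the projective linear structure of each $\P_X(\F_2^3)$, and then check that both generators $\beta_1$ and $\beta_2=\alpha_2$ induce permutations $m^{(X,M(g)(X))}_g$ of $\{0,\ldots,6\}$ that are independent of the row $X\in\{A,B,\ldots,N,O\}$.

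Concretely, first I would read off from \req{conjugateaction} the two row-permutations that must be realized coherently: $M(\alpha_2)=(A,O)(B,K)(C,H)(D,L)(F,I)(G,J)$ from $\rho_{(\infty,2)(0,4)}$, and $M(\beta_1)=(A,C,B)(D,N,L)(E,G,J)(F,M,I)(H,O,K)$ from $\rho_{(\infty,3,2)}$. The latter is a product of five $3$-cycles (leaving only $N$ fixed, which automatically acts trivially on labels), so the no-twist condition means that along each such $3$-cycle of rows there must exist a single element of $\PSL(2,7)\subset S_7$ conjugating the labelled cycle-shape-$2^4$ entries of one row into those of the next. The same must hold simultaneously along the six $2$-cycles of $M(\alpha_2)$. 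Since $\alpha_2$ and $\beta_1$ together generate $S_3$, it suffices to verify the uniformity for these two generators.

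Next I would construct the relabelling. The freedom available is the action of $(\PSL(2,7))^{15}$, one factor for each row, which is the subgroup of $S_7^{15}$ preserving the Fano-plane structure on each $\P_X(\F_2^3)$. The constraints cut out by the two generators amount to an orbit-by-orbit system: along each row-orbit of $\langle\beta_1,\alpha_2\rangle$ on $\{A,\ldots,O\}$, one may fix the labelling of one representative row and then the labellings on the remaining rows of that orbit are determined (up to the centralizer of the resulting uniform $\{0,\ldots,6\}$-permutations) by the requirement of uniformity. Writing this out gives an explicit table analogous to the one in the proof of Proposition \ref{notetratwist}, which one then displays and verifies case by case.

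The main obstacle is combinatorial: one must certify that the constraint system is actually solvable, i.e.\ that along every orbit the compositions of the row-to-row linear maps dictated by $\beta_1$ and $\alpha_2$ close up consistently when one returns to the starting row. Equivalently, for each generator of $\pi_1$ of the ``row quotient graph'' of $\langle\beta_1,\alpha_2\rangle$, the associated holonomy in $\PSL(2,7)$ must be trivial. This is exactly the obstruction ruled out at the level of $(\Z_2)^4\rtimes A_8$ by the twist in Margolin's representation, so the content of the proposition is that restricting to the subgroup $\langle\beta_1,\alpha_2\rangle=S_3$ kills the obstruction. Once the relabelling is in hand, the verification that $m^{(X,M(\alpha_2)(X))}_{\alpha_2}$ is one fixed involution of $\{0,\ldots,6\}$ and that $m^{(X,M(\beta_1)(X))}_{\beta_1}$ is one fixed element of order $3$, independently of $X$, is a finite check entirely analogous to the ones already carried out, completing the proof.
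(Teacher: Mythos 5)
Your strategy coincides with the paper's: exhibit a row-by-row relabelling of the array $\AAA_{even}$ that respects the projective linear structure of each $\P_X(\F_2^3)$, and then check that the two generators $\beta_1$ and $\beta_2=\alpha_2$ induce permutations of the labels $\{0,\ldots,6\}$ that do not depend on the row. The paper's proof consists precisely of writing down such a relabelling (a table of fifteen rows of seven labels each) and recording that the induced permutations are $(0,2,3)(4,5,6)$ for $\beta_1$ and $(0,3)(4,5)$ for $\beta_2$, independently of $X$.

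The gap is that your proposal stops exactly where the proof has to begin. You correctly identify that the entire content of the proposition is the solvability of the consistency (holonomy) system along the orbits of $\langle\beta_1,\alpha_2\rangle$ on the fifteen rows, and you then assert that this ``is a finite check'' without performing it or producing the relabelling that witnesses it. Since the twist obstruction is precisely what can fail here --- and does fail for the full group $(\Z_2)^4\rtimes A_8$ --- the proposition is not established until that witness is exhibited; the method you describe for propagating labels along orbits is sound, but nothing in your argument certifies that the loops close up. Two smaller inaccuracies: $M(\beta_1)=(A,C,B)(D,N,L)(E,G,J)(F,M,I)(H,O,K)$ consists of five $3$-cycles covering all fifteen rows, so it fixes no row (in particular not $N$; the rows fixed by $M(\alpha_2)$ are $E$, $M$, $N$); and a row being fixed would in any case not make the induced label permutation trivial --- compare $\alpha_1$ on row $A$ in the proof of Proposition \ref{nosquaretwist}, where the fixed row carries the nontrivial permutation $(0,3)(4,5)$.
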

\begin{proof}

We work analogously to the proof of Prop.~\ref{notetratwist} and
claim that the following relabelling yields the action of $(G_T)_2$ without a twist: 
$$
\begin{array}{rlllllll}
\bl{A}:& 0& 1 &2& 3 &4& 5& 6\\
\bl{B}:& 2 &6 &3 &0 &1& 5& 4\\
\bl{C}:&3 &4 &0& 2& 6& 5& 1\\
\bl{D}:& 4& 2& 3& 6& 1& 0 &5\\
\bl{E}:& 6& 1& 2 &4& 3 &5& 0\\
\bl{F}:& 2& 0 &3 &6& 1& 4& 5\\
\bl{G}:& 1& 6& 5& 3& 2& 4& 0\\
\bl{H}:& 1 &4& 2 &5 &0& 6 &3
\end{array}
\qquad\qquad\qquad
\begin{array}{rlllllll}
&&&\\
\bl{I}:& 1 &6& 5& 2& 4& 3& 0\\
\bl{J}:& 6& 1& 2 &3 &0 &5 &4\\
\bl{K}:& 4& 1& 2& 3& 6 &5& 0\\
\bl{L}:& 5& 2& 0& 3& 1& 4& 6\\
\bl{M}:& 0& 6& 2& 5& 1 &4& 3\\
\bl{N}:& 1& 2& 4& 0& 6& 3& 5\\
\bl{O}:& 4& 1& 6& 3& 0& 5& 2
\end{array}
$$
One checks that this relabelling respects the linear structure of each 
$\P_X(\F_2^3)$ with $X\in\{A,\,B,\,\ldots,\,N,\,O\}$.
From \req{conjugateaction} one reads the action of the  
generators $\beta_1,\,\beta_2=\alpha_2$ of $(G_T)_2$
on the rows of the array $\AAA_{even}$ and checks
that the induced permutation 
$m_{\beta_1}^{(X, M(\beta_1)}(X))$ of Fano plane labels in the rows is $(0,2,3)(4,5,6)$, independently of $X$,
while $m_{\beta_2}^{(X, M(\beta_2)(X))}$ yields $(0,3)(4,5)$, independently of $X$.
In other words, $(G_T)_2$ acts without a twist.
\end{proof}

In summary, for each of the three maximal symmetry groups 
$G_k=(\Z_2)^4\rtimes (G_T)_k$ of Kummer surfaces, $k\in\{0,1,2\}$,
there exists a consistent labelling of the array $\AAA_{even}$ in Table\,\ref{tab:Aeven}, such that
no twist is exhibited for the action of
 the group $(G_T)_k$ on $V_{45}$.  We furthermore find
\begin{prop}\label{equivalence}
Consider the three maximal symmetry groups 
$G_k=(\Z_2)^4\rtimes (G_T)_k$ of Kummer surfaces, $k\in\{0,1,2\}$,
whose generators are given in \mbox{\rm\cite[(4.10), (4.11), (4.23)]{tawe11}} and 
\mbox{\rm\cite[(1.9)]{tawe12}}. For each group $G_k$,
the natural action on $V_{45}^{CFT}$, which is induced by the respective symmetries
of the states of  $\Z_2$-orbifold conformal field theories
according to \mbox{\rm\req{actiondetails}},
is equivalent to the action of $G_k$ viewed as a subgroup of $(\Z_2)^4\rtimes A_8$
in Margolin's representation $M\colon G_k\longrightarrow\End_\C(V_{45})$.
\end{prop}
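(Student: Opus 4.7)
The plan is to reduce the equivalence of two $G_k$-representations to a comparison of a single three-dimensional representation on each side, factoring out the fifteen-dimensional piece where Prop.~\ref{repona} already does the work. First, Prop.~\ref{repona} provides an $(\Z_2)^4\rtimes A_8$-equivariant isomorphism $\Phi\colon\AAA\longrightarrow\BBB$ given by $N_X\mapsto P_X$, so in particular the $G_k$-action on $\AAA$ is identified with the restriction of $M$ to $G_k$ acting on $\BBB$. Next, Props.~\ref{nosquaretwist}, \ref{notetratwist}, \ref{notriangletwist} state that, after the labellings of the array $\AAA_{even}$ chosen there, each generator $g_T$ of $(G_T)_k$ acts on $V_{45}=\VVV\otimes\BBB$ by the same Fano plane permutation $m_{g_T}^{(X,M(g_T)(X))}$ independently of $X$. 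This common permutation of $\{0,\ldots,6\}$ permutes the seven root vector generators of $\VVV$ and therefore determines a well-defined three-dimensional representation $\rho_k\colon(G_T)_k\longrightarrow\End_\C(\VVV)$ satisfying $M(g_T)(v\otimes P_X)=\rho_k(g_T)(v)\otimes M(g_T)(P_X)$. Combined with the trivial action of the translational subgroup $(\Z_2)^4$ on $\VVV$, this gives $M|_{G_k}=\rho_k\otimes M|_{\BBB}$ as $G_k$-representations on $\VVV\otimes\BBB$.

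On the CFT side, \req{actiondetails} already expresses $R^{CFT}_{G_k}(\vec c,g_T)$ as the tensor product $S(g_T)\otimes P_{G_k}(\vec c,g_T)$, where $P_{G_k}$ denotes the affine action on the twisted ground states. Under $\Phi$ the second factor coincides with the restriction of $M$ to $\BBB$, so the proposition reduces to constructing a $(G_T)_k$-equivariant isomorphism $\Psi_k\colon\WWW\longrightarrow\VVV$ between $S|_{(G_T)_k}$ and $\rho_k$.

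The main obstacle lies in this final step. Since both $\WWW$ and $\VVV$ are three-dimensional complex vector spaces and $(G_T)_k$ is finite, equivalence is detected by equality of characters on a complete set of conjugacy class representatives. The character of $S|_{(G_T)_k}$ is that of the standard $SO(3)$-triplet restricted to the relevant rotation subgroup, taking value $-1$ on every order-two element (corresponding to a rotation by $\pi$) and value $0$ on every order-three element (corresponding to a rotation by $2\pi/3$). The character of $\rho_k$ is computed directly by reading off the Fano plane permutations displayed in the proofs of Props.~\ref{nosquaretwist}, \ref{notetratwist}, \ref{notriangletwist} and calculating the trace of the induced linear map on $\VVV$ via its action on any fixed triple of linearly independent root generators. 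For $(G_T)_0=\Z_2\times\Z_2$, $(G_T)_1=A_4$ and $(G_T)_2=S_3$ one verifies these traces match case by case on each generator, and hence on each conjugacy class. Combined with the tensor product decomposition established above, this yields the desired $G_k$-equivariant isomorphism between $V_{45}^{CFT}$ equipped with $R^{CFT}_{G_k}$ and $V_{45}$ equipped with the restriction of $M$ to $G_k$.
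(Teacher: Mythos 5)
Your overall architecture coincides with the paper's: Prop.~\ref{repona} identifies the base representations, the no-twist Propositions~\ref{nosquaretwist}--\ref{notriangletwist} let you split off a genuine three-dimensional representation $\rho_k$ of $(G_T)_k$ on the fiber $\VVV$, and the problem reduces to comparing $\rho_k$ with $S|_{(G_T)_k}$ on $\WWW$. The paper settles this last comparison with explicit matrices (a simultaneous eigenbasis argument for $\Z_2\times\Z_2$, irreducibility plus uniqueness of the $3$-dimensional irreducible representation for $A_4$, and an explicit intertwiner for $S_3$), whereas you propose equality of characters. That criterion is legitimate, and your character values for $S$ ($-1$ on $\pi$-rotations, $0$ on order-three rotations) are correct.

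However, the inference ``traces match on each generator, and hence on each conjugacy class'' is not valid as stated, and this is where your proof has a genuine gap. For $(G_T)_0=\Z_2\times\Z_2$ every element is its own conjugacy class, so $\alpha_1,\alpha_2$ do not exhaust the nontrivial classes, and two inequivalent three-dimensional representations of $\Z_2\times\Z_2$ can agree on the traces of both generators: the sum of the three nontrivial characters and the sum of the trivial character with two copies of the character taking value $-1$ on both generators both have trace $-1$ on $\alpha_1$ and on $\alpha_2$, yet they differ (trace $-1$ versus $3$) on $\alpha_1\alpha_2$. You must therefore also compute the trace of $\rho_0(\alpha_1\alpha_2)$, or argue as the paper does that every nontrivial element acts as a nontrivial involution of determinant one, forcing eigenvalues $\{1,-1,-1\}$. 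Similarly, $A_4$ has two conjugacy classes of $3$-cycles and your listed generators represent only one of them; here the gap can be closed (trace $0$ on an order-three element of a three-dimensional representation forces eigenvalues $1,\zeta,\zeta^2$ and hence trace $0$ on its inverse; alternatively, trace $-1$ on a double transposition already excludes a sum of one-dimensional representations, all of which restrict trivially to the Klein four-group), but this must be said. Only for $S_3$ do the generators meet every nontrivial class. Finally, be aware that ``the trace of the induced linear map on $\VVV$'' cannot be read off from the Fano-plane permutation alone: the permutation determines the image of each root vector only up to sign, and the signs must first be fixed by the determinant-one and point-frame conditions -- this is precisely the computation that produces the explicit matrices $M(\gamma_1)$, $M(\gamma_3)$, $M(\beta_1)$, $M(\beta_2)$ in the paper's proof, and your sketch leaves it implicit.
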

\begin{proof}
Prop.~\ref{repona} implies that Margolin's representation $M$ of
$G_k$ on the base $\BBB$ of
$V_{45}=\VVV\otimes\BBB$ is equivalent to the representation
$R_{G_k}$ of $G_k$
on the base $\AAA$ of $V_{45}^{CFT}=\WWW\otimes\AAA$.
Moreover,  the translational subgroup $(\Z_2)^4$  of $(\Z_2)^4\rtimes A_8$
acts trivially both on $\WWW$ and $\VVV$ in the tensor products
$V_{45}^{CFT}=\WWW\otimes\AAA$
and $V_{45}=\VVV\otimes\BBB$. Finally, by Props.~\ref{nosquaretwist}, \ref{notetratwist}
and \ref{notriangletwist} along with Prop.~\ref{our45}, $G_k$ acts without a twist
both on $V_{45}$ and on $V_{45}^{CFT}$ for each $k\in\{0,1,2\}$. 
Hence for each $G_k,\, k\in\{0,1,2\}$, it remains to be shown that the 
representation $M$ of the subgroup $(G_T)_k$ on the fibers 
of $V_{45}^{CFT}$ is equivalent to the representation $S\colon (G_T)_k\longrightarrow\End_\C(\WWW_0)$
which is induced by $(G_T)_k\subset SO(3)$ on the fibers $\WWW\cong\WWW_0$ of $V_{45}$.\\

To do so, let us fix some notations first.
Let
\be\label{wbasis}
w_1:= i\left( \chi^1\,j_+^2+\chi^2\,j^1_+ \right)N_{0000},\quad
w_2:= \chi^1\,j^1_+N_{0000},\quad
w_3:= \chi^2\,j^2_+N_{0000}
\ee
denote a basis of $\WWW_0=\spann_\C\{ N_{0000}\}\otimes\WWW$ according to 
Prop.~\ref{our45}. 
For the generators 
$\alpha_1,\,\alpha_2$ of $(G_T)_0$,
$\gamma_1,\,\gamma_2,\,\gamma_3$ of $(G_T)_1$ and $\beta_1,\,\beta_2=\alpha_2$ of $(G_T)_2$ 
we determine the induced actions on the respective fields of the $\Z_2$-orbifold CFT on the square, the tetrahedral and
the triangular Kummer K3,
according to \cite[(1.5),(1.7),(1.9)]{tawe12}.
Here, in accord with \cite{tawe12}
the complex currents $j^k_+$, $k\in\{1,\,2\}$, whose
real and imaginary parts are the four
left-handed $U(1)$-currents generating a $U(1)^4$-symmetry in $\TTT$, are
identified with the holomorphic coordinate vector fields ${\partial\over\partial z_k}$, $k\in\{1,\,2\}$.
Hence the induced
symmetries on the free fields in $\TTT$ are given by
\begin{eqnarray*}
&&\hspace*{-1em}\alpha_1\colon
\left\{\begin{array}{rcr}
\chi^1&\longmapsto& i\chi^1,\\[3pt]
\chi^2&\longmapsto& -i\chi^2,\\[3pt] 
j_+^1& \longmapsto& i j_+^1,\\[3pt]
j_+^2 &\longmapsto&-i j_+^2
\end{array}\right\},
\;
\alpha_2\colon
\left\{\begin{array}{rcr}
\chi^1&\longmapsto& \chi^2,\\[3pt]
\chi^2&\longmapsto& -\chi^1, \\[3pt]
j_+^1& \longmapsto& j_+^2,\\[3pt]
j_+^2 &\longmapsto& -j_+^1
\end{array}\right\},
\;
\beta_1\colon
\left\{\begin{array}{rcr}
\chi^1&\longmapsto& \zeta\chi^1,\\[3pt]
\chi^2&\longmapsto& \zeta^{-1}\chi^2,\\[3pt] 
j_+^1& \longmapsto& \zeta j_+^1,\\[3pt]
j_+^2 &\longmapsto& \zeta^{-1} j_+^2
\end{array}\right\},
\\[5pt]
&&\hspace*{-1em}\gamma_1\colon
\left\{\begin{array}{rcr}
\chi^1&\longmapsto& i\chi^1,\\[3pt]
\chi^2&\longmapsto& -i\chi^2,\\[3pt] 
j_+^1& \longmapsto& i j_+^1,\\[3pt]
j_+^2 &\longmapsto& -i j_+^2
\end{array}\right\},
\;
\gamma_2\colon
\left\{\begin{array}{rcr}
\chi^1&\longmapsto& \chi^2,\\[3pt]
\chi^2&\longmapsto& -\chi^1, \\[3pt]
j_+^1& \longmapsto& j_+^2,\\[3pt]
j_+^2 &\longmapsto&  -j_+^1
\end{array}\right\},
\;
\gamma_3\colon
\left\{\begin{array}{rcr}
\chi^1&\longmapsto& {i+1\over2}(i\chi^1-\chi^2),\\[3pt]
\chi^2&\longmapsto&-{i+1\over2}(i\chi^1+\chi^2),\\[3pt] 
j_+^1& \longmapsto& {i+1\over2}(i j_+^1- j_+^2),\\[3pt]
j_+^2 &\longmapsto&  -{i+1\over2}(i j_+^1+ j_+^2)
\end{array}\right\},
\end{eqnarray*}
where as before $\zeta=e^{2\pi i/3}$. Hence we have

\ba\label{cftgenerators}
&&\hspace*{-2em}S(\alpha_1)\colon\!\!\!
\left\{\!\!\begin{array}{rcr}
w_1&\longmapsto& w_1,\\
w_2 &\longmapsto& -w_2,\\
w_3& \longmapsto& -w_3
\end{array}\!\!\right\}\!\!,
\;
S(\alpha_2)\colon\!\!\!
\left\{\!\!\begin{array}{rcr}
w_1&\longmapsto& -w_1,\\
w_2 &\longmapsto& w_3,\\ 
w_3& \longmapsto& w_2
\end{array}\!\!\right\}\!\!,
\;
S(\beta_1)\colon\!\!\!
\left\{\!\!\begin{array}{rcr}
w_1&\longmapsto& w_1,\\
w_2 &\longmapsto& \zeta^{-1} w_2,\\
w_3& \longmapsto& \zeta w_3
\end{array}\!\!\right\}\!\!.\nonumber\\
\\
&&\hspace*{-2em}S(\gamma_1)\colon\!\!\!
\left\{\!\!\begin{array}{rcr}
w_1&\longmapsto& w_1,\\[3pt]
w_2 &\longmapsto& -w_2,\\[3pt]
w_3& \longmapsto& -w_3
\end{array}\!\!\right\}\!\!,
\;
S(\gamma_2)\colon\!\!\!
\left\{\!\!\begin{array}{rcr}
w_1&\longmapsto& -w_1,\\[3pt]
w_2 &\longmapsto& w_3,\\[3pt]
w_3& \longmapsto& w_2
\end{array}\!\!\right\}\!\!,
\;
S(\gamma_3)\colon\!\!\!
\left\{\!\!\begin{array}{rcl}
w_1&\longmapsto& -w_2-w_3,\\[3pt]
w_2 &\longmapsto& {i\over2}\left(-w_1-w_2+w_3\right),\\[3pt]
w_3& \longmapsto& {i\over2}\left(w_1-w_2+w_3\right)
\end{array}\!\!\right\}\!\!,\nonumber
\ea

Note that $\gamma_2=\gamma_1^2\gamma_3\gamma_1\gamma_3^{-1}$, so henceforth
we will not continue to consider the generator $\gamma_2$. We now argue that
the respective representations are equivalent for the three relevant cases.
\begin{enumerate}
\item\textbf{The square Kummer surface $\mathbf{X_0}$ with $\mathbf{(G_T)_0=\Z_2\times\Z_2}$}

For $(G_T)_0$, we have a nontrivial
action for each nontrivial element of this group both on $\WWW_0$ and on $\VVV$. Moreover,
in both spaces, every $g\in (G_T)_0$ is represented by a unitary involution with determinant $1$. 
Hence there  is a common eigenbasis of $\VVV$
for the entire group $M\left((G_T)_0\right)$, such that  for each nontrivial  $g\in (G_T)_0$, $M(g)$
has a two-fold eigenvalue $-1$ and a simple eigenvalue $1$, as in \req{cftgenerators}.
Thus there exists an orthonormal basis $\{\wt w_1,\,\wt w_2,\wt w_3\}$ of $\VVV$ such that
$$
M(\alpha_1)\colon (\wt w_1,\,\wt w_2,\wt w_3) \longmapsto (\wt w_1,\,-\wt w_2,-\wt w_3),\quad
M(\alpha_2)\colon (\wt w_1,\,\wt w_2,\wt w_3) \longmapsto (-\wt w_1,\,\wt w_3,\wt w_2).
$$
Then $w_k\longmapsto\wt w_k$ for $k\in\{1,\,2,\,3\}$ induces an equivalence of representations of $(G_T)_0$.
\item\textbf{The tetrahedral Kummer surface $\mathbf{X_1}$ with $\mathbf{(G_T)_1=A_4}$}

For $(G_T)_1=A_4$, we know from \cite{ccnpw85} that every linear $3$-dimensional representation of
the alternating group $A_4$ is either irreducible, or it decomposes into the direct sum
of three one-dimensional representations. Moreover, there is only one equivalence class of irreducible 
linear $3$-dimensional representations of $A_4$.

By \req{cftgenerators}, there exists no common eigenvector
of $S(\gamma_1)$ and $S(\gamma_3)$, hence $\WWW_0$ cannot be the sum of three
one-dimensional representations of $S((G_T)_1)$, and thus it
carries the $3$-dimensional irreducible representation of $A_4$.
It suffices to show the same for $\VVV$.

From the proof of Prop.~\ref{notetratwist} we know that $M(\gamma_1)$ acts on the
seven points $\{0,\,\ldots,\,6\}$ of the Fano plane by means of the permutation $m_{\gamma_1}:=(0,2)(1,4)$,
while $\gamma_3$ is represented by $(0,1,2)(3,5,6)$. Hence $M(\gamma_1)$ maps the lines with marked
points $(0,023),\,(0,501),\,(0,460)$ to $(2,023),\,(2,245),\,(2,612)$, implying
that the basis $\left\{(2,0,0),\,(0,2,0),\,(0,0,2)\right\}$ of $\VVV\cong\C^3$ is mapped to 
$\left\{\pm(0,b7,b7),\,\pm(-\qu{b7},1,-1),\,\pm(\qu{b7},1,-1)\right\}$.
The signs are uniquely determined by the fact that $M(\gamma_1)$ 
is represented on $\VVV$ by a linear map with determinant one, and $m_{\gamma_1}$
fixes the point labelled $3$ in the Fano plane, such that $M(\gamma_1)$ 
permutes the three pairs of root vectors which
belong to the point frame 3 in Table\,\ref{tab:type1}. From this and by similar arguments for $\gamma_3$
one obtains the following matrix representations for the generators of $(G_T)_1$ with 
respect to the standard basis of $\C^3\cong\VVV$:
$$
M(\gamma_1)=\hf \left( \begin{array}{rrr} 0&\qu{b7}&\qu{b7}\\ b7&-1&1\\ b7&1&-1 \end{array}\right), \qquad
M(\gamma_3)=\hf \left( \begin{array}{rrr} b7&1&-1\\ 0&\qu{b7}&\qu{b7}\\ b7&-1&1 \end{array}\right).
$$
Calculating the eigenvectors of $M(\gamma_1)$, one finds a one-dimensional eigenspace
with eigenvalue $+1$ and the corresponding eigenvector $\wt v_1=(\qu{b7},1,1)^T$, which as one
immediately checks is not an eigenvector of $M(\gamma_3)$. Hence $M(\gamma_1)$ and $M(\gamma_3)$  do not have
a common eigenbasis. It follows that $\VVV$ cannot be the sum of three one-dimensional
representations of $A_4$. Hence it agrees with the irreducible $3$-dimensional representation of $A_4$.
\item\textbf{The triangular Kummer surface $\mathbf{X_2}$ with $\mathbf{(G_T)_2=S_3}$}

From the proof of Prop.~\ref{notriangletwist} we know that $M(\beta_1)$ acts on the
seven points $\{0,\,\ldots,\,6\}$ of the Fano plane by means of the permutation $(0,2,3)(4,5,6)$,
while $\beta_2$ is represented by $(0,3)(4,5)$. By a calculation similar  to the one performed
for the tetrahedral Kummer surface, 
one obtains the following matrix representations for the generators of $(G_T)_2$ with 
respect to the standard basis of $\C^3\cong\VVV$:
$$
M(\beta_1)=\hf \left( \begin{array}{rrr} 0&-\qu{b7}&\qu{b7}\\ b7&-1&-1\\ b7&1&1 \end{array}\right), \qquad
M(\alpha_2)=M(\beta_2)=\hf \left( \begin{array}{rrr} 0&\qu{b7}&-\qu{b7}\\ b7&-1&-1\\ -b7&-1&-1 \end{array}\right).
$$
Calculating the respective eigenvectors one finds that with the eigenbasis
$$
\wt w_1 = \left(\begin{array}{c}1\\0\\b7\end{array}\right),\qquad
\wt w_2 = \left(\begin{array}{c}\qu{b7}\\i\sqrt{3}\\-1\end{array}\right),\qquad
\wt w_3 = \zeta^{-1} \left(\begin{array}{c}-\qu{b7}\\i\sqrt{3}\\1\end{array}\right)
$$
of $M(\beta_1)$, $\zeta=e^{2\pi i/3}$ as above,
the isomorphism induced by $w_k\longmapsto\wt w_k$
for $k\in\{1,\,2,\,3\}$ induces an equivalence of representations of $(G_T)_2$.
\end{enumerate}
\end{proof}

From Prop.~\ref{equivalence} we now infer how to identify the
space of states $V_{45}^{CFT}$ in an arbitrary $\Z_2$-orbifold CFT $\CCC=\TTT/\Z_2$
with the representation space $V_{45}$ of $M_{24}$ in a fashion which is compatible with the
relevant group $\wt G$ of symmetries of $\CCC$. The idea is similar to the surfing procedure 
described in \cite{tawe12}.
As always, we assume that all symmetries
in $\wt G$ are induced from geometric symmetries of $\TTT$ in
a fixed geometric interpretation on some torus $T=\R^4/\wt\Lambda$.  
As detailed in \cite[Section 4]{tawe12}, for at least one $k\in\{0,\,1,\,2\}$, 
we find $\wt G\subset G_k$ along with a smooth deformation of $\Lambda_k$ into 
$\wt\Lambda$, call it $\Lambda^t$ with $t\in[0,1]$ and $\Lambda^0=\Lambda_k,\,\Lambda^1=\wt\Lambda$,
such that the linear automorphism group of each $\Lambda^t$ with $t\neq0$ is 
$\wt G_T^\prime\subset SU(2)$ where $\wt G_T=\wt G_T^\prime/\Z_2$. By Prop.~\ref{equivalence},
there exists an isomorphism from $V_{45}^{CFT}$ to $V_{45}$ which induces an equivalence
of representations of $G_k$. By construction, this isomorphism yields the desired
identification of the space of states $V_{45}^{CFT}\oplus\qu V_{45}^{CFT}$ of $\CCC$. 
Note that {on Margolin's  $V_{45}=\VVV\otimes\BBB$}, the translational subgroup 
$(\Z_2)^4$  which is \textsl{common} to all geometric symmetry groups
of Kummer K3s acts faithfully on $\BBB$ and trivially on $\VVV$. Hence its fixed point set is $\{0\}$. This implies that our 
 selection of the $90$-dimensional subspace $V_{45}^{CFT}\oplus\qu V_{45}^{CFT}$
 in  $\wh V$ (see \req{genericstates}) is in fact unique. Indeed, 
 by construction, the only fixed state of the translational $(\Z_2)^4$
 in $V_{45}^{CFT}\oplus\qu V_{45}^{CFT}$
is $0$, while the group acts trivially on the six-dimensional orthogonal complement of this space
in $\wh V$. The $90$-dimensional subspace $V_{45}^{CFT}\oplus\qu V_{45}^{CFT}$ is thus
uniquely characterized by the requirement that it carries a faithful representation of the
translational subgroup $(\Z_2)^4$.
In summary, we have shown
\begin{theorem}\label{result}
Consider a $\Z_2$-orbifold CFT $\CCC$ on K3, and let $\wt G\subset\Aff(\F_2^4)=(\Z_2)^4\rtimes A_8$ 
denote the group of those symmetries of $\CCC$
which are induced from the geometric symmetries of the underlying toroidal theory in a fixed geometric
interpretation. Then the natural representation of $\wt G$ in terms of symmetries of $\CCC$ on the 
space $V_{45}^{CFT}$ of massive states    of Prop.~\mbox{\rm\ref{our45}} is equivalent to the representation
of this group on  $V_{45}$ which is obtained by restricting 
Margolin's representation $M: M_{24}\longrightarrow\End_\C(V_{45})$ to $\wt G$.
In other words, the representation of $\wt G$ on $V_{45}^{CFT}$ can be viewed as a representation which is
induced by Margolin's representation $M: M_{24}\longrightarrow\End_\C(V_{45})$.

Moreover, within the $96$-dimensional space $\wh V$ of generic states with the appropriate quantum
numbers in $\Z_2$-orbifold CFTs on K3, the subspace singled out as $V_{45}^{CFT}\oplus\qu V_{45}^{CFT}$
is uniquely determined by the  property that the action of \textsl{any} geometric symmetry group of a 
$\Z_2$-orbifold conformal field theory is equivalent to the one induced by $M$.
\end{theorem}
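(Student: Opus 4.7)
The plan is to establish the two claims independently: the equivalence of representations via the surfing procedure of \cite{tawe12} combined with Prop.~\ref{equivalence}, and the uniqueness by an elementary character-theoretic argument applied to the translational subgroup $(\Z_2)^4$.

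For the equivalence, I appeal to \cite[Section~4]{tawe12}, which allows any geometric symmetry group $\wt G\subset\Aff(\F_2^4)$ to be embedded in at least one of the three maximal Kummer symmetry groups $G_k=(\Z_2)^4\rtimes(G_T)_k$, $k\in\{0,1,2\}$, by a smooth deformation of the underlying lattice $\wt\Lambda$ to $\Lambda_k$ along which $\wt G$ remains a symmetry. Since the tensor product decompositions $V_{45}^{CFT}=\WWW\otimes\AAA$ and $V_{45}=\VVV\otimes\BBB$ of Prop.~\ref{our45} are moduli-independent, and the actions of the translational subgroup $(\Z_2)^4$ and of the finite point group $(G_T)_k\subset SO(3)$ are intrinsic, the $\wt G$-actions on $V_{45}^{CFT}$ and $V_{45}$ arise as restrictions of the corresponding $G_k$-actions. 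By Prop.~\ref{equivalence}, the two $G_k$-representations are equivalent, so the equivalence descends to $\wt G$ by restriction.

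For the uniqueness, I exploit the fact that the translational $(\Z_2)^4$ sits inside every geometric symmetry group. Hence any $90$-dimensional subspace $W\subset\wh V$ compatible with $M$ in the sense of the theorem must be $(\Z_2)^4$-invariant, with $(\Z_2)^4$-character decomposition matching that of $V_{45}\oplus\qu V_{45}$ under $M|_{(\Z_2)^4}\oplus\qu M|_{(\Z_2)^4}$. A short computation combining Prop.~\ref{repona} with Table~\ref{tab:char} shows that $M|_{(\Z_2)^4}$ acts on $\BBB$ as the direct sum of all $15$ nontrivial characters of $(\Z_2)^4$ each with multiplicity one, while $\VVV$ carries the trivial character with multiplicity three; doubling, each nontrivial character of $(\Z_2)^4$ appears in $V_{45}\oplus\qu V_{45}$ with multiplicity $6$ and the trivial character does not appear at all. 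In the eigenbasis \req{eigenbasis}, the full space $\wh V=(\WWW\oplus\qu\WWW)\otimes\spann_\C\{T_{\vec a}\mid\vec a\in\F_2^4\}$ decomposes under $(\Z_2)^4$ into all $16$ characters with multiplicity exactly $6$. Matching character multiplicities then forces $W$ to be the direct sum of the $15$ nontrivial $(\Z_2)^4$-isotypic components of $\wh V$, which equals $V_{45}^{CFT}\oplus\qu V_{45}^{CFT}$.

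The main delicate point I anticipate is verifying that the surfing deformation $\Lambda^t$ induces the $\wt G$-action on $V_{45}^{CFT}$ by a genuine restriction of the $G_k$-action, so that the equivalence supplied by Prop.~\ref{equivalence} descends cleanly to $\wt G$. Once this moduli-independence is secured, the remainder of the theorem reduces to Prop.~\ref{equivalence} together with the character count above.
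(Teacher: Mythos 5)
Your proposal is correct and follows essentially the same route as the paper: the equivalence is obtained by embedding $\wt G$ into one of the maximal Kummer groups $G_k$ via the surfing deformation of \cite[Section 4]{tawe12} and restricting the equivalence of Prop.~\ref{equivalence}, and the uniqueness rests on the action of the common translational subgroup $(\Z_2)^4$. Your explicit character-multiplicity count (all $15$ nontrivial characters with multiplicity $6$ versus the trivial character on the $6$-dimensional complement) is just a slightly more detailed rendering of the paper's observation that $(\Z_2)^4$ has fixed-point set $\{0\}$ on $V_{45}^{CFT}\oplus\qu V_{45}^{CFT}$ and acts trivially on its orthogonal complement in $\wh V$.
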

This theorem encompasses the main result of the present work. Indeed, we have shown that 
on a large component of the moduli space, 
there is a $45$-dimensional subspace of the space of states, which
exists generically and which accounts for the expected \textsl{net contributions} to
the elliptic genus. We have also shown
that these states are actually uniquely characterized by the action of the symmetry
groups. 

Furthermore, our surfing 
procedure predicts that the combined action of the symmetry groups at distinct 
points of the moduli space generates the action of a subgroup of $M_{24}$; 
that this should be the case is by no means clear a priori. Not only do we 
confirm this part of our prediction, but the group that we generate is a maximal 
subgroup of $M_{24}$, which is not a subgroup of $M_{23}$, and it acts in precisely 
the predicted way. This is the first piece of evidence in the literature whatsoever 
for a trace of $M_{24}$ that is intrinsic to CFTs on K3.\\

As we recall in Section\,\ref{sec:recap}, according to  our previous work \cite{tawe12} the
images of the maximal symmetry groups $G_0,\, G_1,\,G_2$ of Kummer surfaces
under the respective representations 
$R_{G_0},\,R_{G_1},\,R_{G_2}$ altogether generate the group $\Aff(\F_2^4)\cong(\Z_2)^4\rtimes A_8$. 
By construction
this implies that the combined action of these groups on $V_{45}$ yields the representation of
$\Aff(\F_2^4)$ on $V_{45}$   induced by Margolin's representation $M$. 
By Thm.~\ref{result}, in analogy to our construction of overarching symmetry groups by means
of Niemeier markings, this procedure combines symmetry groups that are obtained at
distinct points in moduli space. Due to the twisting in the representation space $V_{45}$ it is 
not clear how to interpret an induced combined action on $V_{45}^{CFT}$ geometrically. Note for example
that, according to \req{cftgenerators}, the generators $\alpha_1$ and $\gamma_1$ of this combined group 
have the same representation $S(\alpha_1)=S(\gamma_1)$ on $\WWW$. 
On the other hand, by \req{conjugateaction}, $M(\alpha_1)$ and $M(\gamma_1)$ both fix
the label $N$, so both induce a linear map on the fiber $\VVV_N$ of $V_{45}$. However, one
checks that these maps are distinct.  Indeed, $\alpha_1$ permutes the seven points in row $N$ 
of Table\,\ref{tab:Aeven} according to $(0,3)(4,5)$ with respect to the square labelling, such that
for example the first entry of that row, $\infty0.13.24.56$, is mapped to 
$\infty4.02.15.36$. On the other hand, $\gamma_1$ permutes the seven points in row $N$ according to $(0,2)(1,4)$ with
respect to the tetrahedral labelling given in Prop.~\ref{notetratwist}, i.e.\ the first entry of that row
is mapped to $\infty2.04.16.35$. This implies that there is a nontrivial twist which is induced on $\WWW$
on transition between distinct points of the moduli space of SCFTs on K3.

It may be useful to push the analogy to the Niemeier markings of \cite{tawe12} a little bit further.
Consider a SCFT $\CCC=\TTT/\Z_2$ as before, where $\TTT$ has a geometric interpretation on the
torus $T=\R^4/\Lambda$. Let $\vec\mu_1,\,\ldots,\vec\mu_4$ denote generators of the 
lattice $\Lambda^\ast$ which by means of the Euclidean scalar product we identify as a lattice
in $\R^4\cong(\R^4)^\ast$. With $\mu^l_1,\,\ldots,\,\mu^l_4$ denoting the Euclidean coordinates of 
$\vec\mu_1,\,\ldots,\vec\mu_4$, according to \cite[(A.2)]{tawe12} we consider the fields
$$
J_k(z):= \sum_{l=1}^4 \mu^l_k {j^l(z)},\quad k\in\{1,\,\ldots,\,4\},
$$
and their superpartners $\wt\Psi_k(z)$ as building blocks to construct a lattice which 
within the (chiral, chiral) algebra of $\CCC$ plays a role
analogous to that of the integral homology of K3 within the real K3 homology. For our purposes, a slightly different
lattice might be helpful. We set
$$
\kappa_{jk}:= {\textstyle\hf}\left((\wt\Psi_j)_{1\over2} (J_k)_{1\over2} + (\wt\Psi_k)_{1\over2} (J_j)_{1\over2}\right) N_{0000},\quad 
j,\,k\in\{1,\,\ldots,\,4\}\mbox{ with } j\leq k.
$$
Let $\KKK$ denote the complex $10$-dimensional vector space with basis $\{\kappa_{jk}, j,\,k\in\{1,\,\ldots,\,4\},$ $j\leq k\}$, 
and let $K\subset\KKK$ be the lattice of rank $20$ generated over $\Z$ by the $\kappa_{jk}$ and the $i\kappa_{jk}$.
By construction, $\KKK$ contains the vector space $\WWW_0$ which yields the model fiber of $V_{45}^{CFT}$.
We have a natural action of each of our maximal symmetry groups $G_0,\, G_1,\, G_2$ on the lattice $K$
which induces the action of these groups on $\WWW_0$. 
In light
of the fact that the very representations $R_{G_k}$ of our groups $G_k$, $k\in\{0,\,1,\,2\}$, on the base $\AAA$
of $V_{45}^{CFT}$ are encoded by means of their action on $\F_2^4\cong\hf\Lambda/\Lambda$,
in other words by their description in terms of the lattice $\Lambda$,
this description of the representation 
is more natural than the one obtained by the representation 
$S\colon SO(3)\longrightarrow\End_\C(\WWW_0)$ that was mentioned in Prop.~\ref{our45}.
The combined action yields an infinite group which descends to $\GL_4(\F_2)$ when we project to 
${1\over2}K/K$.

Note that compared to the standard descriptions of the moduli space of SCFTs on K3
\cite{asmo94,nawe00}, the space $\KKK$ plays the role of the real K3 homology, and the lattice $K$ is the analog
of the integral K3 homology. Then the space $\WWW_0$ plays the role of the positive definite four-plane in 
K3 homology whose relative position with respect to the integral K3 homology determines the point in moduli
space. Indeed, the relative position of the basis vectors $w_1,\,w_2,\,w_3$ of \req{wbasis} with respect to  $K$ depends on the
moduli of our CFT $\CCC$. For example, in the SCFTs associated with the square and the tetrahedral Kummer surfaces,
respectively, the basis vector $w_3$ is given by
$$
\textstyle
\kappa_{33}-\kappa_{44} + i \kappa_{34},\qquad 
\kappa_{33}+\hf\kappa_{34} + {i\over2} ( \kappa_{34}+\kappa_{44}).
$$
That these two expressions differ is the source of the twist which we observed
above when comparing the  action of $\alpha_1$ and $\gamma_1$ on the fiber $\VVV_N$ 
of $V_{45}$. The precise meaning and interpretation of this twist clearly needs further investigation.

%%%%%%%%%%%%%%%%%%%%%%%%%%%%%%%%%%%%%%%%%%%%%%%%%%%%%%%%%%%%%%%%
\section{Conclusions}
%%%%%%%%%%%%%%%%%%%%%%%%%%%%%%%%%%%%%%%%%%%%%%%%%%%%%%%%%%%%%%%
$\Z_2$-orbifold conformal field theories on K3  provide us with a concrete framework to investigate 
the nature of the CFT states counted
 by the elliptic genus of K3 surfaces. In this paper, we have focused on the massive states 
contributing to leading order. We have taken a close look at the symmetries that act on them in an effort 
to identify signatures of the $M_{24}$ Moonshine phenomenon. Our motivation has been to carry over, in a 
field theory context, the essence of what we have recently discovered by scrutinizing the geometry of 
Kummer surfaces, which form a large class of K3 surfaces \cite{tawe11,tawe12}.
In our previous work, we considered the finite symplectic automorphism groups of Kummer surfaces 
equipped with a dual K\"ahler class induced from the underlying torus. These groups are subgroups of 
three maximal symmetry groups 
$G_0=(\Z_2)^4\rtimes (\Z_2 \times \Z_2)$, 
$G_1=(\Z_2)^4\rtimes A_4$ and $G_2=(\Z_2)^4\rtimes S_3$, and we showed that they have a 
combined action on the Niemeier lattice with root lattice $A_1^{24}$. {This yields the 
action of the combined symmetry group  $(\Z_2)^4\rtimes A_8$} of all Kummer surfaces.
In \cite{tawe12} we have shown that this group is $(\Z_2)^4\rtimes A_8 \subset M_{24}$, and 
that it is the largest group one can expect to generate on the Niemeier lattice, given the restrictions imposed. 
The Niemeier lattice may be seen as a device that provides a `memory' of the action of all finite
symplectic automorphism groups of Kummer surfaces, by accommodating the action of a group 
which is maximal in $M_{24}$ but not  contained in $M_{23}$. Of course, there is a geometric 
obstruction to any string theory propagating on a Kummer surface enjoying this combined
symmetry: the Niemeier lattice and the full integral homology lattice $H_\ast(X,\Z)$ of a Kummer 
surface $X$ have same rank but different signatures.\\

In the present work, we  impose analogous restrictions on the symmetries of $\Z_2$-orbifold conformal field theories on K3. 
We assume that these theories come with a choice of generators of the $N=(4,4)$ superconformal algebra, 
which in particular fixes the $U(1)$-currents and a preferred $N=(2,2)$ subalgebra. We furthermore 
require the symmetries to fix the superconformal algebra pointwise. These restrictions ensure that every 
symmetry preserves the conformal weights and $U(1)$-charges of every field. In \cite{tawe12} we motivate
why we restrict our attention to symmetries which are compatible with taking a large volume limit.
These restrictions imply that the symmetry groups of interest to us
are those induced  geometrically in the underlying toroidal theory 
in a fixed geometric interpretation, that is, the subgroups of $G_0, G_1$ and $G_2$. 

Our first task has been to show that there are ninety massive states 
accounting for the net contribution to the leading massive order of the elliptic genus of K3, which
organise themselves into two $45$-dimensional vector spaces with tensor product 
structure $V_{45}^{CFT}=\WWW\otimes\AAA$ and  $\qu V_{45}^{CFT}=\qu\WWW\otimes\AAA$. Here, 
 $\WWW$ and $\qu\WWW$ are the $3$-dimensional representation spaces ${\bf 3}$ and ${\bf \qu 3}$ of $SO(3)$ 
which accommodate massive fermionic excitations from the twisted sector of the theory, while $\AAA$ is a 
$15$-dimensional representation space of 
$\Aff(\F_2^4){=(\Z_2)^4\rtimes A_8}$ 
accommodating twisted ground states. The next task has been to show how closely these two 
$45$-dimensional spaces are related to the complex $45(\qu{45})$-dimensional irreducible  representations of 
$M_{24}$ constructed by Margolin \cite{ma93}. Since the groups $G_0, G_1$ and $G_2$ are all subgroups of 
$(\Z_2)^4\rtimes A_8$, which is also the combined symmetry group of all Kummer surfaces, 
it is natural to study its action on the space $V_{45}^{CFT}$. We found that the representation of 
$(\Z_2)^4\rtimes A_8$ on the space $\AAA$ is equivalent to the representation of 
$(\Z_2)^4\rtimes A_8$ constructed by Margolin on the $15$-dimensional ``base'' $\BBB$ {of 
$V_{45}=\VVV\otimes \BBB$, where $V_{45}$ carries an irreducible representation of $M_{24}$}. 
In particular, symmetry-surfing the moduli space of $\Z_2$-orbifold
CFTs $\CCC=\TTT/\Z_2$ on K3 one generates the action of the  group $\Aff(\F_2^4)$ on $V_{45}^{CFT}$
from the combined actions of $G_0, G_1$ and $G_2$.
However, the action of $(\Z_2)^4\rtimes A_8$ on the  space $V_{45}$ 
does not factorize according to this tensor product structure: a twist is necessary between fibers,
as there are no nontrivial $3$-dimensional representations of this combined symmetry  
group that the fiber $\VVV$ could carry. On the other hand, such a twist is not 
apparent in the 
CFT space $V_{45}^{CFT}$ in its natural description in terms of $\Z_2$-orbifold CFTs.  
In fact, the three maximal symmetry groups $G_k, k\in \{0,1,2\}$, act without a twist on $V_{45}$, 
as we proved. Moreover, we showed 
 that their natural action  on 
$V_{45}^{CFT}$, which is induced by the respective symmetries of CFT massive states, is equivalent 
to the action of these groups viewed as subgroups of 
$(\Z_2)^4\rtimes A_8$ in Margolin's representation. 

We have been discussing generic states in $\Z_2$-orbifold CFTs $\CCC=\TTT/\Z_2$ on K3
that account for the net
contribution to the leading massive order of the elliptic genus of K3. There
is a $96$-dimensional vector space of fermionic such states, canceling a 
contribution of $6$ generic bosonic states. We have shown that this $96$-dimensional space contains 
$V_{45}^{CFT}\oplus\qu V_{45}^{CFT}$ {as the unique $90$-dimensional subspace on which
the generic geometric symmetry group $(\Z_2)^4$ of  all $\Z_2$-orbifold CFTs $\CCC=\TTT/\Z_2$ on K3
acts faithfully}.
\\

Since we restrict ourselves to maximal symmetry groups $G_k$, finding a $\Z_2$-orbifold CFT on K3 
with more geometric symmetry is obviously impossible. We 
view the twist in Margolin's representation as another manifestation of the geometric obstruction 
to accommodate larger geometric symmetry groups, reinforcing the statement just made in the previous sentence. 
Vice versa we conjecture that for any $N=(4,4)$ SCFT on K3 whose symmetry group $\wt G$ is 
a subgroup of one of the eleven subgroups of $M_{24}$, which Mukai identifies as maximal symmetry 
groups of K3 surfaces, $\wt G$ acts without a twist on Margolin's representation on $V_{45}$. Moreover, 
assume that (1) $\CCC$ is a SCFT on K3 with geometric interpretation 
on a K3 surface with symmetry group $\wt G\subset M_{24}$, (2) the B-field of $\CCC$ in this geometric interpretation
is invariant under $\wt G$, such that $\wt G$ acts as a group of geometric symmetries of $\CCC$. Then   we expect that
there is a $45$-dimensional space of massive states $\wt V_{45}^{CFT}$ of $\CCC$ with quantum numbers
$(h,Q;\qu h,\qu Q)=({5\over4},1;{1\over4},\qu Q)$, such that $\wt G$ acts on $\wt V_{45}^{CFT}$
by means of symmetries of $\CCC$. We also expect  that this representation is equivalent to the representation
of $\wt G$ on $V_{45}$ which is  induced by Margolin's representation $M: M_{24}\longrightarrow\End_\C(V_{45})$.
If true, this provides information about states in SCFTs on K3 which nobody has been able to construct so far.\\

Finally, it would be illuminating to pin down the analog of the Niemeier markings \cite{tawe11,tawe12}, which 
were designed to bring the combined group action into light. As mentioned in the introduction, we interpret  the
representation {space} $V_{45}$ as a medium which can  collect the actions of symmetry
groups from distinct points of the moduli space and combine them to representations of larger groups, 
making its role directly comparable to that of the Niemeier lattice. Moreover, we have identified a 
$10$-dimensional complex vector space $\mathcal{K}$ and  a rank $20$ lattice $K \subset \mathcal{K}$ 
that play analogous roles to the real K3 homology and  the integral 
K3 homology that were so crucial in constructing the Niemeier markings. There are however 
interesting novel features we inherit from the representation theory of $M_{24}$, which will have to await interpretation
in a way that would lift a corner of the veil surrounding Mathieu Moonshine. 
%
%%%%%%%%%%%%%%%%%%%%%%%%%%%%%%%%%%%%%%%%%%%%%%%%%%%%%%%%%%%%%%%
\section*{Acknowledgements}
K.W. thanks Paul Aspinwall and Ron Donagi for their very helpful comments.
We thank the Heilbronn Institute and the International Centre for Mathematical Sciences in Edinburgh 
as well as the (other) organisers of the Heilbronn Day and Workshop on 
`Algebraic geometry, modular forms and applications to physics', where part of this work was done. 
A.T. thanks the University of Freiburg for their hospitality, and acknowledges a Leverhulme Research Fellowship RF/2012-335.
K.W. acknowledges an ERC Starting Independent Researcher Grant StG No. 204757-TQFT.
%%%%%%%%%%%%%%%%%%%%%%%%%%%%%%%%%%%%%%%%%%%%%%%%%%%%%%%%%%%%%%%

%
%%%%%%%%%%%%%%%%%%%%%%%%%%%%%%%%%%%%%%%%%%%%%
\appendix 
\section{A 45-dimensional vector space with $M_{24}$ action}\label{sec:margolin}
%%%%%%%%%%%%%%%%%%%%%%%%%%%%%%%%%%%%%%%%%%%%%

We review Margolin's construction of a  $45$-dimensional
irreducible representation of $M_{24}$ \cite{ma93}.
 We focus on
the features of this representation that are crucial for our present work. \\

A core component in 
the construction is a $3$-dimensional vector  space $V$ over $\mathbb{F}_2$. The associated projective 
plane $\P(V)\cong \mathbb{P}(\mathbb{F}_2^3)$ is the so-called \textsc{Fano plane}: 
it contains seven points represented by
the seven non-zero vectors of  $\mathbb{F}_2^3$. There are seven lines, each comprising three 
distinct 
points whose representatives in $\F_2^3$ together with the origin form a hyperplane. Under
addition in $\F_2^3$,  the four points of every hyperplane form a Kleinian
$4$-group, that is, a group isomorphic to $(\Z_2)^2$. 
Figure\,\ref{Fano}(a) illustrates 
\begin{figure}[h]
\begin{center}
\includegraphics[width=6.5cm,keepaspectratio]{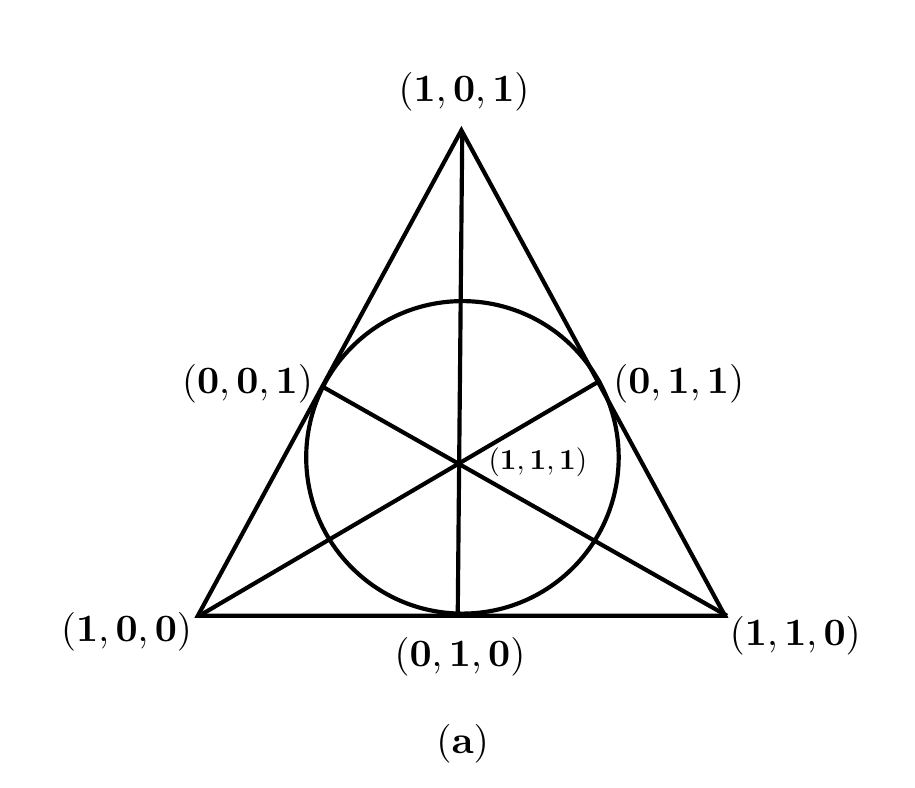}\qquad
\includegraphics[width=5.7cm,keepaspectratio]{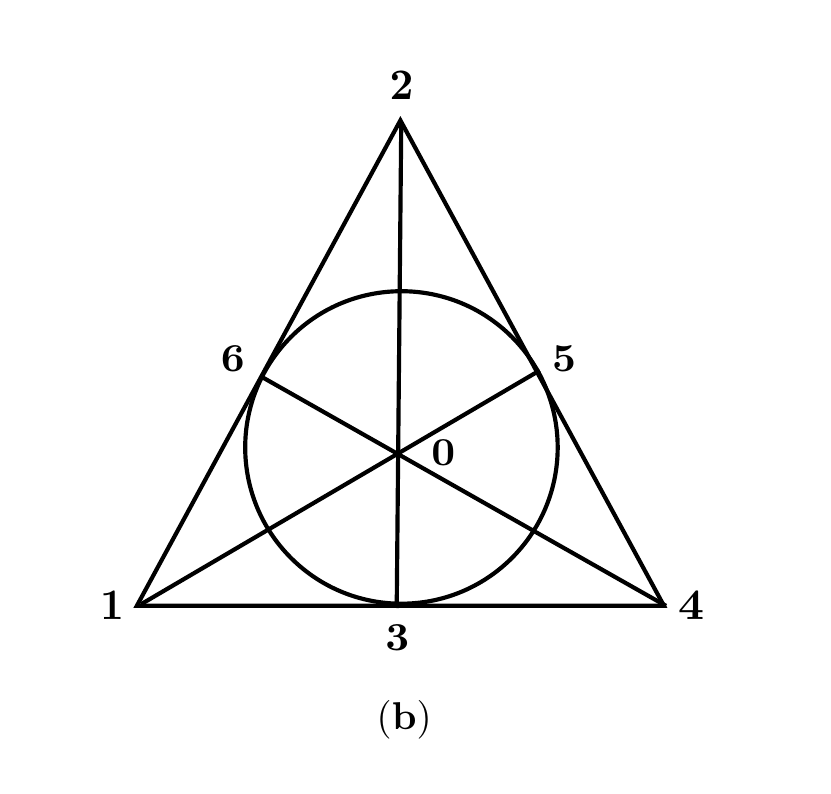}\\
% (a)\hskip 4cm (b)
\end{center}
\caption{\textsl{The Fano plane {\rm({\textsl a})} labelled by points in $\mathbb{F}_2^3$;\,\,
\mbox{\rm(}b{\rm)} labelled by points in $\mathbb{F}_7$.}}
\label{Fano}
\end{figure}
the structure of the Fano plane. 
The projective linear transformations of $\mathbb{P}(\mathbb{F}_2^3)$ act by
permuting the seven  points of  
the Fano plane, and they
form a group isomorphic to the group $L_3(2):=\GL_3(\F_2)$ of linear automorphisms  of $V$. It will be helpful for our 
purposes to label the non-zero elements of $V$ by integers modulo $7$, 
according to Figure\,\ref{Fano}(b), such that the set of points in each projective
line in $\P(V)$ has the form $\{i,\,i+2,\,i+3\},\, i\in\F_7$. In addition, the origin of $V$ is labelled $\infty$. 
Note that 
translation by a fixed vector  in  
$V$ permutes the points of $V$. In the notations of Figure\,\ref{Fano}(b), the eight resulting translations are 
\be \label{8translations}
\begin{array}{l}
{\rm translation\,\,by\,\,}(0,0,0)\,\,(\infty):\quad\quad\,\,\, \id \,\,\,\,{\rm (identity)},\\[5pt]
{\rm translation\,\,by\,\,}(1,1,1)\,\,({\rm point}\,0):\,\,(0,\infty)(1,5)(2,3)(4,6),\\[5pt]
{\rm translation\,\,by\,\,}(1,0,0)\,\,({\rm point}\,1):\,\,(1,\infty)(0,5)(2,6)(3,4),\\[5pt]
{\rm translation\,\,by\,\,}(1,0,1)\,\,({\rm point}\,2):\,\,(2,\infty)(0,3)(1,6)(4,5),\\[5pt]
{\rm translation\,\,by\,\,}(0,1,0)\,\,({\rm point}\,3):\,\,(3,\infty)(0,2)(1,4)(5,6),\\[5pt]
{\rm translation\,\,by\,\,}(1,1,0)\,\,({\rm point}\,4):\,\,(4,\infty)(0,6)(1,3)(2,5),\\[5pt]
{\rm translation\,\,by\,\,}(0,1,1)\,\,({\rm point}\,5):\,\,(5,\infty)(0,1)(2,4)(3,6),\\[5pt]
{\rm translation\,\,by\,\,}(0,0,1)\,\,({\rm point}\,6):\,\,(6,\infty)(0,4)(1,2)(3,5).
\end{array}
\ee
The Fano plane structure appears in two incarnations in
Margolin's construction, as we shall discuss now. \\

First
consider the $105$  permutations of cycle shape $2^4$
in the alternating group $A_8$, 
i.e.\ all permutations of type $(a_1,a_2)(a_3,a_4)(a_5,a_6)(a_7,a_8)$ with $a_i \in \F_7 \cup \{\infty\}$, $a_i$ all distinct.
Note that $A_8$ acts by conjugation on this set of permutations.
Denote by $\SSS$ the set of seven nontrivial translations in \eqref{8translations}, such that
$\SSS\cup\{\id\}$ is the group of translations of $V$. It follows that $\SSS\cup\{\id\}\cong(\Z_2)^3$
is the maximal normal subgroup of the group $\Aff(V) = (\Z_2)^3\rtimes L_3(2)$ of linear affine transformations of $V$.
In particular, $\SSS$ is invariant under conjugation by this group.
Since $\Aff(V)$ acts by permutations on the eight points of $V$, labelled $\infty,\,0,\,\ldots,\,6$
in \req{8translations}, we realize
this group as a subgroup of the alternating group $A_8$ of index $15$. In summary, 
the set $\SSS$ is a subset
of cardinality seven in the set of $105$  permutations in $A_8$ of cycle shape $2^4$,
which is invariant under conjugation by $\Aff(V)=(\Z_2)^3\rtimes L_3(2)\subset A_8$. It follows that 
the remaining $98$  permutations in $A_8$ of cycle shape $2^4$ decompose into 
fourteen sets of seven elements each, the images of $\SSS$ under conjugations by elements of
$A_8\setminus \Aff(V)$. 
One thus obtains an array $\AAA_{even}$ of fifteen $7$-sets such that
\begin{enumerate}
\item 
each $7$-set ${\bf X},\, X\in\{A,\,B,\,\ldots,\,N,\,O\}$, 
together with the identity permutation, forms a multiplicative abelian group of order 8;
\item 
each $7$-set ${\bf X},\, X\in\{A,\,B,\,\ldots,\,N,\,O\}$,
displays a Fano plane structure, i.e. it is possible to label the seven involutions in each set by a 
distinct element $i \in \F_7$, such that each set contains seven triplets of involutions labelled 
$\{i,i+2,i+3\}$ which correspond to the lines in the Fano plane; indeed, these triplets yield 
seven multiplicative abelian 4-groups $\{\id,i, i+2,i+3\}$. 
For instance, $\{\id,3,5,6\}$ with the involutions labelled 
$\{3,5,6\}$ in row $\bf C$ of  Table\,\ref{tab:Aeven} is the $4$-group 
$\{\id,(\infty 6)(05)(13)(24), (\infty 0)(14)(23)(56), (\infty 5)(06)(12)(34)\}$;
note that this leaves a freedom of choice for the labelling which amounts to the action of
$L_3(2)$ in each row;
\item 
the $7$-set $\SSS$ appears as row ${\bf A}$ of Table\,\ref{tab:Aeven}.
\end{enumerate}

The $45$-dimensional space affording the representation of interest 
is obtained by taking $15$ copies of a complex three-dimensional vector space to be described
below. Each copy of this vector space is labelled by one of the letters $A,\,\ldots,\,O$ corresponding
to the fifteen $7$-sets with Fano plane structure displayed in 
Table\,\ref{tab:Aeven}. This Fano plane structure on the rows
${\bf A},\,\ldots,\,{\bf O}$ will be crucial to explain the action of
$M_{24}$, or rather of its subgroup $(\Z_2)^4\rtimes A_8$, which we focus on in this appendix, 
on the representation space. Therefore we
think of the fifteen $7$-sets in Table\,\ref{tab:Aeven} as yielding the base of our representation:
formally, we introduce a  $15$-dimensional complex Euclidean vector space $\BBB$ and
choose an orthonormal basis $\{P_A,\,P_B,\,\ldots,\,P_N,\,P_O\}$ for it. It is immediate that
the vector space $\BBB$ carries a linear representation of $A_8$. Indeed,  $A_8$ acts by conjugation 
on the rows of the array $\AAA_{even}$ thus permuting the labels
$\{A,\,B,\,\ldots,\,N,\,O\}$ and thereby permuting our orthonormal basis of $\BBB$.
We refer to $\BBB$ as the \textsc{base} of Margolin's representation space.
\\

\begin{table}[htbp]
  \centering
  \tabcolsep 5.8pt
  \tiny
  \begin{tabular}{@{}clllllll@{}}
  &$c_0$&$c_1$&$c_2$&$c_3$&$c_4$&$c_5$&$c_6$\\
   \bf A&$\bf \infty 0.15.23.46^{\br 0}$&$\bf \infty 1.05.26.34^{\br 1}$&$\bf \infty 2.03.16.45^{\br 2}$&$\bf \infty 3.02.14.56^{\br 3}$&$\bf \infty 4.06.13.25^{\br 4}$&$\bf \infty 5.01.24.36^{\br 5}$&$\bf \infty 6.04.12.35^{\br 6}$\\
   &&&&&&&\\
    \bf B&$ \bf \infty 2.03.14.56^{\br 2}$&$ \infty 1.06.24.35^{\br 3}$&$\bf \infty 3.02.15.46^{\br 4}$&$\bf \infty 0.16.23.45^{\br 5}$&$ \infty 4.05.12.36^{\br 0}$&$\infty 5.04.13.26^{\br 1}$&$ \infty 6.01.25.34^{\br 6}$\\
      \bf C&$ \bf \infty 3.02.16.45^{\br 4}$&$ \infty 1.04.25.36^{\br 1}$&$\bf \infty 0.14.23.56^{\br 5}$&$\bf \infty 2.03.15.46^{\br 2}$&$ \infty 4.01.26.35^{\br 0}$&$\infty 5.06.12.34^{\br 6}$&$ \infty 6.05.13.24^{\br 3}$\\
       &&&&&&&\\
    \bf D&$  \infty 0.12.36.45^{\br 5}$&$\bf \infty 3.06.14.25^{\br 0}$&$ \infty 2.01.35.46^{\br 3}$&$\bf \infty 4.05.13.26^{\br 4}$&$ \bf\infty 1.02.34.56^{\br 6}$&$\infty 5.04.16.23^{\br 2}$&$ \infty 6.03.15.24^{\br 1}$\\
      \bf E&$  \infty 0.16.24.35^{\br 4}$&$\bf \infty 4.02.13.56^{\br 5}$&$ \infty 2.04.15.36^{\br 2}$&$\bf \infty 1.06.25.34^{\br 0}$&$ \bf\infty 3.05.14.26^{\br 1}$&$\infty 5.03.12.46^{\br 3}$&$ \infty 6.01.23.45^{\br 6}$\\
      &&&&&&&\\
\bf F&$  \infty 0.14.26.35^{\br 1}$&$\infty 1.04.23.56^{\br 0}$&$\bf  \infty 4.01.25.36^{\br 5}$&$ \infty 3.05.12.46^{\br 6}$&$ \bf \infty 5.03.16.24^{\br 2}$&$\bf \infty 2.06.13.45^{\br 4}$&$ \infty 6.02.15.34^{\br 3}$\\
\bf G&$  \infty 0.12.34.56^{\br 0}$&$\infty 1.02.35.46^{\br 1}$&$\bf  \infty 5.06.13.24^{\br 4}$&$ \infty 3.04.15.26^{\br 3}$&$ \bf \infty 2.01.36.45^{\br 5}$&$\bf \infty 4.03.16.25^{\br 2}$&$ \infty 6.05.14.23^{\br 6}$\\
&&&&&&&\\
\bf H&$  \infty 0.13.26.45^{\br 0}$&$\infty 1.03.25.46^{\br 1}$&$  \infty 2.06.15.34^{\br 2}$&$\bf \infty 5.04.12.36^{\br 6}$&$\infty 4.05.16.23^{\br 4}$&$\bf \infty 6.02.14.35^{\br 3}$&$\bf \infty 3.01.24.56^{\br 5}$\\
\bf I&$  \infty 0.16.25.34^{\br 2}$&$\infty 1.06.23.45^{\br 1}$&$  \infty 2.05.13.46^{\br 4}$&$\bf \infty 6.01.24.35^{\br 6}$&$\infty 4.03.15.26^{\br 0}$&$\bf \infty 3.04.12.56^{\br 3}$&$\bf \infty 5.02.14.36^{\br 5}$\\
&&&&&&&\\
\bf J&$\bf  \infty 4.06.12.35^{\br 6}$&$\infty 1.03.24.56^{\br 3}$&$  \infty 2.05.14.36^{\br 5}$&$\infty 3.01.26.45^{\br 1}$&$\bf \infty 6.04.15.23^{\br 0}$&$ \infty 5.02.16.34^{\br 2}$&$\bf \infty 0.13.25.46^{\br 4}$\\
\bf K&$\bf  \infty 6.04.13.25^{\br 6}$&$\infty 1.02.36.45^{\br 3}$&$  \infty 2.01.34.56^{\br 2}$&$\infty 3.05.16.24^{\br 5}$&$\bf \infty 0.12.35.46^{\br 0}$&$ \infty 5.03.14.26^{\br 1}$&$\bf \infty 4.06.15.23^{\br 4}$\\
&&&&&&&\\
\bf L&$\bf  \infty 1.05.24.36^{\br 5}$&$\bf\infty 5.01.23.46^{\br 3}$&$  \infty 2.06.14.35^{\br 0}$&$\infty 3.04.16.25^{\br 2}$&$ \infty 4.03.12.56^{\br 1}$&$ \bf \infty 0.15.26.34^{\br 6}$&$ \infty 6.02.13.45^{\br 4}$\\
\bf M&$\bf  \infty 5.01.26.34^{\br 3}$&$\bf\infty 0.15.24.36^{\br 4}$&$  \infty 2.04.13.56^{\br 2}$&$\infty 3.06.12.45^{\br 6}$&$ \infty 4.02.16.35^{\br 5}$&$ \bf \infty 1.05.23.46^{\br 1}$&$ \infty 6.03.14.25^{\br 0}$\\
&&&&&&&\\
\bf N&$  \infty 0.13.24.56^{\br 4}$&$\bf\infty 2.04.16.35^{\br 2}$&$\bf  \infty 6.05.12.34^{\br 6}$&$\infty 3.01.25.46^{\br 3}$&$ \infty 4.02.15.36^{\br 5}$&$  \infty 5.06.14.23^{ \br 0}$&$\bf \infty 1.03.26.45^{\br 1}$\\
\bf O&$  \infty 0.14.25.36^{\br 5}$&$\bf\infty 6.03.12.45^{\br 6}$&$\bf  \infty 1.04.26.35^{\br 1}$&$\infty 3.06.15.24^{\br 3}$&$ \infty 4.01.23.56^{\br 0}$&$  \infty 5.02.13.46^{\br 4}$&$\bf \infty 2.05.16.34^{\br 2}$\\
 \end{tabular}
  \caption{\textsl{Array $\AAA_{even}$ of $105$  permutations of cycle shape $2^4$ 
in $A_8$ with labelling relative to the square torus symmetry, where $a_1a_2.a_3a_4.a_5a_6.a_7a_8$
denotes the permutation $(a_1,a_2)(a_3,a_4)(a_5,a_6)(a_7,a_8).$}}
\label{tab:Aeven}
\end{table}

The Fano plane structure also appears within an irreducible $3$-dimensional representation of $L_3(2)$ constructed 
from a rank 3 sublattice $\Lambda_3^{b7}$  of the icosian Leech lattice that appears 
in Wilson's description\footnote{This
lattice is also very closely related  to the 
lattice  generated from the root graph $J_3(4)$ introduced in \cite{co76}, whose associated complex reflection group
preserves the Klein quartic 
 $F(x,y,z)=xy^3+yz^3+zy^3=0$ \cite{kl78}.} of the maximal subgroup 
$L_3(2)\rtimes\Z_2$ of the Hall-Janko group $J_2$  \cite{wi86}. 
 To understand the Fano plane structure in this context, 
 consider the following $21$ pairs of root vectors\footnote{We follow Margolin's  conventions
 and scale all root vectors to length $2$.}:
 \be\label{rootvectors}
 \pm(2,0,0)^{\sigma},\quad  \pm(0,b7,\pm b7)^{\sigma},\quad
 \pm(\pm\qu {b7},1,-1)^{\sigma}\quad \pm(\qu{b7},\pm1,\pm1)^{\sigma}
 \ee
in $\C^3$, 
 where $b7:=\hf(-1+\sqrt{-7})$ and $\qu {b7}:=\hf(-1-\sqrt{-7})$, 
 and $(a,b,c)^{\sigma}$ means that all \textsl{cyclic} permutations of $\{a,b,c\}$ should also be considered. 
 These root vectors generate the lattice $\Lambda_3^{b7}\subset\C^3$
 over $\mathbb{Z}[b7]$.  The automorphism group of this lattice is $\Z_2\times L_3(2)$, where $L_3(2)$ can be generated
 by the $21$ reflections in the root vectors \req{rootvectors}. 
 This fact can be used to define an action of $L_3(2)$ on the underlying vector space $\C^3$:
 The three root vectors $(2,0,0)^{\sigma}$ form a coordinate frame, that is an orthonormal basis of
 $\C^3$, where $(2,0,0)$ now represents the pair of root vectors $\pm(2,0,0)$, etc. The 
 remaining $18$ pairs of root vectors can be partitioned 
 into six other coordinate frames as follows:
 \begin{table}[htbp]
  \centering
 \tabcolsep 5.4pt
 \tiny
  \begin{tabular}{@{}ccccccccccccccccccccccccccc@{}}
  \multicolumn{3}{c}{\bf 0}&&\multicolumn{3}{c}{\bf 1}
  &&\multicolumn{3}{c}{$\bf 2$}&&\multicolumn{3}{c}{$\bf 3$}&&\multicolumn{3}{c}{$\bf 4$}&&
  \multicolumn{3}{c}{$\bf 5$}&&\multicolumn{3}{c}{$\bf 6$}\\
  &&&&&&&&&&&&&&&&&&&&\\
$2$& $0$ &$0$&&$-1$&$\qu{b7}$&$1$&&$-\qu{b7}$&$1$&$-1$&&$\qu{b7}$&$1$&$1$&&$b7$&$b7$&$0$&&$-b7$&$0$&$b7$&&$-1$&$-1$&$\qu{b7}$\\
$0$&$2$&$0$&&$-1$&$-\qu{b7}$&$1$&&$0$&$b7$&$b7$&&$\qu{b7}$&$-1$&$-1$&&$1$&$-1$&$\qu{b7}$&&$1$&$\qu{b7}$&$1$&&$b7$&$-b7$&$0$\\
$0$&$0$&$2$&&$b7$&$0$&$b7$&&$\qu{b7}$&$1$&$-1$&&$0$&$b7$&$-b7$&&$1$&$-1$&$-\qu{b7}$&&$-1$&$\qu{b7}$&$-1$&&$1$&$1$&$\qu{b7}$\\
 \end{tabular}
 \caption{\textsl{Point frames of the lattice $\Lambda_3^{b7}$.} }
  \label{tab:type1}
\end{table}

The labels $0,\ldots,6$ in Table\,\ref{tab:type1} indicate the structure 
of the points in a Fano plane on these seven orthonormal bases of $\C^3$: 
any pair $a,\, b$ of coordinate frames is fixed, up to signs,
by a nontrivial automorphism in $L_3(2)$, and all automorphisms fixing this pair fix a third frame
$c$. The frames $a,\, b,\, c$ then yield the points on a line in the Fano plane Figure\,\ref{Fano}(b). 
These seven frames are therefore 
called \textsc{point frames}.\\

Another partition of the pairs of root vectors into seven different coordinate frames is possible. The 
first frame consists of the first root vector pair of point frame $0$, the second root vector pair of point frame 
$2$, and the third root vector pair of point frame $3$. It is labelled $023$ and called a \textsc{line frame}. 
This way the seven line frames of Table\,\ref{tab:type2} correspond to
the lines of the Fano plane whose points are the point frames.
\begin{table}[htbp]
  \centering
 \tabcolsep 5.4pt
 \tiny
  \begin{tabular}{@{}ccccccccccccccccccccccccccc@{}}
  \multicolumn{3}{c}{\bf 023}&&\multicolumn{3}{c}{\bf 134}
  &&\multicolumn{3}{c}{$\bf 245$}&&\multicolumn{3}{c}{$\bf 356$}&&\multicolumn{3}{c}{$\bf 460$}&&
  \multicolumn{3}{c}{$\bf 501$}&&\multicolumn{3}{c}{$\bf 612$}\\
  &&&&&&&&&&&&&&&&&&&&\\
$2$& $0$ &$0$&&$-1$&$\qu{b7}$&$1$&&$-\qu{b7}$&$1$&$-1$&&$\qu{b7}$&$1$&$1$&&$b7$&$b7$&$0$&&$-b7$&$0$&$b7$&&$-1$&$-1$&$\qu{b7}$\\
$0$&$b7$&$b7$&&$\qu{b7}$&$-1$&$-1$&&$1$&$-1$&$\qu{b7}$&&$1$&$\qu{b7}$&$1$&&$b7$&$-b7$&$0$&&$0$&$2$&$0$&&$-1$&$-\qu{b7}$&$1$\\
$0$&$b7$&$-b7$&&$1$&$-1$&$-\qu{b7}$&&$-1$&$\qu{b7}$&$-1$&&$1$&$1$&$\qu{b7}$&&$0$&$0$&$2$&&$b7$&$0$&$b7$&&$\qu{b7}$&$1$&$-1$\\
 \end{tabular}
 \caption{\textsl{Line frames of the lattice $\Lambda_3^{b7}$.} }
  \label{tab:type2}
\end{table}
We can now uniquely specify every root vector in $\Lambda_3^{b7}$ 
up to sign by the line frame, that is by
the line in the Fano plane that it belongs to, along with a point on that line, that is by a point frame. 
In other words, pairs of root vectors in $\Lambda_3^{b7}$ are in $1\colon1$ correspondence with
lines in $\P(V)$ with one marked point. The 
group $L_3(2)$ of lattice automorphisms of $\Lambda_3^{b7}$  of determinant one acts faithfully on the point frames 
and on the line frames, and it  preserves the projective structure of the 
Fano plane. We obtain an induced irreducible representation of $L_3(2)$ on 
the vector space $\VVV\cong\C^3$ generated by $\Lambda_3^{b7}$ over $\C$: consider the 
orthonormal basis $(2,0,0),\,(0,2,0),\,(0,0,2)$ of $\C^3$ and specify each of these root vectors in terms of
a point and a line in the Fano plane, that is, $(0,\, 023),\;(0,501),\;(0,460)$, respectively. 
The images of the three basis vectors under  $g\in L_3(2)$
are specified, up to a sign, 
by the images of these points and lines under the permutation by which $g$
acts on the seven points of the Fano plane. The correct signs of the images follow from the
requirement that $g$ maps the pairs of root vectors in point frame $a$ to the pairs of root vectors
in point frame $g(a)$ for all $a\in\F_7$. \\

The $45$-dimensional space of the irreducible representation of $M_{24}$ that
we are interested in is obtained by assigning 
to each row of the array $\AAA_{even}$ in Table\,\ref{tab:Aeven} one of $15$ mutually orthogonal copies of 
the complex vector space $\VVV$ generated by
$\Lambda_3^{b7}$ over $\C$ \cite{ma93}. In other words, the representation space is given by 
\be \label{45d}
V_{45}:=\VVV_A\oplus\VVV_B\oplus\ldots\oplus \VVV_N\oplus\VVV_O,
\ee
where each $\VVV_X$ is a copy of $\VVV$ which carries the irreducible 
representation of $L_3(2)$ described above, 
and $X \in \{A,B,\ldots N,\,O\}$ with $A,B,\ldots,N,\,O$ labelling the rows as in Table\,\ref{tab:Aeven}.
Using the vector space $\BBB$ with orthonormal
basis $\{P_A,\,P_B,\,\ldots,\,P_N,\,P_O\}$ that was introduced above, we have
$V_{45}=\VVV\otimes\BBB$. We refer to $\BBB$ as the \textsc{base} of the {representation space} $V_{45}$, while the
$\VVV_X$ are referred to as the \textsc{fibers}.\\

It remains to identify  how the group $M_{24}$ acts on this space. 
Margolin constructs an irreducible representation $M\colon M_{24}\longrightarrow\End_\C(V_{45})$
in \cite{ma93}.
Here we  only discuss  the action of the
maximal subgroup $(\Z_2)^4\rtimes A_8$, as this is of primary relevance to our work. 
Margolin gives an explicit construction of the extra group element that generates $M_{24}$ together with the copy 
of $(\Z_2)^4\rtimes A_8$ we  describe below.\\

The group $A_8$ acts on the $45$-dimensional space $V_{45}$ as follows: 
the fifteen rows of $\AAA_{even}$ are permuted under conjugation by elements of $A_8$, so let 
$\tau \in A_8$ and $M(\tau)(X):=Y$  if $\tau{\bf X}\tau^{-1}={\bf Y}$ for rows of $\AAA_{even}$
labelled $X,Y$ ($X$ may be equal to $Y$). 
Since $\tau$ maps $4$-groups to $4$-groups, the Fano plane 
structure of the rows $\bf X$ and $\bf Y$ is preserved under $\tau$. To describe the induced action 
$\VVV_X \longrightarrow \VVV_{Y}$,
as above we use the fact that every root vector in $\Lambda_3^{b7}$ (up to a sign) is specified
by a line $p_1p_2p_3$ of the Fano plane $\mathbb{P}_X(\mathbb{F}_2^3)$ associated with row $X$  
and a point $p_k,\,k\in\{1,\,2,\,3\},$ on that line. The label of the point $p_k$ is one of  the labels attached to 
the involutions of $\AAA_{even}$ in row $X$. The permutation $\tau$ therefore induces a map 
$m_{\tau}^{(X,Y)}: \mathbb{P}_X(\mathbb{F}_2^3) \longrightarrow \mathbb{P}_Y(\mathbb{F}_2^3)$. 
The root vector image in $\VVV_{Y}$ is then reconstructed from $m_\tau^{(X,Y)}(p_k)$ and the line 
$m_\tau^{(X,Y)}(p_1)m_\tau^{(X,Y)}(p_2)m_\tau^{(X,Y)}(p_3)$.

For instance, suppose one conjugates $\AAA_{even}$ by $\tau=(\infty,0)(1,5)$. The corresponding permutation 
on the $15$ rows is given by $(B,C)(D,O)(E,N)(F,H)(G,I)(J,K)$, i.e.\ $\tau$ maps $\VVV_A$, $\VVV_L$ and 
$\VVV_M$  to themselves, $\VVV_B$ to $\VVV_C$, $\VVV_D$ to $\VVV_O$ etc. 
To determine the precise action on these spaces, one reads off the permutation on the seven points of the 
Fano plane encoded in the labelling of involutions within $\AAA_{even}$. In the case of the 
labelling displayed in Table\,\ref{tab:Aeven}, the permutation $(B,C)$ corresponds to the map 
$m_\tau^{(B,C)}: \mathbb{P}_B(\mathbb{F}_2^3)\longrightarrow \mathbb{P}_C(\mathbb{F}_2^3)$ with 
$m_\tau^{(B,C)}(0)=1,\, m_\tau^{(B,C)}(1)=0,\, m_\tau^{(B,C)}(2)=4,\,m_\tau^{(B,C)}(3)=3,\,
m_\tau^{(B,C)}(4)=2,\,m_\tau^{(B,C)}(5)=5,\, m_\tau^{(B,C)}(6)=6$. We
encode this map succinctly and mnemonically as the  ``permutation'' 
$m_\tau^{(B,C)}=(1,0)(2,4)$, which governs how  
$\VVV_B$ is mapped to  $\VVV_C$. Specifically, the root vector in $\VVV_B$ 
corresponding to point $0$ within the line $023$ of  $\mathbb{P}_B(\mathbb{F}_2^3)$ is mapped on 
$\VVV_C$ to the root vector $1$ within line $134$ of $ \mathbb{P}_C(\mathbb{F}_2^3)$. 
In other words, Margolin's representation $M\colon A_8\longrightarrow\End_\C(V_{45})$ enforces
$M( (\infty,0)(1,5) ) \left( (2,0,0)_B \right) = \pm (-1,\qu{b7},1)_C$.
Since $A$ is fixed under $\tau$, the corresponding permutation of $\mathbb{P}_A(\mathbb{F}_2^3)$ is 
$m_\tau^{(A,A)}=(2,3)(4,6)$, which induces a map from $\VVV_A$ to itself, and so on. \\

We note that the maps $m_\tau^{(X,Y)}$ and $m_\tau^{(U,V)}$ for two pairs of rows  need not  be identical. 
This prompts us to introduce the following 
\begin{definition}\label{twist} 
Let $\AAA_{even}$ be an array as in Table\,\ref{tab:Aeven}, with fixed labelling, and  
let $\tau \in A_8$ act by conjugation 
on the $15$ rows of $\AAA_{even}$. If the induced maps $m_\tau$ between Fano planes associated 
with the permuted rows are not all identical, then the array is said to exhibit
the action of $\tau$ on the $45$-dimensional space 
$V_{45}$ with a \textsc{twist}.

For a subgroup $G\subset A_8$, assume that there exists a labelling of the array 
$\AAA_{even}$ which is compatible with the linear structure of the Fano planes 
$\P_X(\F_2^4)$ for all $X\in\{A,\,B,\,\ldots,\,N,\,O\}$ and which
exhibits a twist for \textsl{no} $\tau\in G$. In other words, the action of $G$ 
factorizes according to $V_{45}=\VVV\otimes\BBB$.
Then we say that $G$ acts \textsc{without a twist}.
\end{definition}
Another partition of the $105$  permutations  of cycle shape $2^4$ in $A_8$ is useful 
for exhibiting the action of the normal subgroup $(\Z_2)^4$ in $(\Z_2)^4\rtimes A_8 \subset M_{24}$,
that is, to exhibit $M\colon (\Z_2)^4\longrightarrow\End_\C(V_{45})$. 
This action is obtained from the array $\AAA_{even}$ by conjugation with an element 
of $S_8\setminus A_8$. We choose this element to be $(0,\infty)$ and call the conjugate array $\AAA_{odd}$, 
displayed in Table\,\ref{tab:Aodd}. We label each
of the permutations in the array by a letter $A,\, B,\, \ldots,\, N,$ or $O$, according to the row in which
this permutation occurs in $\AAA_{even}$.
Note that under conjugation by $(0,\infty)$, the seven boldfaced involution 
of column $c_1$ of $\AAA_{even}$ are interchanged with the seven involutions of row $A$, leaving 
$(\infty,0)(1,5)(2,3)(4,6)$ invariant. Similarly, $(0,\infty)$ interchanges the seven boldfaced 
involutions of columns $c_i,\, i=2,\ldots,6$ with the involutions of rows $M,\, C,\, B,\, K,\, L,\, J$ respectively, leaving 
$(\infty,0)(1,5)(2,4)(3,6), (\infty,0)(1,4)(2,3)(5,6)$, $(\infty,0)(1,6)(2,3)(4,5)$,
$(\infty,0)(1,2)(3,5)(4,6)$, $(\infty,0)(1,5)(2,6)(3,4)$, and $(\infty,0)(1,3)(2,5)(4,6)$ invariant. 
In fact, each row $X^\prime$ with $X^\prime\in\{A{^\prime},\, B{^\prime},\, \ldots,\, N{^\prime},\, O{^\prime}\}$
of $\AAA_{odd}$ contains involutions from seven different rows of $\AAA_{even}$, 
and therefore specifies seven of the $15$ copies of $\VVV$. To each row $X^\prime$ of $\AAA_{odd}$,
Margolin associates an automorphism  
of $V_{45}$ (call it $X^\prime$ as 
well) that fixes  these seven copies of $\VVV$ pointwise, and
acts as $-\id$ on the other eight. In other words, $X^\prime$ acts
linearly on the base $\BBB$ of $V_{45}=\VVV\otimes\BBB$; 
it fixes the seven vectors $P_Y$ in our orthonormal
basis of $\BBB$ which have labels $Y$ occurring in the line $X^\prime$ of the dual array $\AAA_{odd}$,
while multiplying the other eight basis vectors by $-1$. $X^\prime$ acts trivially on  $\VVV$.
The resulting automorphisms $A^\prime,\, B^\prime,\, \ldots,\, 
N^\prime,\, O^\prime$
are captured by Table\,\ref{tab:char}, which can easily be identified 
as the character table of an abelian  group of order $2^4$. Since this group is normalised 
by the action of $A_8$, altogether one obtains an action of 
$(\Z_2)^4\rtimes A_8$ on the $45$-dimensional space $V_{45}$ of
\eqref{45d} and thus $M\colon (\Z_2)^4\rtimes A_8\longrightarrow\End_\C(V_{45})$. 
The representation of the full group $M_{24}$ on
$V_{45}$  is described in \cite{ma93}. \\

\begin{table}
  \centering
  \tabcolsep 5.8pt
  \tiny
 \begin{tabular}{@{}llllllll@{}}
  &$c_0'$&$c_1'$&$c_2'$&$c_3'$&$c_4'$&$c_5'$&$c_6'$\\
  $ \bf A'$&$\bf \infty 0.15.23.46^{\br A}$&$\bf \infty 5.01.26.34^{\br M}$&$\bf \infty 3.02.16.45^{\br C}$&$\bf \infty 2.03.14.56^{\br B}$&$\bf \infty 6.04.13.25^{\br K}$&$\bf \infty 1.05.24.36^{\br L}$&$\bf \infty 4.06.12.35^{\br J}$\\
   &&&&&&&\\
   $ \bf B'$&$ \bf \infty 3.02.14.56^{\br A}$&$\bf  \infty 6.01.24.35^{\br I}$&$\bf \infty 2.03.15.46^{\br C}$&$\bf \infty 0.16.23.45^{\br B}$&$ \bf \infty 5.04.12.36^{\br H}$&$\bf \infty 4.05.13.26^{\br D}$&$\bf \infty 1.06.25.34^{\br E}$\\
     $ \bf C'$&$ \bf \infty 2.03.16.45^{\br A}$&$\bf \infty 4.01.25.36^{\br F}$&$\bf \infty 0.14.23.56^{\br C}$&$\bf \infty 3.02.15.46^{\br B}$&$\bf \infty 1.04.26.35^{\br O}$&$\bf\infty 6.05.12.34^{\br N}$&$ \bf\infty 5.06.13.24^{\br G}$\\
      &&&&&&&\\
   $ \bf D'$&$  \infty 0.12.36.45^{\br D}$&$ \infty 6.03.14.25^{\br M}$&$\infty 1.02.35.46^{\br G}$&$ \infty 5.04.13.26^{\br B}$&$  \infty 2.01.34.56^{\br K}$&$ \infty 4.05.16.23^{\br H}$&$ \infty 3.06.15.24^{\br O}$\\
     $  E'$&$  \infty 0.16.24.35^{\br E}$&$ \infty 2.04.13.56^{\br M}$&$ \infty 4.02.15.36^{\br N}$&$ \infty 6.01.25.34^{\br B}$&$ \infty 5.03.14.26^{\br K}$&$\infty 3.05.12.46^{\br F}$&$ \infty 1.06.23.45^{\br I}$\\
 &&&&&&&\\
   $ \bf F'$&$  \infty 0.14.26.35^{\br F}$&$ \infty 4.01.23.56^{\br O}$&$\infty 1.04.25.36^{\br C}$&$ \infty 5.03.12.46^{\br E}$&$  \infty 3.05.16.24^{\br K}$&$ \infty 6.02.13.45^{\br L}$&$ \infty 2.06.15.34^{\br H}$\\
     $  G'$&$  \infty 0.12.34.56^{\br G}$&$ \infty 2.01.35.46^{\br D}$&$ \infty 6.05.13.24^{\br C}$&$ \infty 4.03.15.26^{\br I}$&$ \infty 1.02.36.45^{{\br 5}, K}$&$\infty 3.04.16.25^{\br L}$&$ \infty 5.06.14.23^{\br N}$\\
     &&&&&&&\\
   $ \bf H'$&$  \infty 0.13.26.45^{\br H}$&$ \infty 3.01.25.46^{\br N}$&$\infty 6.02.15.34^{\br F}$&$ \infty 4.05.12.36^{\br B}$&$  \infty 5.04.16.23^{\br D}$&$ \infty 2.06.14.35^{\br L}$&$ \infty 1.03.24.56^{\br J}$\\
     $  I'$&$  \infty 0.16.25.34^{\br I}$&$ \infty 6.01.23.45^{\br E}$&$ \infty 5.02.13.46^{\br O}$&$ \infty 1.06.24.35^{\br B}$&$ \infty 3.04.15.26^{\br G}$&$\infty 4.03.12.56^{\br L}$&$ \infty 2.05.14.36^{\br J}$\\
      &&&&&&&\\
   $ \bf J'$&$ \bf \infty 6.04.12.35^{\br A}$&$ \bf \infty 3.01.24.56^{\br H}$&$ \bf\infty 5.02.14.36^{\br I}$&$ \bf \infty 1.03.26.45^{\br N}$&$   \bf\infty 4.06.15.23^{\br K}$&$ \bf \infty 2.05.16.34^{\br O}$&$ \bf \infty 0.13.25.46^{\br J}$\\
     $  K'$&$   \bf\infty 4.06.13.25^{\br A}$&$ \bf \infty 2.01.36.45^{\br G}$&$ \bf \infty 1.02.34.56^{\br D}$&$ \bf \infty 5.03.16.24^{\br F}$&$ \bf \infty 0.12.35.46^{\br K}$&$ \bf\infty 3.05.14.26^{\br E}$&$  \bf\infty 6.04.15.23^{\br J}$\\
     &&&&&&&\\
   $ \bf L'$&$ \bf \infty 5.01.24.36^{\br A}$&$ \bf \infty 1.05.23.46^{\br M}$&$ \bf\infty 6.02.14.35^{\br H}$&$ \bf \infty 4.03.16.25^{\br G}$&$   \bf\infty 3.04.12.56^{\br I}$&$ \bf \infty 0.15.26.34^{\br L}$&$ \bf \infty 2.06.13.45^{\br F}$\\
     $  M'$&$   \bf\infty 1.05.26.34^{\br A}$&$ \bf \infty 0.15.24.36^{\br M}$&$ \bf \infty 4.02.13.56^{\br E}$&$ \bf \infty 6.03.12.45^{\br O}$&$ \bf \infty 2.04.16.35^{\br N}$&$ \bf\infty 5.01.23.46^{\br L}$&$  \bf\infty 3.06.14.25^{\br D}$\\
     &&&&&&&\\
   $ \bf N'$&$  \infty 0.13.24.56^{\br N}$&$ \infty 4.02.16.35^{\br M}$&$\infty 5.06.12.34^{\br C}$&$ \infty 1.03.25.46^{ \br{H} }$&$  \infty 2.04.15.36^{\br E}$&$ \infty 6.05.14.23^{\br G}$&$ \infty 3.01.26.45^{\br J}$\\
     $  O'$&$  \infty 0.14.25.36^{\br O}$&$ \infty 3.06.12.45^{\br M}$&$ \infty 4.01.26.35^{\br C}$&$ \infty 6.03.15.24^{\br D}$&$ \infty 1.04.23.56^{\br F}$&$\infty 2.05.13.46^{\br I}$&$ \infty 5.02.16.34^{\br J}$\\
      &&&&&&&\\
 \end{tabular}
  \caption{\textsl{Array $\AAA_{odd}$ 
  obtained from $\AAA_{even}$ through conjugation by $(0,\infty)$.} }
  \label{tab:Aodd}
  \end{table}

\begin{table}[htbp]
  \centering
  \tabcolsep 5.8pt
  \small
  \begin{tabular}{@{}c|c|ccccccccccccccc@{}}
 1 &&1&1&1&1&1&1&1&1&1&1&1&1&1&1&1\\\hline
 &&&&&&&&&&&&&&&&\\
&&$A'$&$B'$&$C'$&$D'$&$E'$&$F'$&$G'$&$H'$&$I'$&$J'$&$K'$&$L'$&$M'$&$N'$&$O'$\\\hline
&&&&&&&&&&&&&&&&\\
1&A&1&1&1&-1&-1&-1&-1&-1&-1&1&1&1&1&-1&-1\\
1&B&1&1&1&1&1&-1&-1&1&1&-1&-1&-1&-1&-1&-1\\
1&C&1&1&1&-1&-1&1&1&-1&-1&-1&-1&-1&-1&1&1\\
1&D&-1&1&-1&1&-1&-1&1&1&-1&-1&1&-1&1&-1&1\\
1&E&-1&1&-1&-1&1&1&-1&-1&1&-1&1&-1&1&1&-1\\
1&F&-1&-1&1&-1&1&1&-1&1&-1&-1&1&1&-1&-1&1\\
1&G&-1&-1&1&1&-1&-1&1&-1&1&-1&1&1&-1&1&-1\\
1&H&-1&1&-1&1&-1&1&-1&1&-1&1&-1&1&-1&1&-1\\
1&I&-1&1&-1&-1&1&-1&1&-1&1&1&-1&1&-1&-1&1\\
1&J&1&-1&-1&-1&-1&-1&-1&1&1&1&1&-1&-1&1&1\\
1&K&1&-1&-1&1&1&1&1&-1&-1&1&1&-1&-1&-1&-1\\
1&L&1&-1&-1&-1&-1&1&1&1&1&-1&-1&1&1&-1&-1\\
1&M&1&-1&-1&1&1&-1&-1&-1&-1&-1&-1&1&1&1&1\\
1&N&-1&-1&1&-1&1&-1&1&1&-1&1&-1&-1&1&1&-1\\
1&O&-1&-1&1&1&-1&1&-1&-1&1&1&-1&-1&1&-1&1\\
 \end{tabular}
  \caption{\textsl{The action of $(\Z_2)^4$ on} the 
  $45$-dimensional space $V_{45}$. }
  \label{tab:char}
\end{table}
\newpage
\newcommand{\etalchar}[1]{$^{#1}$}
\def\polhk#1{\setbox0=\hbox{#1}{\ooalign{\hidewidth
  \lower1.5ex\hbox{`}\hidewidth\crcr\unhbox0}}} \def\cprime{$^\prime$}
\providecommand{\bysame}{\leavevmode\hbox to3em{\hrulefill}\thinspace}

\end{document}